\pdfoutput=1
\documentclass[aps,pra,twocolumn,superscriptaddress,floatfix,
longbibliography,nofootinbib]{revtex4-2}

\usepackage{amsmath,amssymb,amsthm,bm,mathtools}
\usepackage{graphicx}
\usepackage{xcolor}
\usepackage{microtype}
\usepackage{multirow}
\usepackage{booktabs}
\usepackage{pifont}
\usepackage{tikz}
\usepackage{enumitem}
\usepackage[colorlinks=true,allcolors=blue!55!black]{hyperref}
\hypersetup{linkcolor=red,citecolor=red,urlcolor=red}

\newtheorem{definition}{Definition}
\newtheorem{theorem}{Theorem}
\newtheorem{proposition}{Proposition}
\newtheorem{corollary}{Corollary}
\newtheorem{lemma}{Lemma}
\newtheorem{example}{Example}

\newcommand{\Gcut}{\Gamma_{\mathrm{cut}}}
\newcommand{\Gsep}{\Gamma_{\mathrm{sep}}}
\newcommand{\Tr}{\operatorname{Tr}}
\newcommand{\poly}{\operatorname{poly}}
\newcommand{\supp}{\operatorname{supp}}
\newcommand{\Ehalf}{S_{1/2}}

\definecolor{colCut}{HTML}{0F62FE}          %
\definecolor{colSim}{HTML}{8A3FFC}          %
\definecolor{colTrain}{HTML}{198038}        %
\colorlet{colYes}{colTrain}
\definecolor{colNo}{HTML}{DA1E28}           %
\colorlet{colOpen}{colCut}
\definecolor{colTargetFill}{HTML}{FFE8F2}   %
\definecolor{colTargetLine}{HTML}{EE5396}   %
\colorlet{colClassA}{colSim!12}             %
\colorlet{colClassALine}{colSim!80!black}
\colorlet{colClassB}{colCut!12}             %
\colorlet{colClassBLine}{colCut!80!black}
\colorlet{colClassC}{colCut!12}             %
\colorlet{colClassCLine}{colCut!80!black}
\colorlet{colClassDLeft}{colCut!12}         %
\colorlet{colClassDMid}{colSim!12}          %
\colorlet{colClassDRight}{colTrain!12}      %
\colorlet{colClassDLine}{black!40}
\colorlet{colClassLine}{black!40}           %
\definecolor{colTierOne}{HTML}{FFD9BE}      %
\definecolor{colTierTwo}{HTML}{FF832B}      %
\definecolor{colTierThree}{HTML}{BA4E00}    %
\colorlet{colTierOneText}{black}
\colorlet{colTierTwoText}{black}
\colorlet{colTierThreeText}{white}
\colorlet{colCrossingFill}{colCut!18}
\colorlet{colCrossingLine}{colCut!80!black}
\colorlet{colLocalOpFill}{colCrossingFill}
\colorlet{colLocalOpLine}{colCrossingLine}
\colorlet{colUntouchedFill}{black!15}
\colorlet{colUntouchedLine}{black!45}
\definecolor{colSubA}{HTML}{F4F4F4}         %
\definecolor{colSubB}{HTML}{F4F4F4}         %
\colorlet{colBlockFill}{colTargetFill}
\colorlet{colBlockLine}{colTargetLine}
\colorlet{colInterface}{black}
\colorlet{colCutLimit}{colCut}
\colorlet{colWidthLimit}{black!60}
\colorlet{colBudget}{colCut!12}             %
\definecolor{colCrossFill}{HTML}{FCF4D6}    %
\colorlet{colLocal}{colTrain!15}           %
\definecolor{colBand}{HTML}{F4F4F4}         %
\definecolor{colSimTint}{HTML}{F6F2FF}      %
\definecolor{colCutTint}{HTML}{EDF5FF}      %
\colorlet{FLDCOpen}{colCut}
\colorlet{FLDCTheory}{colSim}
\colorlet{FLDCGood}{colTrain}
\colorlet{FLDCBad}{colNo}
\colorlet{FLDCPassHead}{colLocal}
\colorlet{FLDCFailHead}{colCrossFill}
\definecolor{colNeutral}{HTML}{E0E0E0}
\colorlet{FLDCOutsideHead}{colNeutral}
\colorlet{fldcBand}{colBand}
\colorlet{fldcEasyTint}{colNeutral}
\colorlet{fldcTheoryTint}{colSimTint}
\colorlet{fldcWorkTint}{colCutTint}

\providecommand{\fldcyes}{\textcolor{colYes}{\ding{51}}}
\providecommand{\fldcno}{\textcolor{colNo}{\ding{55}}}
\providecommand{\fldcopen}{\textcolor{colOpen}{\textbf{?}}}
\providecommand{\fldcna}{\textcolor{black!55}{--}}
\providecommand{\tA}{\colorbox{colTierOne}{\small\textbf{\textcolor{colTierOneText}{1}}}}
\providecommand{\tB}{\colorbox{colTierTwo}{\small\textbf{\textcolor{colTierTwoText}{2}}}}
\providecommand{\tC}{\colorbox{colTierThree}{\small\textbf{\textcolor{colTierThreeText}{3}}}}
\providecommand{\kmax}{\kappa_{\max}}
\providecommand{\kstar}{\kappa^{*}}
\providecommand{\auditblock}[1]{%
  \addlinespace[2pt]%
  \multicolumn{11}{@{}l@{}}{%
    \begingroup\setlength{\fboxsep}{2.5pt}%
    \colorbox{fldcBand}{\parbox{\dimexpr\linewidth-2\fboxsep\relax}{\strut\textbf{#1}\strut}}%
    \endgroup}\\\addlinespace[\tabIIblockgap]}
\providecommand{\auditfamily}[1]{%
  & \multicolumn{10}{l@{}}{\textbf{#1}}\tabularnewline}
\providecommand{\auditfocus}[2]{%
  \begingroup\setlength{\fboxsep}{2pt}%
  \colorbox{#1}{\parbox[t]{\dimexpr\linewidth-2\fboxsep\relax}{\raggedright\strut#2\strut}}%
  \endgroup}
\providecommand{\auditlegend}[2]{%
  \begingroup\setlength{\fboxsep}{2pt}\colorbox{#1}{\strut #2}\endgroup}
\providecommand{\auditfocusframed}[3]{%
  \begingroup\setlength{\fboxsep}{1.6pt}\setlength{\fboxrule}{0.4pt}%
  \fcolorbox{#1}{#2}{\parbox[t]{\dimexpr\linewidth-2\fboxsep-2\fboxrule\relax}{\raggedright\strut#3\strut}}%
  \endgroup}
\providecommand{\auditlegendframed}[3]{%
  \begingroup\setlength{\fboxsep}{1.6pt}\setlength{\fboxrule}{0.4pt}\fcolorbox{#1}{#2}{\strut #3}\endgroup}
\providecommand{\classbox}[2]{%
  \begingroup\setlength{\fboxsep}{2pt}%
  \colorbox{#1}{\parbox[t]{\dimexpr\linewidth-2\fboxsep\relax}{\raggedright\strut\textbf{#2}\strut}}%
  \endgroup}

\newcommand{\figMap}{%
\begin{figure*}[t]
\centering
\resizebox{\textwidth}{!}{%
\begin{tikzpicture}[x=1cm,y=1cm,>=stealth,font=\footnotesize,
  hd/.style={font=\footnotesize\bfseries,anchor=south west,inner sep=0pt,align=flush left},
  req/.style={draw=##1!80!black,fill=##1!12,rounded corners=2pt,text width=2.5cm,minimum height=1.3cm,inner sep=3.5pt,align=flush left,anchor=north west},
  crit/.style={draw=##1!80!black,fill=white,rounded corners=2pt,text width=4.2cm,minimum height=1.3cm,inner sep=3.5pt,align=flush left,anchor=north west},
  cls/.style={draw=colClassLine,fill=##1,rounded corners=2pt,text width=4.2cm,inner sep=3.5pt,align=flush left,anchor=north west},
  arr/.style={->,thick,black!55}]
  \coordinate (bus) at (7.9,0);
  \node[hd] at (0,0.12) {Requirements\\[-1pt]\mdseries Sec.~\ref{sec:preliminaries}};
  \node[hd] at (3.15,0.12) {Screening criteria\\[-1pt]\mdseries Table~\ref{tab:properties}};
  \node[hd] at (8.15,0.12) {eighteen circuit families\\[-1pt]\mdseries Tables~\ref{tab:families} and~\ref{tab:classes}};
  \node[hd] at (13.0,0.12) {Single-bridge application\\[-1pt]\mdseries Sec.~\ref{sec:application}};
  \node[req=FLDCOpen] (cut) at (0,0) {\textsc{Cut}\\ affordable knitting,\\ $\Gcut\le\poly(n)$};
  \node[req=FLDCTheory] (sim) at (0,-1.45) {\textsc{Sim}\\ hardness survives\\ the cut};
  \node[req=FLDCGood] (train) at (0,-2.9) {\textsc{Train}\\ resolvable\\ gradients};
  \node[crit=FLDCOpen] (ccut) at (3.15,0) {C1\ \ interface $\kappa$ and norm $\Gcut$\\ C2\ \ seam entropy $\Ehalf$};
  \node[crit=FLDCTheory] (csim) at (3.15,-1.45) {S1\ \ hard branch tasks\\ S2\ \ light cones\\ S3\ \ task type};
  \node[crit=FLDCGood] (ctrain) at (3.15,-2.9) {T1\ \ local depth\\ T2\ \ guarantee $g$\\ T3\ \ gradient variance};
  \draw[arr] (cut.east) -- (ccut.west);
  \draw[arr] (sim.east) -- (csim.west);
  \draw[arr] (train.east) -- (ctrain.west);
  \node[draw=black!40,fill=black!4,rounded corners=2pt,text width=7.35cm,inner sep=3.5pt,align=flush left,anchor=north west] (thm)
    at ([yshift=-0.25cm]train.south west)
    {\textbf{General results}\\
     Prop.~\ref{prop:cost-entropy}\ \ seam entropy sets a floor on $\Gcut$\\
     Thm.~\ref{prop:stitching}\ \ knitting easy branch tasks is easy\\
     Thm.~\ref{prop:regimes}\ \ growing subcircuits, bounded interface};
  \node[cls=colTargetFill,draw=colTargetLine,very thick] (target) at (8.15,0)
    {\textbf{Cheaply cuttable, still hard}\\ row 11c, and 12b if S1 holds};
  \node[cls=colClassA,draw=colClassALine] (easy) at ([yshift=-0.2cm]target.south west) {\textbf{Cuttable, no advantage}\\ rows 7, 8a, and 11b};
  \node[cls=colClassB,draw=colClassBLine] (nocut) at ([yshift=-0.2cm]easy.south west) {\textbf{Advantage, not cuttable}\\ row 11a};
  \node[cls=colClassC,draw=colClassCLine] (costly) at ([yshift=-0.2cm]nocut.south west) {\textbf{Optimally cut, unaffordable}\\ rows 8b and 12a};
  \node[cls=white,draw=colClassDLine,left color=colClassDLeft,right color=colClassDRight,middle color=colClassDMid] (struct) at ([yshift=-0.2cm]costly.south west) {\textbf{Structural failure}\\ rows 1--6};
  \draw[thick,black!55] (target.north west -| bus) -- (struct.south west -| bus);
  \draw[arr] (csim.east) -- (csim.east -| bus);
  \node[draw=colTargetLine,very thick,fill=colTargetFill,rounded corners=2pt,text width=4.25cm,inner sep=3.5pt,align=flush left,anchor=north west] (app) at (13.0,0)
    {\textbf{Two toric-code patches}\\[2pt]
     \setlength{\tabcolsep}{0pt}%
     \begin{tabular}{@{}l@{\hspace{0.8em}}l@{\hspace{0.8em}}l@{}}
     C1 & \fldcyes & \parbox[t]{2.95cm}{\raggedright cost flat in $N$}\\
     C2 & \fldcyes & \parbox[t]{2.95cm}{\raggedright $\Ehalf\le1$ bit}\\
     T1 & \fldcyes & \parbox[t]{2.95cm}{\raggedright bounded local depth}\\
     T3 & \fldcyes & \parbox[t]{2.95cm}{\raggedright no exponential decay at tested sizes}\\[2pt]
     S1 & \fldcopen & \parbox[t]{2.95cm}{\raggedright hardness open}
     \end{tabular}};
  \node[draw=colTargetLine,dashed,thick,fill=white,rounded corners=2pt,text width=4.25cm,inner sep=3.5pt,align=flush left,anchor=north west] (hw)
    at ([yshift=-0.35cm]app.south west)
    {\textbf{Hardware}, IBM Nighthawk\\ contact correlation up to $N=98$\\ knitted energies up to $N=142$};
  \draw[arr] (target.east) -- (target.east -| app.west);
  \draw[arr] (app.south) -- (hw.north);
\end{tikzpicture}}
\caption{How three requirements for distributed quantum advantage sort eighteen circuit families.
The three requirements for distributed quantum advantage (left) become eight screening criteria, and three general results underpin the knitting and hardness criteria (bottom).
Applied to eighteen circuit families, the criteria sort them into classes.
Every class except the target is colored by the requirement it fails, with the structural failures shaded from \textsc{Train} to \textsc{Cut}, the two they fail, where a structural failure means a missing guarantee rather than a proven absence of advantage.
Only the highlighted class combines a cheap cut with hardness inside a subcircuit, and it is the class that distributed finite local-depth circuits (FLDCs) aim for.
The single-bridge application meets the knitting and trainability criteria at the tested sizes, where its branch tasks are still classically tractable, so the hardness criterion S1 remains open beyond them.
The dashed box summarizes the device runs.}
\label{fig:map}
\end{figure*}%
}

\newcommand{\figPartition}{%
\begin{figure*}[t]
\centering
\resizebox{0.98\textwidth}{!}{%
\begin{tikzpicture}[x=0.52cm,y=0.5cm,font=\small,
  lg/.style={draw=colUntouchedLine,fill=colUntouchedFill,rounded corners=1pt,minimum width=0.42cm,minimum height=0.36cm,inner sep=0pt},
  cg/.style={draw=colCrossingLine,fill=colCrossingFill,rounded corners=1pt,inner sep=0pt},
  fb/.style={draw=colLocalOpLine,fill=colLocalOpFill,rounded corners=1pt,minimum width=0.62cm,minimum height=0.40cm,inner sep=1pt},
  blk/.style={draw=colBlockLine,fill=colBlockFill,rounded corners=1pt,inner sep=0pt},
  wire/.style={black!70},
  cut/.style={dashed,thick,FLDCBad}]
\begin{scope}
  \node[font=\bfseries] at (-0.9,1.0) {(a)};
  \foreach \y in {0,-1,-2.6,-3.6} \draw[wire] (0,\y) -- (7.2,\y);
  \fill[colSubA] (-0.1,0.45) rectangle (7.3,-1.45);
  \fill[colSubB] (-0.1,-2.15) rectangle (7.3,-4.05);
  \foreach \y in {0,-1,-2.6,-3.6} \draw[wire] (0,\y) -- (7.2,\y);
  \node[left] at (-0.1,-0.5) {$A$};
  \node[left] at (-0.1,-3.1) {$B$};
  \draw[cut] (-0.3,-1.8) -- (7.5,-1.8) node[right] {$\Pi$};
  \node[lg,minimum height=0.95cm] at (0.8,-0.5) {};
  \node[lg,minimum height=0.95cm] at (0.8,-3.1) {};
  \node[lg] at (3.4,0) {};
  \node[lg] at (3.4,-3.6) {};
  \node[lg,minimum height=0.95cm] at (5.6,-0.5) {};
  \node[lg,minimum height=0.95cm] at (5.6,-3.1) {};
  \draw[cg] (1.9,-0.7) rectangle (2.7,-2.9); \node[fill=colCrossingFill,inner sep=1pt] at (2.3,-1.8) {$\mathcal V_1$};
  \draw[cg] (4.1,-0.7) rectangle (4.9,-2.9); \node[fill=colCrossingFill,inner sep=1pt] at (4.5,-1.8) {$\mathcal V_2$};
\end{scope}
\begin{scope}[shift={(9.9,0)}]
  \node[font=\bfseries] at (-0.9,1.0) {(b)};
  \foreach \y in {-1,-2.6} \draw[wire] (0,\y) -- (1.6,\y);
  \draw[cg] (0.4,-0.7) rectangle (1.2,-2.9); \node[fill=colCrossingFill,inner sep=1pt] at (0.8,-1.8) {$\mathcal V_s$};
  \node at (2.95,-1.8) {$=\ \displaystyle\sum_i a_{s,i}$};
  \begin{scope}[shift={(4.5,0)}]
    \foreach \y in {-1,-2.6} \draw[wire] (0,\y) -- (1.8,\y);
    \draw[cut] (-0.2,-1.8) -- (2.0,-1.8);
    \node[fb] at (0.9,-1) {$\mathcal F^{A}_{s,i}$};
    \node[fb] at (0.9,-2.6) {$\mathcal F^{B}_{s,i}$};
  \end{scope}
\end{scope}
\begin{scope}[shift={(18.7,0)}]
  \node[font=\bfseries] at (-0.9,1.0) {(c)};
  \node at (-0.3,-1.8) {$c_{\bm\alpha}$};
  \begin{scope}[shift={(0.9,0)}]
  \fill[colSubA] (-0.1,0.45) rectangle (7.3,-1.45);
  \fill[colSubB] (-0.1,-2.15) rectangle (7.3,-4.05);
  \foreach \y in {0,-1,-2.6,-3.6} \draw[wire] (0,\y) -- (7.2,\y);
  \node[lg,minimum height=0.95cm] at (0.8,-0.5) {};
  \node[lg,minimum height=0.95cm] at (0.8,-3.1) {};
  \node[lg] at (3.4,0) {};
  \node[lg] at (3.4,-3.6) {};
  \node[lg,minimum height=0.95cm] at (5.6,-0.5) {};
  \node[lg,minimum height=0.95cm] at (5.6,-3.1) {};
  \node[fb,minimum width=0.5cm] at (2.3,-1) {$i_1$};
  \node[fb,minimum width=0.5cm] at (2.3,-2.6) {$i_1$};
  \node[fb,minimum width=0.5cm] at (4.5,-1) {$i_2$};
  \node[fb,minimum width=0.5cm] at (4.5,-2.6) {$i_2$};
  \draw[cut,draw=FLDCBad!50] (-0.3,-1.8) -- (7.5,-1.8);
  \end{scope}
\end{scope}
\begin{scope}[shift={(29.5,0)}]
  \node[font=\bfseries] at (-0.9,1.0) {(d)};
  \foreach \y in {0.2,-0.8,-1.8,-2.8,-3.8} \draw[wire] (0,\y) -- (6.0,\y);
  \foreach \kk/\xx/\yy in {1/0.8/0.2,2/2.1/-0.8,3/3.4/-1.8,4/4.7/-2.8}
    \node[blk,minimum width=0.6cm,minimum height=0.78cm] at (\xx,\yy-0.5) {\scriptsize$U_{\kk}$};
  \node[font=\scriptsize,anchor=south] at (6.6,0.55) {blocks};
  \foreach \cc/\yy in {1/0.2,2/-0.8,2/-1.8,2/-2.8,1/-3.8} \node[font=\scriptsize] at (6.6,\yy) {\cc};
\end{scope}
\end{tikzpicture}
}
\caption{Circuit cutting turns crossing gates into sums of local operations, and finite local depth bounds the blocks acting on each qubit. (a) A partition $\Pi$ assigns the qubits of a circuit to two subcircuits $A$ and $B$. Gates acting within one subcircuit (gray) are untouched, and the gates $\mathcal V_s$ that cross $\Pi$ (blue) are removed. (b) Each crossing gate is written as a signed sum of products of operations local to $A$ and to $B$, with real coefficients $a_{s,i}$. (c) A branch $\bm\alpha=(i_1,i_2)$ chooses one term for every crossing gate (blue). The two subcircuits are then independent, and the original expectation value is the sum over branches weighted by $c_{\bm\alpha}=\prod_s a_{s,i_s}$. (d) A finite local-depth circuit in the sense of Definition~\ref{def:fldc}, here one sequential sweep of blocks $U_k$ on neighboring spins. Its depth grows with the number of spins, but every block acts on $\beta=2$ spins and every spin is acted on by at most $\chi=2$ blocks, as the counts on the right show.}
\label{fig:partition}
\end{figure*}%
}

\newcommand{\tableProperties}{%
\begin{table*}[t]
\centering
\small
\setlength{\tabcolsep}{5pt}
\renewcommand{\arraystretch}{1.25}
\begin{tabular}{@{}c l p{0.26\textwidth} c p{0.42\textwidth}@{}}
\toprule
& \textbf{Criterion} & \textbf{Quantity or task} & \textbf{Tier} & \textbf{What a pass or a failure shows}\\
\midrule
\multicolumn{5}{@{}l}{\textbf{Is the knitting affordable?} (\textsc{Cut})}\\
\midrule
C1 & interface capacity
 & $\kappa(\Pi)$, bounds on $\kstar(n)$, and $\Gcut$ against $\Gamma_{\max}$
 & \tA
 & A partition within budget establishes feasibility. A large heuristic cut does not show that every partition is unaffordable.\\
C2 & seam entropy
 & $\Ehalf(\rho_A)$ and the cost floor $\Gsep$
 & \tB
 & A large floor excludes exact knitting at the assessed partition. A small floor does not certify affordability.\\
\midrule
\multicolumn{5}{@{}l}{\textbf{Does classical difficulty survive the cut?} (\textsc{Sim})}\\
\midrule
S1 & where hardness lives
 & Branch tasks of each subcircuit
 & \tC
 & Efficient branch tasks together with efficient knitting exclude an advantage for the assessed task. Hardness must survive in the branch tasks.\\
S2 & light cone
 & Backward causal cones of the required observables
 & \tA
 & Bounded cones permit direct simulation. Growing cones do not establish hardness.\\
S3 & task type
 & Task and accuracy covered by the supporting results
 & \tC
 & Results must cover local expectation values. Hardness of sampling alone is insufficient.\\
\midrule
\multicolumn{5}{@{}l}{\textbf{Are gradients resolvable?} (\textsc{Train}, variational families only)}\\
\midrule
T1 & local structure
 & $\beta$ and $\chi$ for a stated block decomposition
 & \tA
 & Bounded values identify an FLDC. A failure is not a general untrainability result.\\
T2 & guarantee constant
 & Applicability, scaling, and magnitude of $g$
 & \tA
 & A usable bound supports gradient resolution. A small or inapplicable bound leaves the variance undecided.\\
T3 & gradient variance
 & $\operatorname{Var}_{\boldsymbol\theta}[\partial_\mu C]$ against $n$
 & \tB
 & Numerical scaling with uncertainty for a stated ensemble. Finite-size evidence is not an asymptotic guarantee.\\
\bottomrule
\end{tabular}
\caption{The screening criteria. The tier records the kind of evidence a criterion needs. \tA\ structural checks computed from the circuit generator. \tB\ numerical diagnostics at sizes reachable by classical simulation. \tC\ theoretical obligations met by a proof or a citation. Outcomes refer to the specified task, partition, parameter regime, and resource budget. For non-variational families the \textsc{Train} criteria do not apply.}
\label{tab:properties}
\end{table*}%
}

\newcommand{\figCrossing}{%
\begin{figure}[tb]
\centering
\resizebox{\linewidth}{!}{%
\begin{tikzpicture}[x=0.056cm,y=0.165cm,>=stealth,font=\small]
  \fill[colBudget] (0,0) rectangle (100,5.9);
  \draw[->] (0,0) -- (168,0) node[below=2pt] {$n$};
  \draw[->] (0,0) -- (0,30) node[above] {gates severed};
  \draw[dashed,thick,colCutLimit] (0,5.9) -- (168,5.9);
  \node[above left,colCutLimit] at (166,5.9) {$\kmax$};
  \draw[dashed,thick,colWidthLimit] (100,0) -- (100,28);
  \node[above,colWidthLimit] at (100,28) {$n_{\mathrm{width}}$};
  \draw[colWidthLimit] (100,0) -- (100,-0.8) node[below] {$Dw$};
  \draw[very thick,colInterface,domain=0.4:160,samples=90] plot (\x,{2*sqrt(\x)});
  \node[colInterface,anchor=east] at (160,27.5) {$\kstar=\Theta(\sqrt n)$};
  \draw[very thick,colInterface] (0,2) -- (155,2);
  \node[colInterface,above right] at (12,2.3) {$\kstar=O(1)$};
  \fill[colCutLimit] (8.7,5.9) circle (2.4pt);
  \draw[colCutLimit,thin] (8.7,5.9) -- (14,19);
  \node[colCutLimit,above right] at (13,19) {cut limited};
  \fill[colWidthLimit] (100,2) circle (2.4pt);
  \draw[colWidthLimit,thin] (100,2) -- (110,10);
  \node[colWidthLimit,above right] at (109,10) {width limited};
\end{tikzpicture}}
\caption{A bounded interface leaves the affordable region only through the device width. The budget fixes a threshold $\kmax$ on the number of severed gates, independent of the circuit, while the circuit family fixes an interface curve $\kstar(n)$, independent of the budget. The shaded region is both affordable and within device width. A bounded interface leaves it at $n_{\mathrm{width}}$, so hardware sets the reach. A growing interface leaves it through the budget threshold first unless the device width is reached earlier. The curves show minimum interfaces, whereas a heuristic search supplies upper bounds, and known gate angles should be assessed with the actual norm.}
\label{fig:crossing}
\end{figure}%
}

\def\tabIIfontsize{\fontsize{8.5pt}{9.5pt}\selectfont}
\def\tabIIstretch{1.0}           %
\def\tabIIblockgap{1pt}          %
\def\tabIIfamgap{2.5pt}            %
\def\tabIIsubgap{3pt}            %
\def\tabIIcolsep{1pt}           %
\def\tabIIwNo{1.0em}             %
\def\tabIInamewidth{0.2\textwidth} %
\def\tabIIwTone{5.8em}   %
\def\tabIIwTtwo{4.4em}   %
\def\tabIIwStwo{2em}     %
\def\tabIIwCone{3.1em}   %
\def\tabIIwTthree{2.5em} %
\def\tabIIwCtwo{3.8em}   %
\def\tabIIwSone{3.0em}   %
\def\tabIIwSthree{3.0em} %
\def\tabIIverdictwidth{\dimexpr\textwidth-\tabIIwNo-\tabIInamewidth
  -\tabIIwTone-\tabIIwTtwo-\tabIIwStwo-\tabIIwCone-\tabIIwTthree-\tabIIwCtwo
  -\tabIIwSone-\tabIIwSthree-20\tabcolsep\relax} %
\def\tabIIjustify{\rightskip=0pt\relax\parfillskip=0pt plus 1fil\relax}
\def\tabIIverdictalign{\tabIIjustify}
\def\famend{\tabularnewline\addlinespace[\tabIIfamgap]}
\def\subend{\tabularnewline\addlinespace[\tabIIsubgap]}

\newcommand{\tableCircuits}{%
\begin{table*}[tp]
\centering
\caption{\label{tab:families}%
Circuit families and regimes assessed against the criteria of Table~\ref{tab:properties}.
Tier-2 entries are evidence at the sizes tested, not proof.
Daggered entries are subject to limitations of the specific task considered, and are discussed in Sec.~\ref{sec:families}.
\fldcyes\ passes the stated criterion, \fldcno\ fails it, \fldcopen\ is unresolved, and \fldcna\ is not applicable.
For non-variational families, $(\beta,\chi)$ are reported as structural descriptors rather than as a trainability assessment.
Highlights mark promising cases:
\auditlegend{colTargetFill}{bounded-interface theoretical candidate},
\auditlegendframed{colTargetLine}{colTargetFill}{application studied here}.
Tier numbers as in Table~\ref{tab:properties}.}
\tabIIfontsize
\setlength{\tabcolsep}{\tabIIcolsep}
\renewcommand{\arraystretch}{\tabIIstretch}
\setlength{\parindent}{0pt}
\setlength{\parskip}{0pt}
\resizebox{\textwidth}{!}{%
\begin{tabular}{@{}
  l p{\tabIInamewidth}
  c c c c
  c c
  c c
  p{\tabIIverdictwidth}@{}}
\toprule
\multirow{2}{*}{\makebox[\tabIIwNo][l]{No.}}
& \multirow{2}{*}{Circuit family/regime}
& \multicolumn{4}{c}{\tA\ \textbf{structural}}
& \multicolumn{2}{c}{\tB\ \textbf{numerical}}
& \multicolumn{2}{c}{\tC\ \textbf{theoretical}}
& \multicolumn{1}{c@{}}{\multirow{3}{*}{Verdict}}\\
\cmidrule(lr){3-6}\cmidrule(lr){7-8}\cmidrule(lr){9-10}
& & \makebox[\tabIIwTone]{$(\beta,\chi)$} & \makebox[\tabIIwTtwo]{$g$} & \makebox[\tabIIwStwo]{cone} & \makebox[\tabIIwCone]{$\kstar(n)$}
  & \makebox[\tabIIwTthree]{$\operatorname{Var}$} & \makebox[\tabIIwCtwo]{$\Ehalf$} & \makebox[\tabIIwSone]{hard} & \makebox[\tabIIwSthree]{task} & \\
& & \tiny T1 & \tiny T2 & \tiny S2 & \tiny C1
  & \tiny T3 & \tiny C2 & \tiny S1 & \tiny S3 & \\
\midrule
\auditblock{A. Conventional variational circuits and scrambling benchmarks}
\auditfamily{Hubbard variational circuits:}
1a & \raggedright (a) EHV ansatz~\cite{Stanisic2022Hubbard,DerbyKlassen2021,Cade2020Hubbard}
  & $(2,\Theta(n))$ & ${\le}10^{-43}$ & \fldcyes & \fldcopen
  & \fldcopen & \fldcopen & \fldcno & \fldcyes
  & \tabIIverdictalign  No FLDC guarantee. Preparation and seam both grow.\subend
1b & \raggedright (b) Hamiltonian variational ansatz on a Hubbard ladder~\cite{Wecker2015HVA,Cade2020Hubbard}
  & $(2,\omega(1))$ & ${\le}10^{-168}$ & \fldcyes & $\omega(1)$
  & \fldcopen & $O(1)$ & \fldcno & \fldcyes
  & \tabIIverdictalign  Low seam entropy, but classically easy: DMRG solves the ladder.\textsuperscript{$\dagger$}\famend
2 & \raggedright $J_1$--$J_2$ Heisenberg ground-state ansatz~\cite{Feulner2022VQE}
  & $(2,\Theta(\sqrt n))$ & ${\le}10^{-86}$ & \fldcyes & $\Theta(\sqrt n)$
  & \fldcopen & $\Theta(\sqrt n)$ & \fldcopen & \fldcyes
  & \tabIIverdictalign No trainability guarantee, and not cuttable.\famend
3 & \raggedright Hardware-efficient ansatz for molecular VQE~\cite{Kandala2017HEA,Cerezo2025Simulability,McClean2018BarrenPlateaus,Leone2024HEA}
  & $(2,\Theta(d))$ & ${\le}10^{-28}$ & \fldcyes & $\omega(1)$
  & \fldcno & \fldcopen & \fldcno & \fldcyes
  & \tabIIverdictalign Gradient suppression in the assessed global-cost regime. \famend
4 & \raggedright ADAPT-VQE with a generalized operator pool~\cite{Grimsley2019ADAPT,Tang2021QubitADAPT,Grimsley2023ADAPTBP}
  & $(\Theta(n),\omega(1))$ & ${\le}10^{-2167}$ & \fldcyes & $\omega(1)$
  & \fldcopen & \fldcopen & \fldcno & \fldcyes
  & \tabIIverdictalign No FLDC guarantee. Tested molecules are exactly solvable.  \famend
5 & \raggedright Routed or dense QAOA~\cite{Harrigan2021QAOA,Farhi2014QAOA,Weidenfeller2022Scaling,Sack2024LargeScaleQAOA,Blekos2024QAOAReview}
  & $(2,\omega(1))$ & $4^{-\omega(1)}$ & \fldcopen & $\Theta(n)$
  & \fldcopen & \fldcopen & \fldcopen & \fldcyes
  & \tabIIverdictalign Growing local depth; branch complexity unresolved. \famend
\auditfamily{Circuits hard only to sample:}
6a & \raggedright (a) deep or scrambling circuits~\cite{Cerezo2025Simulability,Arute2019RCS, Boixo2018Supremacy,Morvan2024RCS,Angrisani2024Classical}
  & $(2,\omega(1))$ & $4^{-\omega(1)}$ & \fldcyes & $\Theta(\sqrt n)$
  & \fldcno & $\Theta(n)$ & \fldcyes & \fldcno
  & \tabIIverdictalign  Hard only to sample, not for expectation values. Volume-law seam, not cuttable.\textsuperscript{$\dagger$}\subend
6b & \raggedright\leavevmode(b) fermionic brick-wall with $k$ particles~\cite{Kerenidis2026Scalable}
  & $(2,\Theta(n))$ & \fldcna & \fldcyes & $\Theta(n)$
  & \fldcyes & \fldcopen & \fldcyes & \fldcno
  & \tabIIverdictalign\leavevmode  Hard only to sample, and the interface grows.\textsuperscript{$\dagger$}\famend

\auditblock{B. Sequential preparation and calibration examples}
7 & \raggedright Sequential MPS preparation~\cite{Schon2005SequentialMPS,Iaconis2024MPS,Barratt2021Holographic,Malz2024LogDepth,Smith2024ConstantDepth}
  & $(\log D,O(1))$ & $D^{-O(1)}$ & \fldcyes & $O(1)$
  & \fldcyes & $O(1)$ & \fldcno & \fldcyes
  & \tabIIverdictalign Cuttable, no advantage.\famend
\auditfamily{Clifford toric code:}
8a & \raggedright (a) bounded contact~\cite{Kitaev2003Anyons,Dennis2002Memory,Liu2022StringNet}
  & $(2,{\le}6)$ & \fldcna & \fldcyes & $O(1)$
  & \fldcna & $O(1)$ & \fldcno & \fldcyes
  & \tabIIverdictalign Cuttable, no advantage.\subend
8b & \raggedright (b) Extensive seam~\cite{Satzinger2021ToricCode,Kitaev2003Anyons,Andersen2023NonAbelian}
  & $(2,{\le}6)$ & \fldcna & \fldcyes & $\Theta(L)$
  & \fldcna & $\Theta(L)$ & \fldcno & \fldcyes
  & \tabIIverdictalign Clifford, so classically easy. Calibrates seam cost $2^L-1$.\famend
9 & \raggedright Sequential $S_3$ quantum double (non-Abelian)~\cite{Lo2026S3,Tantivasadakarn2023Shortest,Iqbal2024NonAbelian}
  & $(6,O(1))$ & \fldcna & \fldcyes & $\Theta(L)$
  & \fldcna & $\Theta(L)$ & \fldcno & \fldcyes
  & \tabIIverdictalign Classically easy: solvable group, braiding not universal. Seam cost $6^{L-1}$.\textsuperscript{$\dagger$}\famend
10 & \raggedright Fibonacci string net (universal braiding)~\cite{Xu2024Fibonacci,Liu2022StringNet}
  & $(5,O(1))$ & \fldcna & \fldcyes & $\Theta(L)$
  & \fldcna & $0.92L$ & \fldcno & \fldcyes
  & \tabIIverdictalign  Classically easy state (braids of length ${\le}3$). Seam cost ${\approx}1.9^L$.\textsuperscript{$\dagger$}\famend
\auditblock{C. Two-dimensional FLDCs with contrasting interfaces and simulability}
\auditfamily{Sequential isoTNS:}
11a & \raggedright (a) transparent seam~\cite{ZaletelPollmann2020isoTNS,MalzTrivedi2025,Soejima2020isoTNS,Wei2022SequentialPEPS}
  & $(\log D,O(1))$ & $D^{-O(1)}$ & \fldcyes & $\Theta(L)$
  & \fldcyes & $\Theta(\sqrt n)$ & \fldcyes & \fldcyes
  & \tabIIverdictalign Hard, but not cuttable: seam entropy grows as $\sqrt n$.\subend
11b & \raggedright (b) Strongly injective ($\eta\ge0.41$), bounded interface~\cite{MalzTrivedi2025,Haghshenas2022TNCircuits,Slattery2021isoTNSCircuits}
  & $(\log D,O(1))$ & $D^{-O(1)}$ & \fldcyes & $O(1)$
  & \fldcopen & $O(1)$ & \fldcno & \fldcyes
  & \tabIIverdictalign Cuttable, no advantage. Noise threatens T3.\subend
\textbf{11c} & \auditfocus{colTargetFill}{\textbf{(c) Weakly injective, bounded interface}~\cite{MalzTrivedi2025,ZaletelPollmann2020isoTNS,Wei2022SequentialPEPS}}
  & $(\log D,O(1))$ & $D^{-O(1)}$ & \fldcyes & $O(1)$
  & \fldcopen & ${\le}\kappa$ & \fldcyes & \fldcyes
  & \tabIIverdictalign \textbf{Candidate.} Worst-case hardness; T3 unresolved. \famend
\auditfamily{Variational toric code:}
12a & \raggedright (a) transparent seam~\cite{ZhangLiuZhang2024FLDC,Satzinger2021ToricCode,Trebst2007Breakdown,Vidal2009LowEnergy,Tupitsyn2010Multicritical,Dusuel2011Robustness}
  & $(2,{\sim}6)$ & ${\sim}10^{-29}$ & \fldcyes & $\Theta(L)$
  & \fldcyes & $L-1$ & \fldcopen & \fldcyes
  & \tabIIverdictalign One code cut along its boundary, unlike 12b. Cost $2^L-1$, so $L\le6$.\subend
\textbf{12b} & \auditfocusframed{colTargetLine}{colTargetFill}{\textbf{(b) Field-perturbed, single bridge (this work)}}
    & $(\mathbf{2},\mathbf{O(1)})$ & \fldcna & \fldcyes & $\mathbf{1}$
  & \fldcyes & ${\le}1$ & \fldcopen & \fldcyes
  & \tabIIverdictalign \textbf{Cheap cut,} but \textbf{S1 open.}\famend
\bottomrule
\end{tabular}%
}%
\end{table*}%
}

\newcommand{\tableClasses}{%
\begin{table*}[t]
\centering
\small
\setlength{\tabcolsep}{5pt}
\renewcommand{\arraystretch}{1.3}
\begin{tabular}{@{}p{0.17\textwidth}p{0.15\textwidth}p{0.49\textwidth}p{0.11\textwidth}@{}}
\toprule
\textbf{Class} & \textbf{Signature} & \textbf{Meaning} & \textbf{Rows}\\
\midrule
\classbox{colClassA}{Cuttable, no advantage}
 & C1 \fldcyes, S1 \fldcno
 & A cheap seam with easy subcircuits. By Theorem~\ref{prop:stitching} the knitted computation is classically easy, so the verdict is structural rather than a suspicion. We expect most bounded-interface proposals to fall here.
 & 7, 8a, 11b\\
\classbox{colClassB}{Advantage, not cuttable}
 & S1 \fldcyes, C2 \fldcno
 & Seam entropy grows with the boundary, so $\Gcut=2^{\Omega(\sqrt n)}$ in two dimensions. By Proposition~\ref{prop:cost-entropy} no exact knitting protocol does better, so the verdict does not reflect a poor decomposition.
 & 11a\\
\classbox{colClassC}{Optimally cut, unaffordable}
 & C2 \fldcno, $\Gcut=\Gsep$
 & The cost is known exactly but exceeds the budget. The transparent toric-code seam costs $2^L-1$, the same as the optimal joint cut of $L-1$ wires with classical communication (App.~\ref{app:cost}). These families calibrate the numerics.
 & 8b, 12a\\
\classbox{colTargetFill}{Cheaply cuttable, still hard}
 & C1, T1, S1 \fldcyes
 & A bounded interface, bounded local depth, and hardness inside a subcircuit hold at once. The three conditions are independent (Sec.~\ref{subsec:independence}). Corollary~\ref{cor:coexistence} shows that the class is not empty.
 & 11c, and 12b if S1 holds\\
\bottomrule
\end{tabular}
\caption{The four classes of screening outcomes. The signature lists the criteria of Table~\ref{tab:properties} that determine each class, and the last column gives examples from Table~\ref{tab:families}. The target of a distributed FLDC is the last class.}
\label{tab:classes}
\end{table*}%
}

\newcommand{\tableKmax}{%
\begin{table}[h]
\centering
\small
\setlength{\tabcolsep}{8pt}
\renewcommand{\arraystretch}{1.2}
\begin{tabular}{@{}lccc@{}}
\toprule
Shot budget $M$ & $10^{6}$ & $10^{9}$ & $10^{12}$\\
\midrule
$\Gamma_{\max}$ & $3.7$ & $116$ & $3682$\\
$\kappa_{\max}^{\mathrm{joint}}$, one time slice & $1$ & $5$ & $10$\\
$\kappa_{\max}^{\mathrm{ind}}$, independent gates & $1$ & $4$ & $7$\\
\bottomrule
\end{tabular}
\caption{Largest affordable number of severed two-qubit rotations from Eq.~\eqref{eq:kmax}, for additive error $\varepsilon=0.01$ and failure probability $\delta=0.05$, using the worst-case norm of each construction as a sufficient norm-budget threshold}.
\label{tab:kmax}
\end{table}%
}

\newcommand{\figExperiment}{%
\begin{figure*}[t]
\centering
\resizebox{\ifdim\width>\textwidth\textwidth\else\width\fi}{!}{%
\begin{tikzpicture}[font=\small,
  box/.style={draw=FLDCOpen!80!black,fill=FLDCOpen!10,rounded corners=1.5pt,minimum height=0.55cm,minimum width=0.8cm,inner sep=2pt},
  pbox/.style={draw=FLDCTheory!90!black,fill=FLDCTheory!18,rounded corners=1.5pt,minimum height=0.55cm,minimum width=0.8cm,inner sep=2pt},
  gbox/.style={draw=black!60,fill=black!5,rounded corners=1.5pt,minimum height=0.55cm,minimum width=0.8cm,inner sep=2pt},
  mbox/.style={draw=black!70,fill=white,minimum height=0.55cm,inner sep=2pt},
  op/.style={draw=FLDCOpen!80!black,fill=white,minimum width=0.72cm,minimum height=0.46cm,inner sep=1pt},
  mz/.style={draw=FLDCTheory!90!black,fill=FLDCTheory!15,minimum width=0.72cm,minimum height=0.46cm,inner sep=1pt},
  paren/.style={black!75,line width=0.6pt},
  lab/.style={font=\bfseries,anchor=south west,inner sep=1pt}]
  \begin{scope}[x=0.72cm,y=0.72cm,local bounding box=pa]
    \begin{scope}
      \fill[black!5,rounded corners=4pt] (-0.35,-0.35) rectangle (3.35,3.35);
      \draw[black!50,thin] (0,0) grid[step=1] (3,3);
      \fill[FLDCTheory!35] (0.4,1.85) rectangle (1.6,2.15);
      \fill[FLDCTheory!35] (0.85,1.4) rectangle (1.15,2.6);
      \foreach \x in {0,1,2} \foreach \y in {0,1,2,3} \fill[FLDCOpen] (\x+0.5,\y) circle (1.9pt);
      \foreach \x in {0,1,2,3} \foreach \y in {0,1,2} \fill[FLDCOpen] (\x,\y+0.5) circle (1.9pt);
      \foreach \p in {(0.5,2),(1.5,2),(1,1.5),(1,2.5)} \draw[FLDCTheory,thick] \p circle (3.2pt);
      \node[FLDCTheory,right] at (1.15,2.45) {$S_v$};
      \node at (1.5,-0.9) {$H_A(h)$};
    \end{scope}
    \begin{scope}[shift={(5.2,0)}]
      \fill[black!5,rounded corners=4pt] (-0.35,-0.35) rectangle (3.35,3.35);
      \draw[black!50,thin] (0,0) grid[step=1] (3,3);
      \foreach \x in {0,1,2} \foreach \y in {0,1,2,3} \fill[FLDCOpen] (\x+0.5,\y) circle (1.9pt);
      \foreach \x in {0,1,2,3} \foreach \y in {0,1,2} \fill[FLDCOpen] (\x,\y+0.5) circle (1.9pt);
      \draw[FLDCOpen,very thick] (1.12,0.12) rectangle (1.88,0.88);
      \node[FLDCOpen] at (1.5,0.5) {$P_f$};
      \node at (1.5,-0.9) {$H_B(h)$};
    \end{scope}
    \draw[FLDCGood,thick,dashed] (3,1.5) -- (5.2,1.5);
    \draw[FLDCGood,thick,fill=white] (3,1.5) circle (3.2pt);
    \draw[FLDCGood,thick,fill=white] (5.2,1.5) circle (3.2pt);
    \node[FLDCGood,above] at (4.1,1.6) {$-tZ_aZ_b$};
    \node[below] at (3.2,1.4) {$a$};
    \node[below] at (5.0,1.45) {$b$};
  \end{scope}
  \begin{scope}[shift={(8.679,1.824)},x=0.62cm,y=0.62cm,local bounding box=pb]
    \foreach \y/\p in {0/A,-2.4/B} {
      \draw (0.4,\y) -- (12.4,\y);
      \draw (0.75,\y-0.15) -- (0.95,\y+0.15);
      \node[left] at (0.4,\y) {$|0\rangle_{\p}$};
      \node[gbox] at (2.0,\y) {$E_{\p}$};
      \node[box] (vpre\p) at (4.0,\y) {$V_{\p}^{\mathrm{pre}}$};
      \node[mbox] at (11.6,\y) {$M_{X}/M_{Z}$};
    }
    \node[box] at (9.4,-2.4) {$V_{B}^{\mathrm{post}}$};
    \draw[draw=FLDCOpen!80!black,fill=FLDCOpen!10,rounded corners=1.5pt] (6.25,-0.55) rectangle (10.25,0.55);
    \node[pbox] at (7.5,0) {$R_{S_v}(\alpha)$};
    \node[inner sep=2pt] at (9.4,0) {$V_{A}^{\mathrm{post}}$};
    \draw[FLDCGood,thick] (5.8,0) -- (5.8,-2.4);
    \fill[FLDCGood] (5.8,0) circle (2.6pt);
    \fill[FLDCGood] (5.8,-2.4) circle (2.6pt);
    \node[draw=FLDCGood,fill=colLocal,rounded corners=1.5pt,inner sep=2pt] at (5.8,-1.2) {$R_{ZZ}(\phi)$};
    \node[FLDCGood,above] at (5.8,0.1) {$a$};
    \node[FLDCGood,below] at (5.8,-2.5) {$b$};
  \end{scope}
  \draw[black!55,dashed,rounded corners=3pt] ([xshift=-3pt,yshift=-3pt]vpreB.south west) rectangle ([xshift=3pt,yshift=3pt]vpreB.north east);
  \begin{scope}[shift={(10.26,-1.40)},x=0.9cm,y=0.9cm,local bounding box=pin]
    \fill[black!4] (0,0) rectangle (2,1);
    \draw[black!55] (0,0) rectangle (2,1);
    \draw[black!55] (1,0) -- (1,1);
    \foreach \dx in {0,1} {
      \begin{scope}[shift={(\dx,0)}]
        \draw[FLDCTheory,line width=1pt,shorten <=2.8pt,shorten >=2.8pt] (0,0.5) .. controls (0.1,0.1) .. (0.5,0);
        \draw[FLDCTheory,line width=1pt,shorten <=2.8pt,shorten >=2.8pt] (0.5,0) .. controls (0.9,0.1) .. (1,0.5);
        \draw[FLDCTheory,line width=1pt,shorten <=2.8pt,shorten >=2.8pt] (1,0.5) .. controls (0.9,0.9) .. (0.5,1);
        \draw[FLDCTheory,line width=1pt,shorten <=2.8pt,shorten >=2.8pt] (0.5,1) .. controls (0.1,0.9) .. (0,0.5);
      \end{scope}}
    \foreach \p in {(0.5,0),(1.5,0),(0.5,1),(1.5,1),(0,0.5),(1,0.5),(2,0.5)} \fill[FLDCOpen] \p circle (1.9pt);
    \draw[FLDCBad,thick] (1,0.5) circle (3.8pt);
    \foreach \dx in {0,1} {
      \begin{scope}[shift={(\dx,0)},every node/.style={font=\tiny,text=FLDCTheory!80!black,inner sep=0pt}]
        \node at (0.30,0.28) {1}; \node at (0.70,0.28) {2};
        \node at (0.70,0.72) {3}; \node at (0.30,0.72) {4};
      \end{scope}}
    \draw[->,black!55,thin] (0.62,-0.17) -- (1.38,-0.17);
    \draw[FLDCBad,thin] (1,0.5) -- (2.35,-0.25);
    \node[right,font=\scriptsize,FLDCBad,inner sep=1pt] at (2.35,-0.25) {$R_Y(\vartheta_1)\,R_Z(\vartheta_3)$};
    \draw[FLDCTheory,thin] (0.2,0.84) -- (-0.3,1.05);
    \node[left,font=\scriptsize,FLDCTheory,inner sep=1pt] at (-0.3,1.05) {$R_{ZZ}(\vartheta_5)$};
  \end{scope}
  \draw[black!45,densely dotted] ([xshift=-3pt,yshift=-3pt]vpreB.south west) -- (10.26,-0.50);
  \draw[black!45,densely dotted] ([xshift=3pt,yshift=-3pt]vpreB.south east) -- (12.06,-0.50);
  \begin{scope}[shift={(0.124,-2.95)},x=0.62cm,y=0.62cm,local bounding box=pc]
    \draw (0,0.7) -- (2.6,0.7);
    \draw (0,-0.7) -- (2.6,-0.7);
    \node[left,font=\scriptsize] at (0,0.7) {$a$};
    \node[left,font=\scriptsize] at (0,-0.7) {$b$};
    \draw[FLDCGood,thick] (1.3,0.7) -- (1.3,-0.7);
    \fill[FLDCGood] (1.3,0.7) circle (2.2pt);
    \fill[FLDCGood] (1.3,-0.7) circle (2.2pt);
    \node[draw=FLDCGood,fill=colLocal,rounded corners=1.5pt,inner sep=1.5pt,font=\footnotesize] at (1.3,0) {$R_{ZZ}(\phi)$};
    \node[anchor=mid] at (3.15,0) {$=$};
    \begin{scope}[shift={(3.55,0)}]
      \node[anchor=mid] at (0.6,0) {$c^2$};
      \draw[paren] (1.2,1.15) .. controls (0.95,0.6) and (0.95,-0.6) .. (1.2,-1.15);
      \draw (1.3,0.7) -- (3.1,0.7);
      \draw (1.3,-0.7) -- (3.1,-0.7);
      \draw[paren] (3.2,1.15) .. controls (3.45,0.6) and (3.45,-0.6) .. (3.2,-1.15);
    \end{scope}
    \foreach \k/\sg/\cf/\ta/\tb/\m in {1/+/s^2/Z/Z/0, 2/+/d/M_Z/R_+/1, 3/-/d/M_Z/R_-/1, 4/+/d/R_+/M_Z/2, 5/-/d/R_-/M_Z/2}{
      \begin{scope}[shift={(3.55+3.9*\k,0)}]
        \node[anchor=mid] at (0,0) {$\sg$};
        \node[anchor=mid] at (0.6,0) {$\cf$};
        \draw[paren] (1.2,1.15) .. controls (0.95,0.6) and (0.95,-0.6) .. (1.2,-1.15);
        \draw (1.3,0.7) -- (3.1,0.7);
        \draw (1.3,-0.7) -- (3.1,-0.7);
        \ifnum\m=1 \node[mz] (m) at (2.2,0.7) {$\ta$}; \draw[->,FLDCTheory] (m.north east) -- ++(0.25,0.22);
        \else \node[op] at (2.2,0.7) {$\ta$}; \fi
        \ifnum\m=2 \node[mz] (m) at (2.2,-0.7) {$\tb$}; \draw[->,FLDCTheory] (m.north east) -- ++(0.25,0.22);
        \else \node[op] at (2.2,-0.7) {$\tb$}; \fi
        \draw[paren] (3.2,1.15) .. controls (3.45,0.6) and (3.45,-0.6) .. (3.2,-1.15);
      \end{scope}}
  \end{scope}
  \node[lab] at (pa.north west) {(a)};
  \node[lab] at (pb.west |- pa.north) {(b)};
  \node[lab] at (pc.north west) {(c)};
\end{tikzpicture}}
\caption{Two toric-code patches joined by one variational gate, the sweeps that prepare their states, and the six branches that knit them.
(a)~Two open toric-code patches with spins on the edges (dots).
A star $S_v$ and a plaquette $P_f$ are highlighted, and the only cross-patch term of the Hamiltonian couples the facing exterior spins $a$ and $b$.
(b)~Circuit order.
Fixed encoders $E_p$ prepare the zero-field ground states, the prefix sweeps $V_p^{\mathrm{pre}}$ act before the bridge $R_{ZZ}(\phi)$, and the anchor rotation $R_{S_v}(\alpha)$ and the suffix sweeps act after it.
The inset shows one sweep on two neighboring faces.
In each face it applies a pair block to each of the four pairs of consecutive edges (arcs) in the order of the numbers, and it visits the faces row by row (arrow), so a spin on an edge shared by two faces (ring) enters four pair blocks per sweep.
A pair block rotates both spins by $R_Y$ and $R_Z$ and couples them by $R_{ZZ}$, Eq.~\eqref{eq:ep-pair}.
The anchor rotation and the suffix sweeps of patch $A$ together form $V_A^{\mathrm{post}}$ of Eq.~\eqref{eq:app-state}, so the anchor is applied once.
Each patch is read out in the all-$X$ or the all-$Z$ basis.
(c)~The bridge as a weighted sum of products of local operations on spin $a$ (top wire) and spin $b$ (bottom wire), read as maps on density matrices.
Here $c=\cos(\phi/2)$, $s=\sin(\phi/2)$, $d=\frac12\sin\phi$, and $R_\pm=R_Z(\pm\pi/2)$.
The terms of weight $c^2$ and $s^2$ apply the identity or $Z$ on both spins, and the four terms of weight $\pm d$ come from the commutator.
A $Z$ measurement $M_Z$ lets the circuit continue and multiplies the outcome of that shot by the recorded sign.
The two patches never interact.}
\label{fig:experiment}
\end{figure*}%
}

\newcommand{\tableGrid}{%
\begin{table}[b]
\centering
\small
\setlength{\tabcolsep}{6pt}
\renewcommand{\arraystretch}{1.12}
\begin{tabular}{@{}ccccccccc@{}}
\toprule
 & & & & & & \multicolumn{3}{c}{two-qubit gates}\\
\cmidrule(l){7-9}
$N$ & $(n_x,n_y)$ & $n$ & $n+1$ & $N_{\mathrm p}$ & $D_0$ & unrouted & routed & routed depth\\
\midrule
8 & $(1,1)$ & 4 & 5 & 162 & 2 & 38 & 38 & 36\\
14 & $(1,2)$ & 7 & 8 & 322 & 2 & 74 & 83 & 56\\
20 & $(1,3)$ & 10 & 11 & 482 & 2 & 110 & 134 & 66\\
24 & $(2,2)$ & 12 & 13 & 642 & 4 & 145 & 173 & 103\\
26 & $(1,4)$ & 13 & 14 & 642 & 2 & 146 & 181 & 77\\
32 & $(1,5)$ & 16 & 17 & 802 & 2 & 182 & 227 & 87\\
34 & $(2,3)$ & 17 & 18 & 962 & 4 & 216 & 258 & 105\\
48 & $(3,3)$ & 24 & 25 & 1{,}442 & 8 & 322 & 382 & 124\\
62 & $(3,4)$ & 31 & 32 & 1{,}922 & 8 & 428 & 512 & 142\\
80 & $(4,4)$ & 40 & 41 & 2{,}562 & 16 & 569 & 1{,}375 & 451\\
98 & $(4,5)$ & 49 & 50 & 3{,}202 & 16 & 710 & 1{,}519 & 406\\
120 & $(5,5)$ & 60 & 61 & 4{,}002 & 32 & 886 & 2{,}436 & 556\\
142 & $(5,6)$ & 71 & 72 & 4{,}802 & 32 & 1{,}062 & 2{,}424 & 542\\
168 & $(6,6)$ & 84 & 85 & 5{,}762 & 64 & 1{,}273 & 2{,}995 & 574\\
\bottomrule
\end{tabular}
\caption{Patch geometries of the studies and of the hardware sweep.
The subcircuit width $n+1$ includes the readout ancilla, which carries no angle, the number of stored angles $N_{\mathrm p}$ follows from Eq.~\eqref{eq:app-params} with two prefix and two suffix sweeps, and $D_0=2^{\min(n_x,n_y)}$ is the exact MPS bond dimension of the zero-field state.
The last three columns give the native two-qubit gate count of the largest branch circuit of the series in Fig.~\ref{fig:hardware}, before and after layout and routing on \texttt{ibm\_phoenix}, and its two-qubit depth after routing.}
\label{tab:app-grid}
\end{table}%
}

\newcommand{\figTransfer}{%
\begin{figure*}[b]
\centering
\resizebox{\textwidth}{!}{%
\begin{tikzpicture}[font=\small]
  \begin{scope}
    \node[anchor=north west] at (-0.8,3.95) {(a)};
    \fill[FLDCOutsideHead!80] (0.95,0) rectangle (1.15,3.3);
    \draw[->] (0,0) -- (4.35,0) node[right,font=\scriptsize] {$h$};
    \draw[->] (0,0) -- (0,3.55) node[above,font=\scriptsize] {$E$};
    \node[below,font=\scriptsize] at (0,0) {$0$};
    \draw (3.8,0.05) -- (3.8,-0.05); \node[below,font=\scriptsize] at (3.8,0) {$1$};
    \draw[FLDCGood,line width=1.1pt] plot[smooth] coordinates {(0,1.05) (0.5,1.0) (1.05,0.9) (1.6,0.75) (2.4,0.55) (3.8,0.45)};
    \draw[FLDCBad,line width=0.9pt] plot[smooth] coordinates {(0,2.55) (0.5,2.3) (0.85,1.55) (1.05,1.0) (1.3,1.45) (1.9,1.95) (2.6,2.1) (3.8,2.15)};
    \node[FLDCGood,font=\scriptsize,below] at (2.9,0.55) {$E_0(N,h)$};
    \node[FLDCBad,font=\scriptsize,above] at (2.9,2.12) {$E_1(N,h)$};
    \draw[<->,black!80] (2.0,0.7) -- (2.0,1.98);
    \node[right,font=\scriptsize] at (2.02,1.3) {$\gamma_N(h)$};
    \draw[->,FLDCTheory,thick] (0.05,1.28) to[bend left=8] (0.85,1.18);
    \node[FLDCTheory,font=\scriptsize] at (0.42,1.66) {$\mathcal U(h)$};
    \fill[FLDCOpen] (0,1.05) circle (1.8pt);
    \fill[FLDCOpen] (3.8,0.45) circle (1.8pt);
    \node[font=\scriptsize,align=center,anchor=north] at (0.62,0.86) {stabilizer\\state};
    \node[font=\scriptsize,align=center,anchor=south] at (3.75,0.5) {product\\state};
    \node[font=\scriptsize,anchor=south,align=center] at (1.05,3.32) {small gap};
  \end{scope}
  \begin{scope}[shift={(6.45,0)}]
    \node[anchor=north west] at (-1.25,3.95) {(b)};
    \fill[FLDCOpen!12] (0,0) rectangle (4.75,1.0);
    \draw[->] (0,0) -- (5.0,0) node[right,font=\scriptsize] {$h$};
    \draw[->] (0,0) -- (0,3.7) node[above,font=\scriptsize] {$N$};
    \node[below,font=\scriptsize] at (0,0) {$0$};
    \draw (4.6,0.06) -- (4.6,-0.06); \node[below,font=\scriptsize] at (4.6,0) {$1$};
    \draw (0.06,1.0) -- (-0.06,1.0); \node[left,font=\scriptsize] at (0,1.0) {$N_{\mathrm{src}}$};
    \draw (0.06,3.0) -- (-0.06,3.0); \node[left,font=\scriptsize] at (0,3.0) {$N_{\mathrm{tar}}$};
    \node[font=\scriptsize,FLDCOpen!80!black,anchor=east] at (4.7,0.42) {exact evaluator, $N\le48$};
    \draw[FLDCGood,line width=2pt] (0,0.05) -- (0,3.45);
    \node[font=\scriptsize,FLDCGood,rotate=90,anchor=south] at (-0.95,1.75) {exact angles at $h=0$};
    \draw[->,FLDCOpen,thick] (0.08,1.0) -- (1.06,1.0);
    \draw[->,FLDCTheory,thick] (1.16,1.1) -- (1.16,2.86);
    \fill[FLDCBad] (1.16,3.0) circle (2.6pt);
    \draw[FLDCBad,thick] (1.16,3.0) circle (5pt);
    \fill[FLDCOpen] (1.16,1.0) circle (2.2pt);
    \draw[densely dotted,black!40] (1.16,0) -- (1.16,0.95);
    \node[below,font=\scriptsize] at (1.16,0) {$h_{\mathrm{tar}}$};
    \draw[->,FLDCOpen,thick] (1.95,3.1) -- (2.4,3.1); \node[right,font=\scriptsize] at (2.45,3.1) {continuation, source size};
    \draw[->,FLDCTheory,thick] (1.95,2.65) -- (2.4,2.65); \node[right,font=\scriptsize] at (2.45,2.65) {transfer at fixed $h$};
    \fill[FLDCBad] (2.175,2.2) circle (2.2pt); \draw[FLDCBad,thick] (2.175,2.2) circle (4.2pt); \node[right,font=\scriptsize] at (2.45,2.2) {used without refinement};
  \end{scope}
  \begin{scope}[shift={(13.35,0.45)},x=0.78cm,y=0.78cm]
    \node[anchor=north west] at (-0.85,3.4) {(c)};
    \foreach \x/\y in {0/0,1/0,0/1,1/1} \fill[FLDCOpen!14] (\x,\y) rectangle ++(1,1);
    \foreach \k in {0,...,3}{\draw[black!60,line width=0.45pt] (\k,0) -- (\k,3); \draw[black!60,line width=0.45pt] (0,\k) -- (3,\k);}
    \draw[FLDCTheory,line width=0.9pt,densely dashed] (-0.25,-0.25) rectangle (1.75,1.75);
    \foreach \a in {0,90,180,270} \draw[FLDCBad,line width=1.2pt] (1,1) -- ++(\a:0.22);
    \foreach \a in {0,90,180} \draw[FLDCBad,line width=1.2pt] (1,0) -- ++(\a:0.22);
    \foreach \a in {0,90,270} \draw[FLDCBad,line width=1.2pt] (0,1) -- ++(\a:0.22);
    \foreach \a in {0,90} \draw[FLDCBad,line width=1.2pt] (0,0) -- ++(\a:0.22);
    \foreach \x in {0,1,2} \foreach \y in {0,1,2,3} \fill[FLDCOpen] (\x+0.5,\y) circle (1.5pt);
    \foreach \x in {0,1,2,3} \foreach \y in {0,1,2} \fill[FLDCOpen] (\x,\y+0.5) circle (1.5pt);
    \draw[FLDCTheory,line width=1.6pt] (1,0.5) to[bend right=40] (0.5,1);
    \fill[FLDCTheory] (1,0.5) circle (3.1pt);
    \fill[FLDCTheory] (0.5,1) circle (3.1pt);
    \node[FLDCTheory] at (0.33,0.36) {$\mu$};
    \begin{scope}[shift={(3.4,0)}]
      \foreach \a in {0,90,180,270} \draw[FLDCBad,line width=1.2pt] (0.2,2.75) -- ++(\a:0.2);
      \node[right,font=\scriptsize] at (0.45,2.75) {full star};
      \foreach \a in {0,90,180} \draw[FLDCBad,line width=1.2pt] (0.2,2.1) -- ++(\a:0.2);
      \node[right,font=\scriptsize] at (0.45,2.1) {truncated star};
      \filldraw[fill=FLDCOpen!14,draw=black!50] (0.04,1.3) rectangle (0.36,1.62);
      \node[right,font=\scriptsize] at (0.45,1.46) {plaquette};
      \draw[FLDCTheory,densely dashed,line width=0.9pt] (0.04,0.65) rectangle (0.36,0.97);
      \node[right,font=\scriptsize] at (0.45,0.81) {window, $r=1$};
    \end{scope}
    \node[font=\scriptsize,anchor=north] at (1.5,-0.3) {class of $\mu$ in a $3\times3$ patch};
  \end{scope}
\end{tikzpicture}}
\caption{Spectrum and angle preparation.
(a)~Sketch of the two lowest levels along the field.
The ground energy $E_0(N,h)$ connects the stabilizer state at $h=0$ to the product state at $h=1$, and the gap $\gamma_N(h)$ to the first excited level is small in the region of the bulk transition (gray).
Where the gap is open, the ground state is the image of the zero-field state under the quasi-adiabatic unitary $\mathcal U(h)$ of Eq.~\eqref{eq:qa-flow}.
The curves place neither a transition nor a gap closing for a finite patch.
(b)~Routes to the angles of a target at $(h_{\mathrm{tar}},N_{\mathrm{tar}})$ in the plane of field $h$ and size $N$.
Exact angles are known at $h=0$ for every size (green line), and optimization with the exact evaluator is possible for $N\le48$ (blue region).
We continue in $h$ at a source size (blue arrow), transfer the angles in $N$ at fixed field (purple arrow), and use them at the target without refinement (red circle).
Continuation at the target size itself is possible only inside the blue region.
(c)~Class of a gate.
The pair block $\mu$ acts on two spins on consecutive edges of a face (purple arc).
Its class records its role in the sweeps and the labeled Hamiltonian terms in a square window of radius $r$ around it (dashed), where a star truncated by the boundary differs from a full one.
Gates of the same class, up to a symmetry of the square lattice, receive the circular mean of the source angles of that class, Eq.~\eqref{eq:app-transfer}.}
\label{fig:transfer}
\end{figure*}%
}

\newcommand{\figTransferExample}{%
\begin{figure*}[t]
\centering
\resizebox{\textwidth}{!}{%
\begin{tikzpicture}[x=0.62cm,y=0.62cm,font=\small]
  \node[anchor=north west] at (-0.7,4.3) {(a)};
  \begin{scope}[shift={(1.0,2.3)}]
    \fill[FLDCFailHead] (0,0) rectangle (1,1); \fill[colLocal] (1,0) rectangle (2,1); \fill[FLDCFailHead] (2,0) rectangle (3,1);
    \foreach \k in {0,...,3}{\draw[black!55,line width=0.4pt] (\k,0) -- (\k,1);} \draw[black!55,line width=0.4pt] (0,0) -- (3,0); \draw[black!55,line width=0.4pt] (0,1) -- (3,1);
    \node[font=\scriptsize] at (0.5,0.5) {E}; \node[font=\scriptsize] at (1.5,0.5) {M}; \node[font=\scriptsize] at (2.5,0.5) {E};
    \node[font=\scriptsize,anchor=south] at (1.5,1.15) {source $1\times3$};
  \end{scope}
  \begin{scope}[shift={(0.0,0.0)}]
    \fill[FLDCFailHead] (0,0) rectangle (1,1); \fill[colLocal] (1,0) rectangle (4,1); \fill[FLDCFailHead] (4,0) rectangle (5,1);
    \foreach \k in {0,...,5}{\draw[black!55,line width=0.4pt] (\k,0) -- (\k,1);} \draw[black!55,line width=0.4pt] (0,0) -- (5,0); \draw[black!55,line width=0.4pt] (0,1) -- (5,1);
    \node[font=\scriptsize] at (0.5,0.5) {E}; \foreach \x in {1.5,2.5,3.5} \node[font=\scriptsize] at (\x,0.5) {M}; \node[font=\scriptsize] at (4.5,0.5) {E};
    \node[font=\scriptsize,anchor=north] at (2.5,-0.15) {target $1\times5$};
  \end{scope}
  \draw[->,FLDCTheory,thick] (1.5,2.25) -- (0.5,1.05);
  \draw[->,FLDCTheory,thick] (2.5,2.25) -- (1.5,1.05);
  \draw[->,FLDCTheory,thick] (2.5,2.25) -- (2.5,1.05);
  \draw[->,FLDCTheory,thick] (2.5,2.25) -- (3.5,1.05);
  \draw[->,FLDCTheory,thick] (3.5,2.25) -- (4.5,1.05);
  \node[anchor=north west] at (6.6,4.3) {(b)};
  \begin{scope}[shift={(7.6,2.3)}]
    \fill[FLDCFailHead] (0,0) rectangle (1,1); \fill[colLocal] (1,0) rectangle (2,1); \fill[FLDCFailHead] (2,0) rectangle (3,1);
    \foreach \k in {0,...,3}{\draw[black!55,line width=0.4pt] (\k,0) -- (\k,1);} \draw[black!55,line width=0.4pt] (0,0) -- (3,0); \draw[black!55,line width=0.4pt] (0,1) -- (3,1);
    \node[font=\scriptsize] at (0.5,0.5) {E}; \node[font=\scriptsize] at (1.5,0.5) {M}; \node[font=\scriptsize] at (2.5,0.5) {E};
    \node[font=\scriptsize,anchor=south] at (1.5,1.15) {source $1\times3$};
  \end{scope}
  \begin{scope}[shift={(8.1,-0.5)}]
    \fill[FLDCOutsideHead] (0,0) rectangle (2,2);
    \foreach \k in {0,1,2}{\draw[black!55,line width=0.4pt] (\k,0) -- (\k,2); \draw[black!55,line width=0.4pt] (0,\k) -- (2,\k);}
    \draw[->,FLDCOpen,line width=1pt] (0.5,0.5) -- (1.35,0.5);
    \draw[->,FLDCOpen,line width=1pt] (0.5,0.5) -- (0.5,1.35);
    \node[font=\scriptsize] at (1.5,1.5) {?};
    \node[font=\scriptsize,anchor=north] at (1,-0.15) {target $2\times2$};
  \end{scope}
  \draw[->,FLDCTheory,thick,densely dashed] (9.1,2.25) -- (9.1,1.55);
  \node[FLDCBad,font=\scriptsize,anchor=west,align=left] at (10.4,0.55) {no source gate has\\two perpendicular\\neighbouring faces};
  \node[anchor=north west] at (14.6,4.3) {(c)};
  \begin{scope}[shift={(16.0,-0.7)}]
    \foreach \x in {0,1,2} \foreach \y in {0,1,2} \fill[black!6] (\x,\y) rectangle ++(1,1);
    \fill[FLDCOpen!30] (1,1) rectangle (2,2);
    \foreach \k in {0,...,3}{\draw[black!55,line width=0.4pt] (\k,0) -- (\k,3); \draw[black!55,line width=0.4pt] (0,\k) -- (3,\k);}
    \draw[FLDCTheory,line width=1.1pt,densely dashed] (0.06,0.06) rectangle (2.94,2.94);
    \node[font=\scriptsize,anchor=north] at (1.5,-0.15) {source $3\times3$};
    \node[font=\scriptsize,align=center,anchor=south] at (1.5,3.15) {the neighboring faces\\touch the boundary};
  \end{scope}
  \begin{scope}[shift={(20.6,-0.7)}]
    \foreach \x in {0,...,4} \foreach \y in {0,...,4} \fill[black!4] (\x,\y) rectangle ++(1,1);
    \fill[FLDCOpen!30] (2,2) rectangle (3,3);
    \foreach \k in {0,...,5}{\draw[black!55,line width=0.4pt] (\k,0) -- (\k,5); \draw[black!55,line width=0.4pt] (0,\k) -- (5,\k);}
    \draw[FLDCTheory,line width=1.1pt,densely dashed] (1.06,1.06) rectangle (3.94,3.94);
    \node[font=\scriptsize,anchor=north] at (2.5,-0.15) {target $5\times5$};
    \node[font=\scriptsize,align=center,anchor=south] at (2.5,5.15) {the same neighborhood\\lies in the bulk};
  \end{scope}
\end{tikzpicture}}
\caption{Coverage of a target by its sources.
(a)~The faces of a one-row strip are of two kinds, end faces E with one neighboring face and middle faces M with two.
The best-found angles of a $1\times3$ source form a table with one entry per kind and role, from which a $1\times5$ target, or any longer strip, is filled (arrows).
(b)~Every face of a $2\times2$ patch has one neighbor beside it and one above or below it, an arrangement that never occurs in a one-row strip, so a strip source offers nothing to the gates of this target.
(c)~The highlighted faces are both called interior, but in the $3\times3$ patch the neighboring faces touch the boundary (dashed box), whereas in the $5\times5$ patch the same neighborhood lies in the bulk.
The class of a gate compares the Hamiltonian terms in a window of stated radius together with its role, and the dashed boxes illustrate such a window.}
\label{fig:transfer-example}
\end{figure*}%
}

\newcommand{\figTransferDiagnostics}{%
\begin{figure}[t]
\centering
\includegraphics[width=\textwidth]{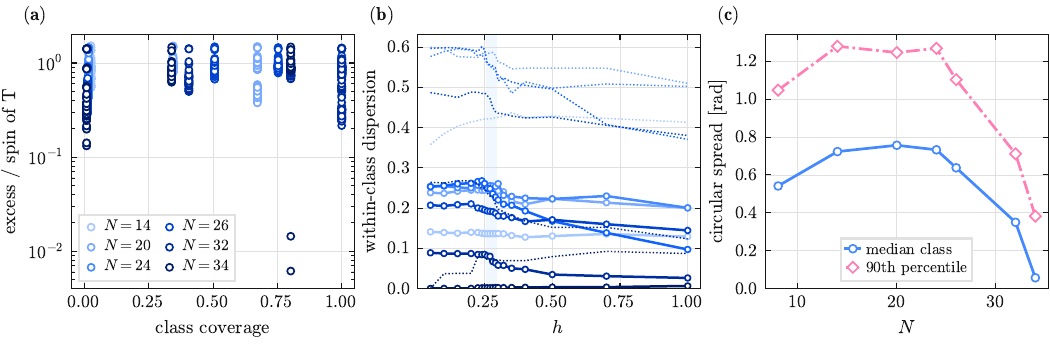}
\caption{Why the transfer fails, first campaign at $h>0$.
(a)~Excess energy per spin of the transferred state against the class coverage of its prediction, one shade per target size.
(b)~Within-class circular dispersion $1-\rho$ of directly optimized angles against $h$ for the median class (solid) and the 90th percentile (dotted).
(c)~The corresponding circular spread against $N$.
At $N=32$ and $34$ the spread is small, which we attribute to runs that hardly left their start.
Panel (a) pools the eighteen fields with $h>0$ of the first campaign.}
\label{fig:transfer-diagnostics}
\end{figure}%
}

\newcommand{\figLandscape}{%
\begin{figure}[b]
\centering
\includegraphics[width=\textwidth]{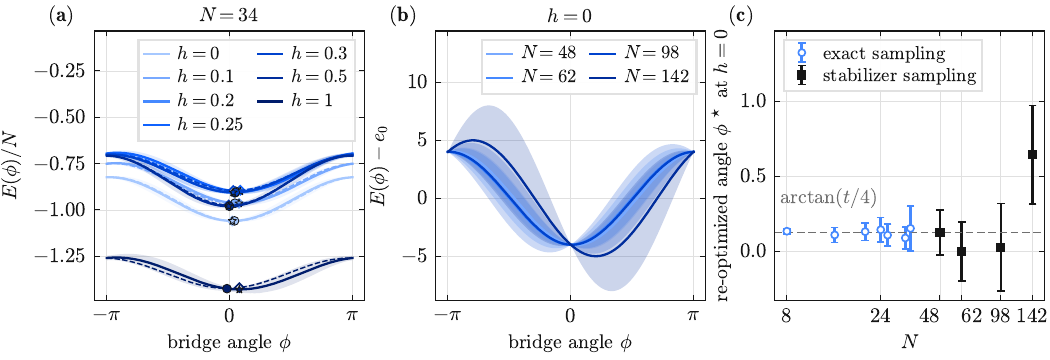}
\caption{Knitted records give the energy at every bridge angle, but shot noise hides the effect of the bridge as the patches grow.
Noiseless simulations of the ideal branch circuits at the shot budget of the device runs, with the patch angles fixed and the same six-branch records reweighted for every bridge angle $\phi$, at $t=0.5$.
(a) Energy per spin $E(\phi)/N$ at $N=34$, one shade per field, reconstructed from exact state-vector samples (solid, with bootstrap bands) and evaluated exactly for the same patch angles (dashed).
Circles mark the trained angles, stars the reconstructed minima, and open diamonds the exact energies at the reconstructed minimizers.
(b) Angle-dependent part $E(\phi)-e_0$ at $h=0$ for $N=48$, 62, 98, and 142, reconstructed from stabilizer samples of the Clifford zero-field branch circuits (App.~\ref{app:endpoints}).
The exact curve, $-D_t\cos(\phi-\phi_0)$ with $D_t=\sqrt{16+t^2}$, is the same at every size, while the bands of the reconstructions widen with $N$.
(c) Estimated minimizer $\phi^\star$ at $h=0$ against $N$, from exact samples for $N\le34$ (open circles) and stabilizer samples beyond (squares), with the exact optimum $\phi_0=\arctan(t/4)$ (dashed).
Bands and error bars are bootstrap uncertainties.}
\label{fig:landscape}
\end{figure}%
}

\newcommand{\figTransferResults}{%
\begin{figure*}[t]
\centering
\includegraphics[width=\textwidth]{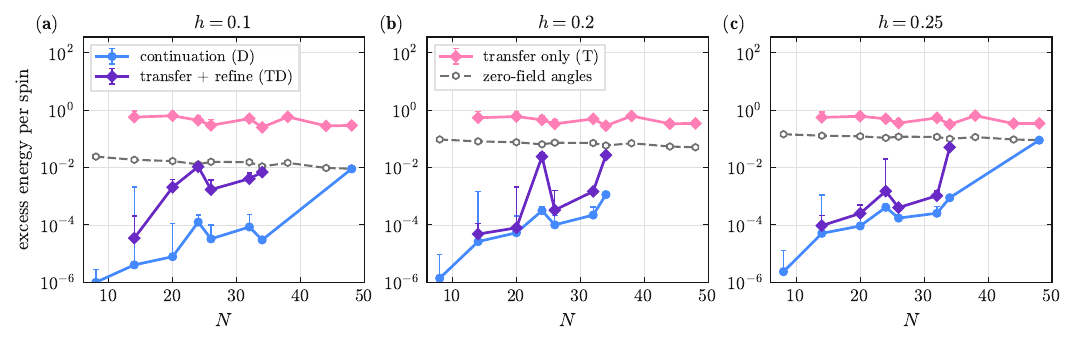}
\caption{Transferred angles against continuation.
Excess energy per spin over the reference against $N$ at (a)~$h=0.1$, (b)~$h=0.2$, and (c)~$h=0.25$.
Blue, continuation in $h$ at the target size.
Magenta, transferred angles before any target update, with the matching rule at the radius that is best at each size and field.
Purple, the same after refinement at the target size.
Gray, dashed, the exact zero-field angles reused unchanged.
Symbols mark the best and bars the worst of three seeds.
At $N=38$ and $44$ only transferred and zero-field angles exist, and at $N=48$ continuation exists at $h=0.1$ and $0.25$ and the refined transfer at $h=0.25$.}
\label{fig:transfer-results}
\end{figure*}%
}

\newcommand{\figTierTwo}{%
\begin{figure*}[t]
\centering
\includegraphics[width=\textwidth]{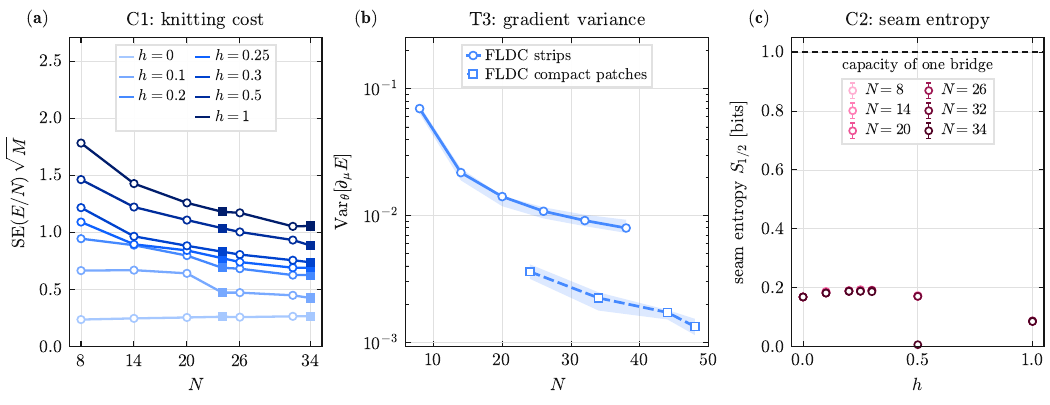}
\caption{
In simulation, the single-bridge application passes the knitting and gradient checks.
One criterion per panel, from classical simulations of the ideal circuits.
(a)~Knitting cost (C1), the bootstrap error of $E/N$ times $\sqrt M$ against $N$, one shade per field, from exact finite-shot simulations with $M=24{,}576$ shots per point, with strips of $1\times k$ plaquettes joined by lines and the compact $2\times2$ and $2\times3$ patches as squares.
(b)~Variance of the energy gradient at random initialization (T3) against $N$, from exact simulation at $h=0.25$ over 64 to 2{,}000 random draws of the angles, for strips (circles) and compact patches (squares).
Points are medians over three groups of parameters sorted by causal distance, with the bridge angle excluded, and bands span the envelope of their 95\% bootstrap intervals (App.~\ref{app:variance}).
(c)~Seam entropy (C2) of the prepared states against the field, one shade per size, with bars of one standard error, the exact ground state for $N\le24$ (gray), and the one bit that a single bridge can carry (dotted).
Classical difficulty of the branch tasks (S1) has no panel and remains open.}
\label{fig:tier2}
\end{figure*}%
}

\newcommand{\figVariance}{%
\begin{figure*}[t]
\centering
\includegraphics[width=\textwidth]{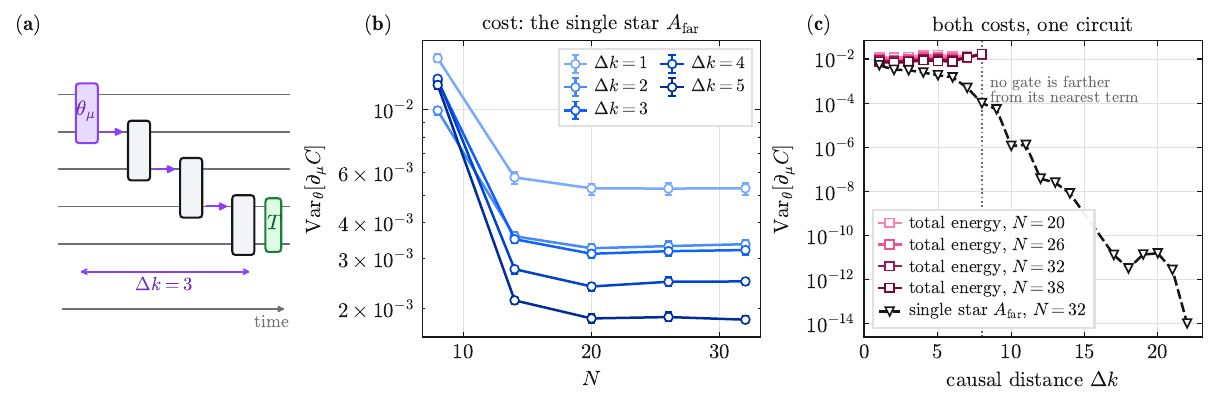}
\caption{Gradient variance at random initialization (T3).
(a)~A sequential sweep.
Blocks (boxes) act on pairs of neighboring spins (lines) in turn, and time runs to the right.
The number of blocks acting on each spin, given at the right, stays at most two however many spins the staircase covers, which is the finite local depth of an FLDC.
The causal distance $\Delta k$ from the parameter $\theta_\mu$ (purple) to the measured term $T$ (green) counts the steps of the shortest chain of blocks to the last block on the support of $T$, here $\Delta k=3$, and it grows with the number of spins.
(b)~Variance of the gradient of a single star $A_{\mathrm{far}}$ at $h=0.25$ over uniformly random angles against $N$, at fixed $\Delta k$ (light to dark), for strips.
$A_{\mathrm{far}}$ is the weight-two star of patch $A$ farthest from the contact, at the open end of the strip.
(c)~The same variance against $\Delta k$, for $A_{\mathrm{far}}$ at $N=32$ (black) and for the total energy at $N=20$, $26$, $32$, and $38$ (magenta, light to dark), where $\Delta k$ is the distance to the nearest Hamiltonian term.
The dotted line marks the largest such distance of any gate.
Bars are 95\% bootstrap intervals.}
\label{fig:variance-locality}
\end{figure*}%
}

\newcommand{\figSignalFraction}{%
\begin{figure}[b]
\centering
\includegraphics[width=0.5\textwidth]{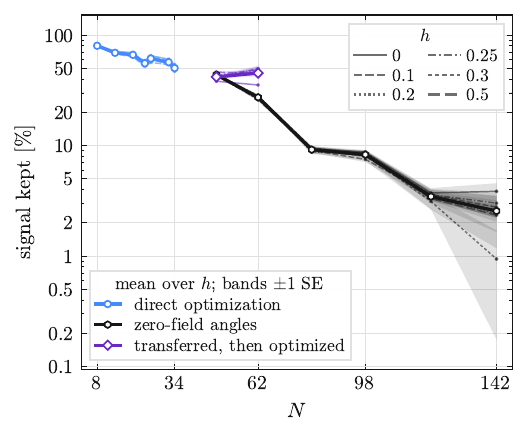}
\caption{Fraction of the signal kept at $h=0$, the measured over the exact energy, against $N$ for each kind of angle vector, with dynamical decoupling and post-selection.
The correction of App.~\ref{app:rescaling} divides every energy of a series by this fraction.}
\label{fig:signal}
\end{figure}%
}

\newcommand{\figRescaling}{%
\begin{figure}[b]
\centering
\includegraphics[width=\textwidth]{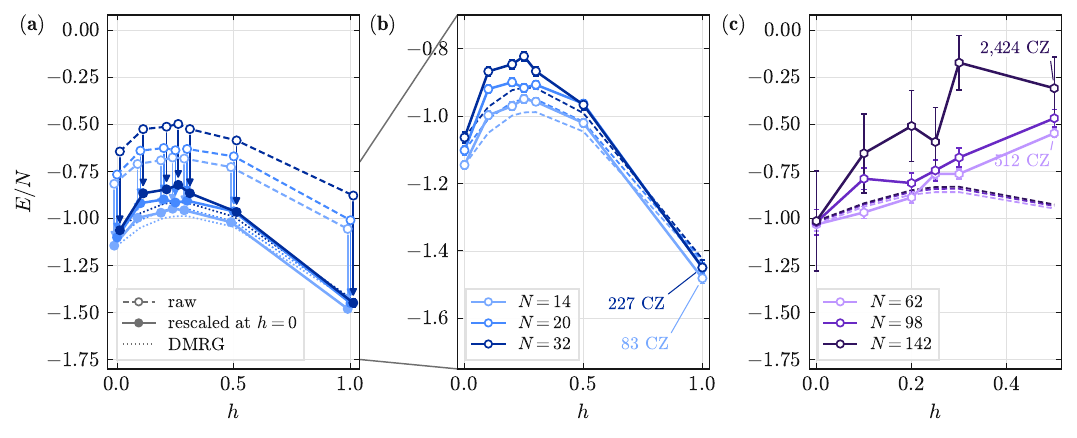}
\caption{Device energies corrected at $h=0$.
(a)~Effect of the correction for $N=14$, 20, and 32 with the directly optimized angles.
Open markers joined by dashed lines are the raw device energies, filled markers joined by solid lines the same series divided by the fraction of the signal that its $h=0$ point kept, and arrows join the two at every field.
(b)~The corrected series of (a), enlarged, with their error bars.
(c)~Corrected energies of the zero-field angles at $N=62$, 98, and 142 for $h\le0.5$.
Thin dotted and dashed lines are the DMRG reference, which the zero-field angles do not reach for $h>0$, so the relevant comparison in (c) is with their closed-form energy, Eq.~\eqref{eq:ep-zf}.
Labels give the largest native two-qubit gate count of one routed subcircuit.
At $h=0$ the corrected point equals the exact energy by construction, and in (a) the sizes are shifted slightly in $h$.}
\label{fig:rescaling}
\end{figure}%
}

\newcommand{\figHardware}{%
\begin{figure*}[t]
\centering
\includegraphics[width=\textwidth]{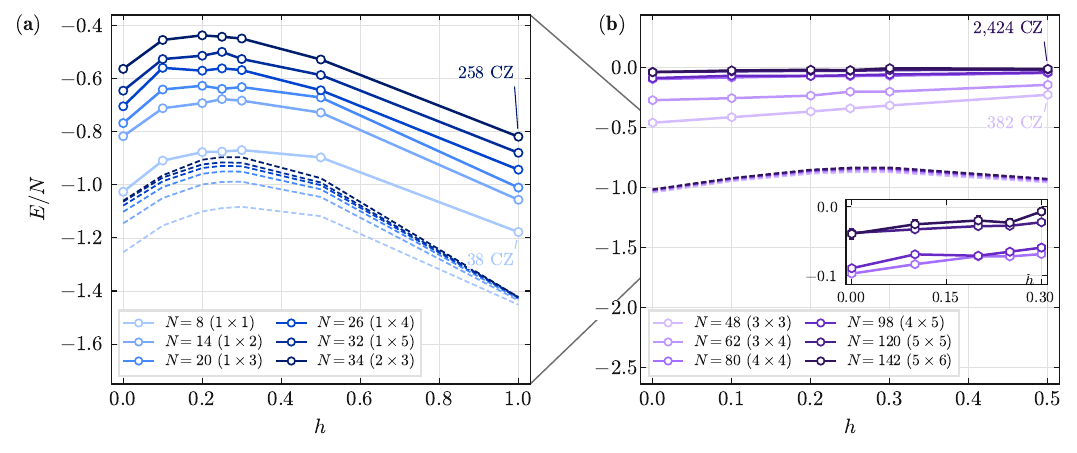}
\caption{
Device energies keep the shape of the reference at small sizes and lose most of their signal at large sizes.
Energy per spin $E/N$ against the field $h$, measured on the IBM Nighthawk processor \texttt{ibm\_phoenix} by knitting the bridge gate, one shade per size from light to dark.
(a)~Patches of $N=8$ to 34 spins with directly optimized angles.
(b)~Patches of $N=48$ to 142 spins with the zero-field angles, for $h\le0.5$, with an inset enlarging the two largest patches.
Markers joined by solid lines are raw device values, and dashed lines the ground-state energy, from exact diagonalization for $N\le24$ and from the density-matrix renormalization group (DMRG) otherwise.
Labels give the largest number of native two-qubit (CZ) gates in one routed subcircuit, and gray lines connect the energy axes of the two panels.
Each point is one run of $M=24{,}576$ shots, and error bars, where larger than the markers, are one bootstrap standard error from shot noise (App.~\ref{app:cut-energy}), computed from the post-selected records (App.~\ref{app:device}).}
\label{fig:hardware}
\end{figure*}%
}

\newcommand{\figTransferBenchmark}{%
\begin{figure}[t]
\centering
\includegraphics[width=\textwidth]{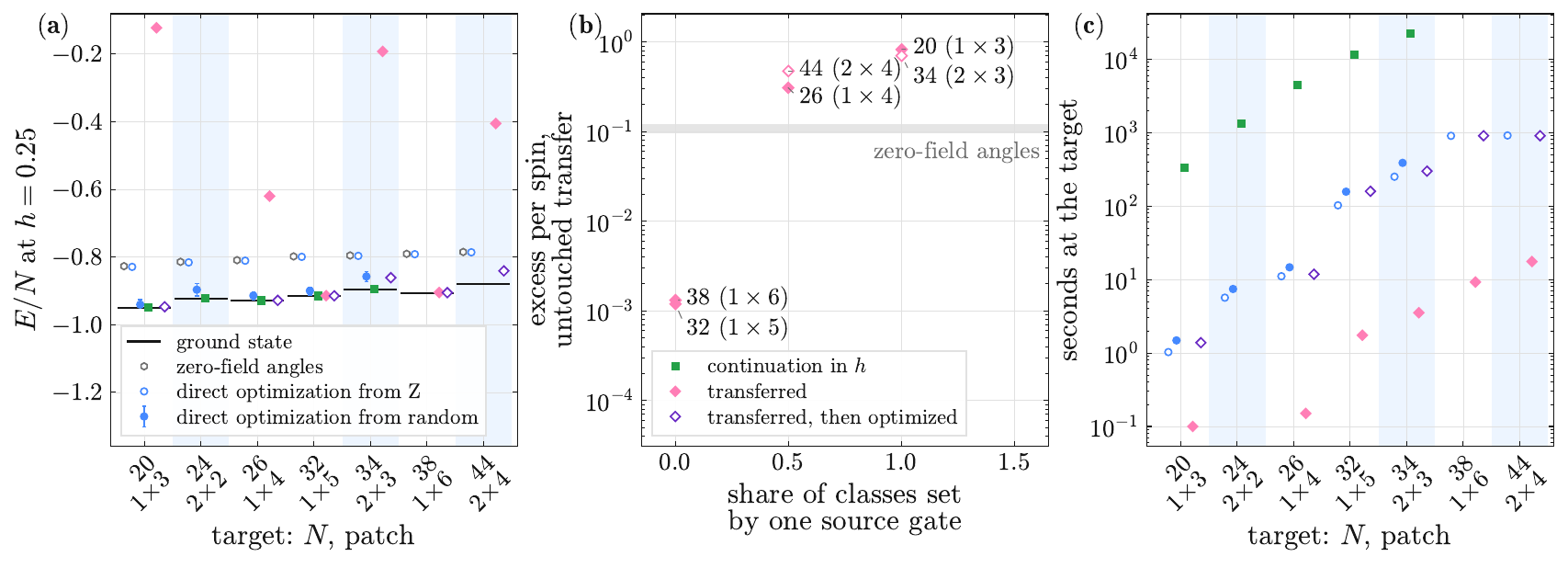}
\caption{Joint class table against optimization at the target, at $h=0.25$.
In (a) and (c) each column is one target patch, with compact patches shaded.
(a)~Energy per spin of every route, with the reference (black bar) and the zero-field angles (gray).
Blue, direct optimization at the target with a budget of 100 iterations or 900\,s, from the zero-field angles (open) and from random angles (filled, mean and range over 10 draws).
Green, continuation in $h$ at the target.
Magenta, the joint class table expanded onto the target without optimization, and purple, the same refined with the direct budget.
(b)~Excess energy per spin of the untouched transfer against the share of target angles whose class a single source gate decides, filled for strips.
The gray band marks the excess of the zero-field angles.
(c)~Seconds spent at the target for the three routes that start from smaller patches.
The one-off training of the table, which serves every target of a family, is not included.}
\label{fig:transfer-benchmark}
\end{figure}%
}

\newcommand{\tableVariancePosition}{%
\begin{table}[t]
\centering
\small
\setlength{\tabcolsep}{7pt}
\renewcommand{\arraystretch}{1.12}
\begin{tabular}{@{}ccccccc@{}}
\toprule
Distance from the nearer end & $N=8$ & $14$ & $20$ & $26$ & $32$ & $38$\\
\midrule
0 & 69.0 & 21.8 & 17.7 & 17.7 & 17.7 & 18.4\\
0.5 & -- & 20.9 & 13.6 & 13.0 & 13.2 & 13.4\\
1 & -- & -- & 8.95 & 7.41 & 7.52 & 7.49\\
1.5 & -- & -- & -- & 4.42 & 4.16 & 4.02\\
2 & -- & -- & -- & -- & 3.34 & 3.12\\
2.5 & -- & -- & -- & -- & -- & 2.73\\
\bottomrule
\end{tabular}
\caption{Variance of the energy gradient on strips at $h=0.25$, averaged over the gates at a given distance from the nearer end of the strip, in plaquettes, and given in units of $10^{-3}$.
The entries of a row agree wherever the strip is long enough to contain that distance.}
\label{tab:variance-position}
\end{table}%
}

\newcommand{\figQuasiLocal}{%
\begin{figure*}[t]
\centering
\resizebox{\textwidth}{!}{%
\begin{tikzpicture}[font=\small]
  \begin{scope}[x=0.72cm,y=0.72cm]
    \node[anchor=north west] at (-3.6,3.4) {(a)};
    \fill[FLDCTheory!12] (0,0) -- (-2.7,2.7) -- (2.7,2.7) -- cycle;
    \fill[FLDCTheory!12] (0,0) -- (-2.7,-2.7) -- (2.7,-2.7) -- cycle;
    \fill[colLocal] (-3.1,-1.0) rectangle (3.1,1.0);
    \fill[FLDCTheory!30] (0,0) -- (-1.0,1.0) -- (1.0,1.0) -- cycle;
    \fill[FLDCTheory!30] (0,0) -- (-1.0,-1.0) -- (1.0,-1.0) -- cycle;
    \draw[FLDCTheory!80,densely dashed] (-2.7,2.7) -- (0,0) -- (2.7,2.7);
    \draw[FLDCTheory!80,densely dashed] (-2.7,-2.7) -- (0,0) -- (2.7,-2.7);
    \draw[->,black!65] (-3.3,0) -- (3.45,0) node[right,font=\scriptsize] {$x$};
    \draw[->,black!65] (0,-2.95) -- (0,3.05) node[above,font=\scriptsize] {$\tau$};
    \fill[FLDCBad] (0,0) circle (2.6pt);
    \node[FLDCBad,font=\scriptsize,anchor=north west,fill=white,inner sep=1pt] at (0.15,-0.1) {$\partial_hH_j$};
    \node[FLDCTheory!80!black,font=\scriptsize,anchor=west] at (1.75,2.15) {$|x|\lesssim v|\tau|$};
    \node[FLDCGood!80!black,font=\scriptsize,anchor=west] at (3.15,0.8) {$|\tau|\lesssim1/\gamma$};
    \draw[<->,FLDCTheory!90!black,thick] (-1.0,1.25) -- (1.0,1.25);
    \node[FLDCTheory!90!black,font=\scriptsize,anchor=south] at (0,1.3) {$\xi\sim v/\gamma$};
    \node[font=\scriptsize,anchor=north,align=center,black!70] at (0,-3.05) {outside the cone the influence is\\suppressed, not zero};
  \end{scope}
  \begin{scope}[shift={(5.9,-2.2)},x=0.72cm,y=0.72cm]
    \node[anchor=north west] at (-0.85,6.4) {(b)};
    \draw[->,black!65] (0,0) -- (5.3,0) node[right,font=\scriptsize] {$|\tau|$};
    \draw[->,black!65] (0,0) -- (0,4.6) node[above,font=\scriptsize] {$|W_\gamma(\tau)|$};
    \fill[colLocal] (0,0) rectangle (1.4,4.2);
    \draw[FLDCGood,line width=1.2pt,domain=0:5,samples=80] plot (\x,{3.9*exp(-1.15*\x)});
    \draw[densely dashed,black!50] (1.4,0) -- (1.4,4.2);
    \node[below,font=\scriptsize] at (1.4,0) {$1/\gamma$};
    \node[FLDCGood!80!black,font=\scriptsize,anchor=west,align=left] at (2.0,2.6) {tails decay beyond $1/\gamma$\\but do not vanish};
    \draw[->,FLDCGood!80!black] (2.4,2.1) -- (3.2,0.22);
  \end{scope}
  \begin{scope}[shift={(11.5,-1.2)},x=0.45cm,y=0.45cm]
    \node[anchor=north west] at (-0.9,7.2) {(c)};
    \begin{scope}[shift={(0,0)}]
      \fill[black!4,rounded corners=2pt] (-0.25,-0.25) rectangle (4.25,4.25);
      \draw[black!70,line width=0.5pt] (0,0) grid[step=1] (4,4);
      \fill[FLDCTheory!22,rounded corners=2pt] (1,1) rectangle (3,3);
      \draw[FLDCTheory,densely dashed,line width=0.9pt,rounded corners=2pt] (1,1) rectangle (3,3);
      \draw[FLDCBad,line width=1.1pt] (1.8,2) -- (2.2,2); \draw[FLDCBad,line width=1.1pt] (2,1.8) -- (2,2.2);
      \node[FLDCBad,font=\scriptsize,anchor=south west] at (2.05,2.1) {$H_j$};
      \node[font=\scriptsize,anchor=north] at (2,-0.4) {source};
    \end{scope}
    \begin{scope}[shift={(6.2,0)}]
      \fill[black!4,rounded corners=2pt] (-0.25,-0.25) rectangle (6.25,4.25);
      \draw[black!70,line width=0.5pt] (0,0) grid[step=1] (6,4);
      \fill[FLDCTheory!22,rounded corners=2pt] (2,1) rectangle (4,3);
      \draw[FLDCTheory,densely dashed,line width=0.9pt,rounded corners=2pt] (2,1) rectangle (4,3);
      \draw[FLDCBad,line width=1.1pt] (2.8,2) -- (3.2,2); \draw[FLDCBad,line width=1.1pt] (3,1.8) -- (3,2.2);
      \node[FLDCBad,font=\scriptsize,anchor=south west] at (3.05,2.1) {$H_k$};
      \node[font=\scriptsize,anchor=north] at (3,-0.4) {target};
    \end{scope}
    \draw[<->,FLDCTheory,thick] (3.15,2) -- (8.05,2);
    \node[FLDCTheory!90!black,font=\scriptsize,anchor=south,align=center] at (5.6,5.0) {same labeled Hamiltonian\\in both windows};
    \draw[FLDCTheory!90!black] (5.6,4.95) -- (5.6,2.1);
    \node[font=\scriptsize,anchor=north,align=center] at (6.1,-1.4) {$\mathcal D_j^{(r)}=\mathcal D_k^{(r)}$ under the identification,\\the full pieces differ by their tails};
  \end{scope}
\end{tikzpicture}}
\caption{How a quasi-local generator arises and why it suggests transfer.
(a)~The Heisenberg evolution of a local term $\partial_hH_j$ spreads within a light cone $|x|\lesssim v|\tau|$ (purple), with influence outside the cone that is exponentially suppressed but not zero, where the Lieb--Robinson velocity $v$ is the largest speed at which local operators spread under the Hamiltonian.
The filter of Eq.~\eqref{eq:transfer-generator} weights evolution times $|\tau|\lesssim1/\gamma$ (green band), so the piece $\mathcal D_j$ is concentrated within a distance $\xi\sim v/\gamma$ of the term (dark region).
(b)~Envelope of the filter, which decays beyond $1/\gamma$ without a sharp cutoff.
The drawn envelope is not a specific filter, and $v/\gamma$ is a crossover scale rather than a support radius.
(c)~Two terms in patches of different size whose windows of radius $r$ carry identical labeled Hamiltonians have identical truncated pieces $\mathcal D^{(r)}$, Eq.~\eqref{eq:transfer-tail}.
Gates in identical windows are therefore candidates for receiving the same angle, which is the hypothesis behind the matching rule of App.~\ref{app:transfer}.}
\label{fig:qac-range}
\end{figure*}%
}

\newcommand{\figContact}{%
\begin{figure}[t]
\centering
\includegraphics[width=\columnwidth]{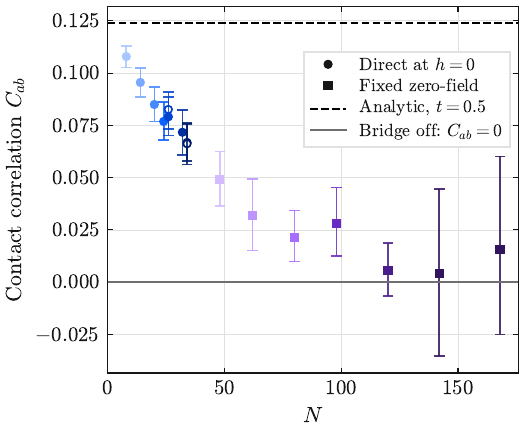}
\caption{Knitting recovers part of the contact correlation that independent execution misses.
Connected contact correlation $C_{ab}$ of Eq.~\eqref{eq:contact-corr} at $h=0$, knitted from the hardware records of Fig.~\ref{fig:hardware}, for the trained angles (circles) and the zero-field angles (squares).
The dashed line is the target $t/\sqrt{16+t^2}$, and the solid line the bridge-off prediction $C_{ab}=0$.
Error bars are 95\% bootstrap intervals from shot noise, with the three expectation values resampled jointly.}
\label{fig:contact}
\end{figure}%
}

\definecolor{scrBridge}{HTML}{4589FF}   %
\definecolor{scrToricA}{HTML}{08BDBA}   %
\definecolor{scrToricB}{HTML}{FA4D56}   %
\definecolor{scrMPS}{HTML}{6929C4}      %
\definecolor{scrBrickLin}{HTML}{FF7EB6} %
\definecolor{scrBrickConst}{HTML}{BE95FF} %
\definecolor{scrHEA}{HTML}{00539A}      %
\definecolor{scrQAOA}{HTML}{24A148}     %
\providecommand{\scrsw}[1]{\textcolor{#1}{\rule[0.5ex]{1.2em}{1.6pt}}}
\providecommand{\scrbad}[1]{\fbox{#1}}
\newcommand{\tableTierOneScreen}{%
\begin{table}[t]
\centering
\caption{\label{tab:tier1-screen}%
Tier-1 screen of Fig.~\ref{fig:tier1-screen}, with the trend recorded in Table~\ref{tab:families} next to the value found at the largest generated size.
All values are taken at about 512 qubits, 513 for row 3 and 517 for row 8b, except for routed QAOA, whose largest size is 192 qubits.
The swatch gives the line of each generator in the figure.
Boxed values fail the criterion or, for T2, collapse with the size.}
\footnotesize
\setlength{\tabcolsep}{2.6pt}
\setlength{\fboxsep}{1.5pt}
\begin{tabular}{@{}l l l cc cc cc cc@{}}
\toprule
 & Row & Generator & \multicolumn{2}{c}{$\chi$ (T1)} & \multicolumn{2}{c}{$\log_{10}g$ (T2)} & \multicolumn{2}{c}{$c/n$ (S2)} & \multicolumn{2}{c}{$\kappa_{\mathrm{found}}$ (C1)}\\
\cmidrule(lr){4-5}\cmidrule(lr){6-7}\cmidrule(lr){8-9}\cmidrule(lr){10-11}
 & & & trend & value & trend & value & trend & value & trend & value\\
\midrule
\scrsw{scrHEA}        & 3   & Hardware-efficient ansatz       & $\Theta(d)$     & \scrbad{342} & ${\le}10^{-28}$  & \scrbad{$-411$} & \fldcyes & 1.00 & $\omega(1)$      & \scrbad{258} \\
\scrsw{scrQAOA}       & 5   & Routed QAOA, $p=2$              & $\omega(1)$     & \scrbad{384} & $4^{-\omega(1)}$ & \scrbad{$g=0$}  & \fldcopen & 1.00 & $\Theta(n)$      & \scrbad{192} \\
\scrsw{scrBrickLin}   & 6a  & Brickwork, linear depth         & $\omega(1)$     & \scrbad{256} & $4^{-\omega(1)}$ & \scrbad{$-616$} & \fldcyes & 0.96 & $\Theta(\sqrt n)$ & \scrbad{128} \\
\scrsw{scrMPS}        & 7   & Sequential MPS, $D=4$           & $O(1)$          & 3            & $D^{-O(1)}$      & $-10$           & \fldcyes & 1.00 & $O(1)$           & 2 \\
\scrsw{scrToricA}     & 8a  & Clifford toric, bounded contact & ${\le}6$        & 3            & \fldcna          & \fldcna         & \fldcyes & 0.83 & $O(1)$           & 1 \\
\scrsw{scrToricB}     & 8b  & Clifford toric, extensive seam  & ${\le}6$        & 2            & \fldcna          & \fldcna         & \fldcyes & 1.00 & $\Theta(L)$      & \scrbad{13} \\
\scrsw{scrBridge}     & 12b & Single-bridge FLDC              & $O(1)$          & 18           & \fldcna          & $-86$           & \fldcyes & 1.00 & 1                & 1 \\
\scrsw{scrBrickConst} & --  & Brickwork, constant depth       & --              & 4            & --               & $-9$            & --       & \scrbad{0.02} & -- & 2 \\
\bottomrule
\end{tabular}
\end{table}%
}

\let\fldcaddcontentsline\addcontentsline
\renewcommand{\addcontentsline}[3]{}

\begin{document}

\title{Structural Conditions for Distributed Quantum Advantage}

\author{Sabina Dr\u{a}goi}
\email{sdragoi@phys.ethz.ch}
\affiliation{Institute for Theoretical Physics, ETH Z\"urich,
Wolfgang-Pauli-Str. 27, 8093 Z\"urich, Switzerland}
\affiliation{IBM Research, S\"{a}umerstrasse 4,
CH-8803 R\"{u}schlikon, Switzerland}

\author{Mar\'ia Gragera Garc\'es}
\email{m.gragera.garces@ed.ac.uk}
\affiliation{Quantum Software Lab, University of Edinburgh,
Edinburgh, United Kingdom}

\author{Lirand\"{e} Pira}
\email{lpira@nus.edu.sg}
\affiliation{Center for Quantum Technologies,
National University of Singapore, Singapore}

\begin{abstract}

Circuit cutting runs a quantum computation on processors smaller than the circuit, at the price of classical knitting whose cost grows exponentially with the number of cut gates.
Keeping this cost affordable is necessary but not sufficient for an advantage, since knitting classically easy subcircuits is itself classically easy.
We formulate three requirements for distributed quantum advantage under a stated budget: affordable knitting, classical difficulty that survives the cut, and, for variational circuits, resolvable gradients.
We prove that under stated assumptions knitting stays affordable when subcircuits grow behind a bounded interface and becomes exponentially expensive when fixed-width subcircuits multiply.
We turn the requirements into screening criteria and apply them to eighteen circuit families.
We identify finite local-depth circuits with bounded interfaces as a setting in which the requirements can coexist, with hardness established only in the worst case.
As a classically verifiable proof of principle, we knit one gate joining two field-perturbed toric-code patches on an IBM Nighthawk processor for up to 142 spins, and confirm that the reconstruction can at least resolve the contact correlation beyond independent execution for up to 98 qubits.
\end{abstract}

\maketitle

\section{Introduction}
\label{sec:intro}

Many applications of quantum computing require more qubits than a single processor currently provides~\cite{Beverland2022Requirements,Cain2026AtomicShor,Shajan2026Protein, Merz2026ProteinLigand}.
As in classical distributed computing~\cite{vanSteen2016Distributed,Dean2008MapReduce}, a natural way to scale up is to distribute the computation across several quantum chips~\cite{Caleffi2024Survey,Barral2025Review,Barratt2021Holographic}.
Quantum links between processors have been demonstrated~\cite{Main2025OpticalLink}, while \emph{circuit cutting}~\cite{BravyiSmithSmolin2016,Tang2021CutQC,Lowe2023RandomizedMeasurements,MitaraiFujii2021VirtualGate, CarreraVazquez2024RealTime,Peng2020CircuitCutting} offers an alternative requiring no quantum link.
It removes the gates that act across a chosen partition of the qubits, runs the resulting subcircuits separately, and classically \emph{knits} their measurement outcomes together.
Circuit cutting thus trades the width of a processor for classical post-processing work.

However, this approach faces two obstacles.
Knitting requires a number of samples that grows exponentially with the number of removed gates~\cite{Ufrecht2024Joint,Schmitt2025Cutting,PiveteauSutter2024Knitting,Brenner2023WireCutting}.
Each removed gate is replaced by a signed combination of local operations, and averaging out the resulting cancellations multiplies the number of samples by a constant factor for every gate.
A cheap cut is not enough either.
If the subcircuits are easy to simulate classically, knitting their results together is easy as well~\cite{PiveteauSutter2024Knitting,BravyiSmithSmolin2016}, and the quantum hardware offers no advantage~\cite{Hangleiter2026Advantage}.
A circuit that promises a quantum advantage is therefore not automatically a good candidate for distributed execution.

Variational algorithms add a third requirement.
They exploit the expressibility of parametrized circuits, which can prepare useful states with few gates, provided their parameters are well optimized~\cite{Peruzzo2014VQE,Kandala2017HEA,Preskill2018NISQ,Sayginel2024FTVQE}.
These parameters are usually trained using gradient-based methods that scale only if the gradients remain resolvable as the system grows, and at random initialization they often do not.
In a \emph{barren plateau}, gradient variances decay exponentially with the number of qubits, so resolving a gradient takes exponentially many measurements~\cite{McClean2018BarrenPlateaus,Cerezo2021CostFunction,Wang2021NoiseBP,Ragone2024Lie,Larocca2025BPReview}.
Keeping circuits shallow can bring their observables within reach of classical simulation in the regimes studied in Refs.~\cite{Cerezo2025Simulability,Angrisani2024Classical,Bermejo2024QCNN}.
The simulation guarantees depend on the architecture, ensemble, accuracy, and access to input-state data; successful simulations of large experimental circuits provide further examples~\cite{Tindall2024Eagle,Begusic2024FastConverged}.
For architectures designed to balance trainability and expressivity, the encoding, readout, and required accuracy must also be specified when assessing classical complexity~\cite{Kerenidis2026Scalable}.
Our earlier work made this conflict precise for a one-dimensional brickwork circuit split into two blocks joined by a fixed number of gates~\cite{GrageraGarces2026EntanglementGeometry}.
In that setting, for area-law input states and local costs at random initialization, avoiding a barren plateau requires depth $O(\log n)$, whereas escaping the matrix-product-state simulation criterion used there requires a superpolynomial bond dimension, which in turn requires depth $\omega(\log n)$.

\figMap

This paper asks which circuit families are worth distributing.
First, we identify three requirements.
Knitting must remain affordable as the system grows, which we call \textsc{Cut}.
The tasks left to each subcircuit must remain classically hard, which we call \textsc{Sim}, since otherwise the knitted result is classically easy too.
For variational circuits, gradients must remain resolvable, which we call \textsc{Train}.
We turn each requirement into criteria that can be checked for a given circuit family and resource budget.
Assessed together, the requirements locate where a distributed advantage can reside.
An efficient cut whose subcircuit tasks are classically easy is classically easy as a whole (Theorem~\ref{prop:stitching}).
For the cutting constructions and target states we consider, growing the subcircuits behind a bounded or logarithmically growing interface keeps knitting affordable, whereas growing the number of fixed-width subcircuits makes it exponentially expensive (Theorem~\ref{prop:regimes}).
An advantage must therefore come from subcircuits that grow behind a bounded interface and remain classically hard.

Second, we apply the criteria to eighteen circuit families and regimes from the literature (Table~\ref{tab:families}), from standard variational ground-state solvers~\cite{Peruzzo2014VQE,Kandala2017HEA} to sequential preparations of topological states~\cite{Satzinger2021ToricCode,Wei2022SequentialPEPS,ZaletelPollmann2020isoTNS}.
The variational ansätze and random circuits all have a local depth that grows with the system size, which removes the only available trainability guarantee.
The topological states cut along an extended seam fail because their interface grows with the length of the seam and the knitting cost grows exponentially with it.
The families that remain admit an affordable cut, but most of them leave subcircuits that a classical computer simulates efficiently, making the advantage disappear.
Only one of the tested families combines an affordable cut with branch tasks that are provably hard, and even there the hardness holds only in the worst case.

These observations point to a more specific architectural requirement. A useful distributed circuit should allow its computational reach to grow with system size without increasing either the local depth that controls the available trainability guarantee or the interface that controls the knitting cost.

\emph{Finite local-depth circuits} (FLDCs) provide a natural way to separate these conditions~\cite{ZhangLiuZhang2024FLDC}.
Instead of the total depth, they bound the size of each local block and the number of blocks acting on each qubit.
For blocks drawn independently from local 2-designs and local costs satisfying the coefficient condition of Sec.~\ref{sec:preliminaries}, this keeps gradient variances bounded away from zero independently of the system size, even when the total depth grows linearly with $n$.
FLDCs can therefore prepare long-range entangled states, and their local observables can depend on a region of the circuit that grows with the system size.
If only a bounded number of gates connect the subcircuits, the knitting cost also stays bounded.
Distributed FLDCs with bounded interfaces thus provide a candidate architecture in which gradient-resolution guarantees are compatible with large total depth, under the stated local-cost and initialization assumptions, and in which affordable knitting, trainability, and classical difficulty might coexist.

Third, we test this candidate on the simplest bounded interface, two open patches of the field-perturbed toric code~\cite{Kitaev2003Anyons,WuDengProkofev2012} joined by one variational entangling gate, extending the benchmark of Ref.~\cite{ZhangLiuZhang2024FLDC}.
Cutting the bridge gate leaves two subcircuits which we run on an IBM Nighthawk processor and recombine classically~\cite{MitaraiFujii2021VirtualGate,CarreraVazquez2024RealTime}.
We simulate up to 48 spins and measure up to 142 on the device, with directly optimized angles up to 34 spins and, beyond, zero-field angles that prepare the exact ground state without the perturbing field.

The hardware experiment is a classically verifiable test of whether an affordable bridge reconstruction recovers physically relevant information that running the two patches independently would miss.
Our simulations confirm that the knitting cost does not grow with the size and that the energy gradients show no exponential decay.
On the device, the knitted energies reproduce the field dependence of the ideal circuits, but their signal falls with the depth of the routed subcircuits, by about $0.6\%$ per layer of two-qubit gates, with no error mitigation beyond dynamical decoupling, twirling, and post-selection~\cite{Viola1999DD,Wallman2016RC,DragoiEgger2026Transpilation}.
Since the bridge changes the energy per spin only at order $1/N$, we also knit the connected correlation across the contact, which independent execution cannot produce.
It stays clearly nonzero up to 98 spins, from about $87\%$ of its target at 8 spins to about a fifth at 80 and 98 spins, so the reconstruction retains part of the cross-patch information within the limits of device noise.

The remaining challenge is state preparation at scales beyond classical simulation.
Resolvable gradients make training possible in principle with polynomially many shots, but exact classical simulation limits it in practice, which is why the larger patches use the zero-field angles.
We propose a parameter transfer motivated by quasi-adiabatic continuation, which carries angles from short strips to longer ones without further optimization, see App.~\ref{app:angles}. We leave the extension to compact patches to future work.

Section~\ref{sec:preliminaries} states the three requirements and fixes the notation, Sec.~\ref{sec:screening} formulates the criteria and applies them, Sec.~\ref{sec:application} presents the toric-code test, and Sec.~\ref{sec:conclusion} discusses what the results imply.
Figure~\ref{fig:map} summarizes this structure, while the appendices collect all technical details.

\section{Requirements for distributed quantum advantage}
\label{sec:preliminaries}

This section states the requirements \textsc{Cut}, \textsc{Sim}, and \textsc{Train} for a circuit family as $n$ grows.
We consider the task of estimating expectation values $\langle O\rangle$ at the output $U|0^n\rangle$ of an $n$-qubit circuit $U$, for observables $O$ with support on at most $r=O(1)$ qubits and norm $\lVert O\rVert\le1$.

\figPartition

Circuit cutting estimates full-circuit expectation values from measurements on smaller subcircuits.
Fix a partition $\Pi$ that assigns each qubit to a subcircuit, see Fig.~\ref{fig:partition}(a).
Each gate channel $\mathcal{V}_s$ crossing $\Pi$ admits a signed \emph{quasiprobability decomposition} into subcircuit-local operations~\cite{Peng2020CircuitCutting,MitaraiFujii2021VirtualGate},
\begin{equation}
  \mathcal{V}_s=\sum_i a_{s,i}\,\mathcal{F}_{s,i},
  \qquad
  \gamma_s=\sum_i |a_{s,i}|,
  \label{eq:prelim-qpd}
\end{equation}
see Fig.~\ref{fig:partition}(b). The coefficients $a_{s,i}$ are real, and each map $\mathcal{F}_{s,i}$ is implemented by local operations, measurements, and classical post-processing.
For $\kappa$ crossing gates decomposed independently, a \emph{branch} $\alpha=(i_1,\ldots,i_\kappa)$ replaces each crossing gate $s$ by $\mathcal{F}_{s,i_s}$ and leaves all other gates unchanged, see Fig.~\ref{fig:partition}(c).
This defines a full-circuit branch map $\mathcal{F}_\alpha$ with coefficient $c_\alpha=\prod_{s=1}^{\kappa}a_{s,i_s}$, and
\begin{equation}
  \mathcal{U}=\sum_\alpha c_\alpha\,\mathcal{F}_\alpha,
  \qquad
  \Gcut=\sum_\alpha |c_\alpha|,
  \label{eq:prelim-circuit-qpd}
\end{equation}
where $\mathcal{U}(\rho)=U\rho U^\dagger$ is the channel of the original circuit.
By linearity, $\langle O\rangle$ is the same signed sum of branch contributions.
Each contribution is obtained by running the subcircuits separately, with the operations of branch $\alpha$ inserted, and knitting their outcomes classically.
We call the estimation of local expectation values of one such subcircuit a \emph{branch task}.

Sampling branch $\alpha$ with probability $p_\alpha=|c_\alpha|/\Gcut$ and weighting its knitted outcome by $\Gcut\operatorname{sgn}(c_\alpha)$ gives an unbiased estimator of $\langle O\rangle$.
For a Pauli-product observable measured in its product eigenbasis, branch outcomes lie in $[-1,1]$, so the weighted outcomes lie in $[-\Gcut,\Gcut]$. 
Hoeffding's inequality~\cite{Hoeffding1963} then shows that
\begin{equation}
  M\ge\frac{2\Gcut^2}{\varepsilon^2}\ln\frac{2}{\delta}
  \label{eq:prelim-shots}
\end{equation}
independent shots suffice for additive error $\varepsilon$ with failure probability at most $\delta$, where a shot runs the subcircuits of one chosen branch once and knits their outcomes.
The sampling cost therefore grows as $\Gcut^2$.
We call a family \emph{cheaply cuttable} when $\Gcut=O(1)$ and \emph{efficiently cuttable} when $\Gcut$ grows at most polynomially with $n$.
\textsc{Cut} requires efficient cuttability.

The value of $\Gcut$ depends on both the crossing gates and the chosen decomposition.
We define the \emph{interface capacity} $\kappa(\Pi)$ as the number of two-qubit gate channels severed by $\Pi$, with repeated gates counted separately.
For crossing gates of the form $\exp[-i\theta_s(P_s\otimes Q_s)/2]$, with $P_s$ and $Q_s$ single-qubit Pauli operators, optimal decompositions of the individual gates give~\cite{Schmitt2025Cutting}
\begin{equation}
  \Gamma_{\mathrm{ind}}=\prod_{s=1}^{\kappa}\bigl(1+2|\sin\theta_s|\bigr)\le 3^\kappa.
  \label{eq:prelim-independent}
\end{equation}
For parallel crossing gates, an optimal \emph{joint decomposition}, which treats them collectively, achieves the smaller value~\cite{Schmitt2025Cutting,Ufrecht2024Joint}
\begin{equation}
  \Gamma_{\mathrm{joint}} =2\prod_{s=1}^{\kappa}\bigl(1+|\sin\theta_s|\bigr)-1\le 2^{\kappa+1}-1.
  \label{eq:prelim-joint}
\end{equation}
In both cases, a bounded interface keeps $\Gcut$ bounded however large the subcircuits become.
Neither decomposition requires classical communication between subcircuits during execution, so the subcircuits can run on separate processors or one after another on the same processor, as in the local-operations setting of Ref.~\cite{CarreraVazquez2024RealTime}.

Entanglement across the partition also constrains $\Gcut$, independently of the chosen decomposition.
For a pure output $|\psi\rangle$ and a bipartition $\Pi=A{:}B$, let $\lambda_a$ be the eigenvalues of the reduced state $\rho_A$.
We define the \emph{seam entropy} as the R\'enyi-$1/2$ entropy
\begin{equation}
  \Ehalf(\rho_A)=2\log_2\Bigl(\sum_a\sqrt{\lambda_a}\Bigr),
  \label{eq:seam-entropy}
\end{equation}
and use it to measure how much entanglement a reconstruction has to reproduce across the partition.
Appendix~\ref{app:cost} collects the bounds that relate it to $\Gcut$.

An efficient cut does not by itself imply an advantage.
We call a branch task \emph{efficiently simulable} when a classical algorithm achieves inverse-polynomial additive error in polynomial time.
\textsc{Sim} requires that at least one branch task entering the knitting is not efficiently simulable.
If every branch task is efficiently simulable, so is the knitted estimate (App.~\ref{app:sim}).

For a parametrized circuit $U(\boldsymbol\theta)$, we take the cost $C(\boldsymbol\theta)=\sum_\nu\lambda_\nu\langle O_\nu\rangle_{\boldsymbol\theta}$ to be a weighted sum of expectation values of observables with support on at most $r$ qubits, and write $\partial_\mu C=\partial C/\partial\theta_\mu$.
\textsc{Train} requires that $\operatorname{Var}_{\boldsymbol\theta}[\partial_\mu C]$ at random initialization decays at most polynomially with $n$, which excludes a barren plateau.

For trainability, we rely on the guarantee available for finite local-depth circuits:
\begin{definition}[Finite local-depth circuit]
  \label{def:fldc}
  For a parametrized $n$-qubit circuit decomposed into unitary blocks,
  \[
    U(\boldsymbol\theta)=U_M(\boldsymbol\theta_M)\cdots U_1(\boldsymbol\theta_1),
  \]
  the maximum block support and the local depth are
  \[
  \begin{aligned}
    \beta &= \max_k |\supp(U_k)|,\\
    \chi &= \max_q \bigl|\{k:q\in\supp(U_k)\}\bigr|,
  \end{aligned}
  \]

  where $\supp(U_k)$ denotes the qubits acted on by block $U_k$, so $\chi$ is the maximum number of blocks acting on any one qubit, see Fig.~\ref{fig:partition}(d).
  A family of such circuits is a finite local-depth circuit (FLDC) family if it admits decompositions for which $\beta$ and $\chi$ remain bounded independently of $n$.
  We call a circuit an FLDC when it belongs to such a family.
\end{definition}

When the blocks are drawn independently from local 2-designs, Zhang, Liu, and Zhang show that $\operatorname{Var}_{\boldsymbol\theta}[\partial_\mu C]\ge g$~\cite{ZhangLiuZhang2024FLDC}. The constant $g$ depends only on $r$, $\beta$, $\chi$, and the coefficients $\lambda_\nu$ of the cost terms whose support overlaps the block containing $\theta_\mu$. It is therefore independent of $n$ whenever the sum of the squares of these coefficients is bounded away from zero. The bound holds for any total depth and is stated in full in App.~\ref{app:train}.

\section{Assessing circuit families for distributed execution}
\label{sec:screening}

This section turns the three requirements into criteria that can be checked for a given circuit family, and then applies them to circuit families from the literature.
Each criterion is a screen, not a certificate.
Passing all of them identifies a candidate, not a demonstrated quantum advantage.
A failure excludes a family only where the criterion states a necessary condition.

\subsection{Assessment framework}
\label{subsec:inputs}

The assessment takes two inputs.
The first is a \emph{circuit generator}, an algorithm that returns, in time polynomial in $n$, the gate list, qubit supports, observables, and parameters of the $n$-qubit circuit.
A generator specifies a family, so the criteria can ask how resources scale with $n$ rather than how a single instance behaves.
The second input is a \emph{device model} $(w,D,M,\varepsilon,\delta)$.
It allows $D$ subcircuits of at most $w$ qubits each and a budget of $M$ shots per expectation value, which is to be estimated to additive error $\varepsilon$ with failure probability at most $\delta$.
An experimental budget must additionally account for the subcircuit executions behind each shot and for the number of observables.

The partition is chosen as part of the assessment.
For a fixed gate representation, the smallest interface over balanced partitions is
\begin{equation}
  \kstar(n)=\min_{\Pi\ \mathrm{balanced}}\kappa(\Pi),
  \label{eq:balanced-cut}
\end{equation}
where balanced means that every subcircuit holds $n/D$ qubits up to a fixed relative tolerance.
Every candidate subcircuit must also fit within $w$ qubits.
Finding this minimum is computationally hard, so a heuristic search returns only an upper bound on $\kstar(n)$.
A small bound exhibits an affordable partition.
A large bound proves nothing, because a better partition may exist.
We therefore report the partition that was found and never treat a failed search as an exclusion.

\subsection{Assessment criteria}
\label{subsec:criteria}

Table~\ref{tab:properties} lists the eight criteria, grouped by the requirement they test.
Each criterion belongs to one of three tiers, which record the kind of evidence it needs.
Tier-1 criteria are structural checks computed from the generator alone.
Tier-2 criteria are numerical diagnostics at sizes that a classical simulator can reach.
Tier-3 criteria are theoretical obligations, met only by a proof or a citation and otherwise recorded as open.
Open entries are kept separate from failures throughout.

\tableProperties

\textit{Knitting cost} (\textsc{Cut}).
Criterion C1 asks whether a partition can be knitted within the budget.
Equation~\eqref{eq:prelim-shots} gives a worst-case sample budget for the stated quasiprobability estimator, and we use its inverse as a conservative norm-budget threshold,
\begin{equation}
  \Gamma_{\max}=\varepsilon\sqrt{\frac{M}{2\ln(2/\delta)}}.
  \label{eq:gamma-budget}
\end{equation}
C1 passes when $\Gcut\le\Gamma_{\max}$, while a larger norm does not, by itself, exclude a task-specific or variance-aware estimator.
Inserting the worst-case norms of Eqs.~\eqref{eq:prelim-independent} and~\eqref{eq:prelim-joint} turns this into a largest number of severed gates, for $\Gamma_{\max}\ge1$,
\begin{equation}
\begin{aligned}
  \kappa_{\max}^{\mathrm{joint}}
    &=\left\lfloor\log_2\frac{\Gamma_{\max}+1}{2}\right\rfloor,\\
  \kappa_{\max}^{\mathrm{ind}}
    &=\left\lfloor\log_3\Gamma_{\max}\right\rfloor.
\end{aligned}
\label{eq:kmax}
\end{equation}
For the ancilla-free joint construction assessed here, the severed gates share a time slice.
Both counts grow only logarithmically with $M$, and App.~\ref{app:cost-regimes} evaluates them for typical budgets.
Each additional severed gate multiplies the worst-case shot count by nine for independent decompositions and by a factor approaching four for joint ones.
When the gate angles are known, the actual norm replaces these worst-case counts.

Criterion C1 answers only half of the practical question, because the subcircuits must also fit on the devices.
For a fixed partitioning strategy, let $n_{\mathrm{cut}}$ be the largest size at which knitting remains affordable and $n_{\mathrm{width}}\le Dw$ the largest size at which the subcircuits fit.
The family can then be run up to $n_{\max}=\min(n_{\mathrm{cut}},n_{\mathrm{width}})$.
We call it \emph{width-limited} when $n_{\mathrm{cut}}\ge n_{\mathrm{width}}$, since the hardware sets the reach.
Figure~\ref{fig:crossing} shows both cases.
A bounded interface whose norm is within budget is width-limited, while an interface whose norm grows with $n$ eventually becomes cut-limited.
Asymptotic and finite-size conclusions can differ.
A constant knitting cost may still exceed the budget, and a growing interface may stay affordable over the sizes of interest.

\figCrossing

Criterion C2 gives a lower bound that no exact state-reconstructing decomposition in the stated class can beat.
It rests on the following identity, proved in App.~\ref{app:cost}.
\begin{proposition}[Cost--entropy identity]
\label{prop:cost-entropy}
For a pure target $|\psi\rangle$ across $A{:}B$, the minimum coefficient one-norm of a signed decomposition into normalized separable states is
\begin{equation}
  \Gsep(\psi)=2\cdot2^{\Ehalf(\rho_A)}-1.
  \label{eq:cost-entropy}
\end{equation}
Any exact quasiprobability knitting from a product input, whose branches use local operations and classical communication, satisfies $\Gcut\ge\Gsep(\psi)$.
\end{proposition}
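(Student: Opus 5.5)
The identity \eqref{eq:cost-entropy} is the pure-state value of the (generalized) robustness of entanglement of Vidal and Tarrach, rewritten in terms of $\Ehalf$. I will therefore prove it as a primal--dual pair of bounds on a linear program and then reduce the knitting statement to it. Write the Schmidt decomposition $|\psi\rangle=\sum_a\sqrt{\lambda_a}\,|a\rangle|a\rangle$ and set $N=(\sum_a\sqrt{\lambda_a})^2=2^{\Ehalf(\rho_A)}$. Any signed decomposition $\psi=\sum_i c_i\sigma_i$ with normalized separable $\sigma_i$ can be grouped as $\psi=p\,\sigma_+-q\,\sigma_-$, where $p,q\ge0$ and $\sigma_\pm$ are separable. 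Taking the trace gives $p-q=1$, so $\sum_i|c_i|\ge p+q=1+2q$. Minimizing the one-norm is therefore the same as minimizing $q$, which is the robustness $R(\psi)$. The claim is equivalent to $R(\psi)=N-1$.

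\emph{Upper bound.} I will write down an explicit decomposition. Let $\phi_{\boldsymbol\theta}=\sum_a\lambda_a^{1/4}e^{i\theta_a}|a\rangle$, suitably normalized. The states $|\phi_{\boldsymbol\theta}\rangle|\phi_{-\boldsymbol\theta}\rangle$, averaged over independent uniform phases, reproduce the coherences $\sqrt{\lambda_a\lambda_b}\,|aa\rangle\langle bb|$. Mixing this average with diagonal product terms of the form $|a b\rangle\langle ab|$ then gives $\psi+(N-1)\sigma_-=N\sigma_+$ with $\sigma_\pm$ separable. This is the standard Vidal--Tarrach construction, and the main work is checking the weights; it yields $\Gsep\le 2N-1$.

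\emph{Lower bound.} This is the step I expect to be the main obstacle. By convex duality, $q\ge\operatorname{Tr}(W\psi)-1$ for every Hermitian $W$ satisfying $0\le\langle\alpha\beta|W|\alpha\beta\rangle\le1$ on all product vectors. The reason is that $\operatorname{Tr}(W\psi)=p\operatorname{Tr}(W\sigma_+)-q\operatorname{Tr}(W\sigma_-)\le p=1+q$. I need a witness with $\operatorname{Tr}(W\psi)=N$. The partial-transpose witness $Y=\operatorname{sign}(\psi^{T_B})$ only gives $q\ge(N-1)/2$, which is too weak by a factor of two. My first attempt will therefore be a witness built from Schmidt-basis coherences, $W=\sum_{a,b}w_{ab}|aa\rangle\langle bb|$ with $w_{ab}$ proportional to $\lambda_a^{-1/4}\lambda_b^{-1/4}$ times a normalization. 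I will use Cauchy--Schwarz on product vectors, $|\sum_a\alpha_a\beta_a w_a|^2\le\ldots$, to bound $\langle\alpha\beta|W|\alpha\beta\rangle\in[0,1]$, and then tune the normalization so that $\operatorname{Tr}(W\psi)=N$. If this bookkeeping fails, I will fall back on citing the Vidal--Tarrach pure-state result $R(\psi)=(\sum_a\sqrt{\lambda_a})^2-1$ directly.

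\emph{Knitting bound.} Fix an exact knitting $\mathcal U=\sum_\alpha c_\alpha\mathcal F_\alpha$ applied to the product input $|0^n\rangle$. Each branch uses local operations and classical communication, with outcome-dependent signs. Its output can therefore be written as $\mathcal F_\alpha(|0^n\rangle\langle0^n|)=\tau^+_\alpha-\tau^-_\alpha$, where $\tau^\pm_\alpha$ are separable and subnormalized with $\operatorname{Tr}\tau^+_\alpha+\operatorname{Tr}\tau^-_\alpha\le1$. This holds because LOCC maps product states to separable ones, and recording signs only splits the outcome ensemble. Normalizing each piece gives a signed separable decomposition of $\psi$ with one-norm at most $\sum_\alpha|c_\alpha|=\Gcut$. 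Minimality of $\Gsep$ then gives $\Gcut\ge\Gsep(\psi)$.
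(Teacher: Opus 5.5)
Your overall route matches the paper's. The paper also groups a signed decomposition as $p_+\omega_+-p_-\omega_-$ to get $\Gsep=1+2R(\psi)$, cites Vidal--Tarrach for $R(\psi)=(\sum_a\sqrt{\lambda_a})^2-1$, and obtains $\Gcut\ge\Gsep$ exactly as you do, from the separability of LOCC branch outputs on a product input. Your $\Tr\tau^+_\alpha+\Tr\tau^-_\alpha\le1$ is harmless there. Your explicit upper bound is correct: the phase average satisfies $N\,\overline{\phi_{\boldsymbol\theta}\otimes\phi_{-\boldsymbol\theta}}=\psi+\sum_{a\neq b}\sqrt{\lambda_a\lambda_b}\,|ab\rangle\langle ab|$, and the last term is a separable operator of trace $N-1$.

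The self-contained lower bound would fail with the witness you plan to use. Take $W=\kappa|v\rangle\langle v|$ with $|v\rangle=\sum_a\lambda_a^{-1/4}|aa\rangle$. Testing the product constraint on $\alpha=\beta$ equal to the Schmidt vector of smallest weight forces $\kappa\le\lambda_{\min}^{1/2}$. Then $\Tr(W\psi)\le\lambda_{\min}^{1/2}(\sum_a\lambda_a^{1/4})^2$, which is strictly less than $N$ unless the Schmidt spectrum is flat, so no normalization reaches $N$.

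The weights should be uniform. Let $W=|\Phi\rangle\langle\Phi|$ with $|\Phi\rangle=\sum_a|aa\rangle$ summed over the Schmidt support. It is positive, it satisfies $\langle\alpha\beta|W|\alpha\beta\rangle=|\sum_a\alpha_a\beta_a|^2\le1$ by Cauchy--Schwarz, and it gives $\Tr(W\psi)=(\sum_a\sqrt{\lambda_a})^2=N$, hence $q\ge N-1$. This is the paper's Lemma~\ref{lem:overlap} for $r=1$, rescaled by the Schmidt rank.

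This $W$ is precisely the partial transpose of the witness you dismissed. $Y=\operatorname{sign}(\psi^{T_B})$ is the swap on the Schmidt support, and $Y^{T_B}=|\Phi\rangle\langle\Phi|$ is positive, so its product expectations lie in $[0,1]$ rather than $[-1,1]$. Your factor-of-two loss came only from using the cruder range. With this fix, or with your fallback citation, the proof is complete.
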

The first statement is the robustness of entanglement of a pure state~\cite{VidalTarrach1999}, expressed through the seam entropy, and the second is its consequence for exact knitting.
Lower bounds of this kind on the sampling overhead of quasiprobability decompositions are known for gates and wires~\cite{PiveteauSutter2024Knitting,Brenner2023WireCutting}.
The version here is stated for the output state, through its seam entropy, and therefore holds for every decomposition of every gate set at once.
The bound applies to one state and one partition.
If $\Gsep>\Gamma_{\max}$, no exact knitting protocol of this kind is affordable at that partition within the norm-budget model.
Excluding a family requires the bound at every admissible partition, so a large entropy at a single partition is not enough.
Conversely, a small $\Gsep$ does not provide an affordable protocol.
The identity concerns exact reconstruction of the state, not every method that estimates a single observable.

The interface limits the seam entropy but does not fix it.
Starting from a product state, each severed two-qubit Pauli rotation at most doubles the Schmidt rank across the partition, so $\Ehalf(\rho_A)\le\kappa$ whatever the subcircuit-local gates are (Proposition~\ref{prop:port-capacity} in App.~\ref{app:cost-bounded}).
The entropy actually reached can lie anywhere below this bound.
C2 is therefore evaluated on the states the task requires, including the trained states of a variational circuit.

\textit{Classical difficulty} (\textsc{Sim}).
An affordable cut matters only if classical difficulty survives it.
Criterion S1 rests on the following reduction.
\begin{theorem}[Efficient knitting preserves classical simulability]
\label{prop:stitching}
Let a bipartition sever $\kappa=O(\log n)$ gates, each replaced by at most $m=O(1)$ efficiently specified local terms with uniformly bounded coefficient norm and no communication between subcircuits.
If the branch tasks of both subcircuits are in \textsf{BPP}, uniformly over branches and to inverse-polynomial additive error, then estimating $\langle O\rangle$ for observables of bounded support is also in \textsf{BPP}.
\end{theorem}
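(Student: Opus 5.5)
The plan is to write $\langle O\rangle$ as the signed branch sum of Eq.~\eqref{eq:prelim-circuit-qpd} and check three things. The sum has polynomially many terms. Each term factorizes into classically estimable branch quantities. The errors accumulated over the sum stay inverse-polynomial.

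\emph{Counting branches.} With $\kappa=O(\log n)$ severed gates and at most $m=O(1)$ terms each, there are at most $m^\kappa=n^{O(\log m)}=\poly(n)$ branches. Uniformly bounded coefficient norm, $\gamma_s\le\gamma$, gives $\Gcut\le\gamma^\kappa=\poly(n)$. Because the decompositions are efficiently specified, each $c_\alpha$ and each branch map $\mathcal F_\alpha$ can be written down in polynomial time.

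\emph{Factorizing a branch.} Since there is no communication between subcircuits, $\mathcal F_\alpha$ acts as a tensor product $\mathcal F^A_\alpha\otimes\mathcal F^B_\alpha$ on the product input $|0\rangle\langle0|^{\otimes n}$. The inserted measurements only attach classical sign or outcome factors, and each of these is local to one side. Decompose $O$ into Pauli products; bounded support gives at most $4^{r}$ terms, so it suffices to take $O=O_A\otimes O_B$. The branch contribution is then $\Tr[O_A\mathcal F^A_\alpha(\rho_A)]\cdot\Tr[O_B\mathcal F^B_\alpha(\rho_B)]$.

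Each factor is a product of local weighted outcomes, including the recorded signs. I would treat it as the expectation of a bounded local observable of one subcircuit. This is exactly a branch task, and it is estimable in \textsf{BPP} to additive error $\eta$ with failure probability $\delta'$. To make this rigorous, I would enlarge the notion of "observable" to include the classical sign multiplying the outcome. For this I would use the deferred-measurement principle, which replaces mid-circuit measurements and recorded signs by an ancilla register and a bounded diagonal observable. The ancilla count is $O(\kappa)$, so the enlarged observable still has bounded norm, though its support grows by $O(\log n)$ ancillas. The main obstacle is whether "local expectation values" in the hypothesis covers this slightly enlarged support. I would either read the uniformity hypothesis as including it, or note that each ancilla is touched by only one severed gate, so each factor is a product of $O(1)$-local terms per gate.

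\emph{Error bound.} For factors bounded by $1$, $|\hat a\hat b-ab|\le\eta(2+\eta)$. Summing over branches gives a total error of at most $3\eta\,\Gcut\cdot 4^{r}$. I would choose $\eta=\varepsilon/(3\cdot4^r\Gcut)$, which is inverse-polynomial because $\Gcut=\poly(n)$. A union bound over the $2\cdot4^r m^\kappa$ calls, each run at failure probability $\delta/\poly(n)$, requires only logarithmic overhead via median-of-means. The total runtime is therefore polynomial, and the procedure achieves error $\varepsilon$ with probability $1-\delta$, placing the estimation of $\langle O\rangle$ in \textsf{BPP}.
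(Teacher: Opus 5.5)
Your proposal is correct and matches the paper's proof of Theorem~\ref{thm:stitch} essentially step for step. Like the paper, you Pauli-expand $O$ into at most $4^r$ product terms with coefficients bounded by $\lVert O\rVert\le1$, factor each branch into sign-weighted subcircuit expectations, estimate all $2\cdot4^r m^\kappa$ factors to $\eta=\varepsilon/(3\cdot4^r\Gcut)$, and boost the success probability by median repetition. The support issue you flag is handled in the paper simply by defining the branch task on the sign-weighted branch states $\sigma^A_\alpha$ and $\tau^B_\alpha$, which is your first option, so the deferred-measurement detour is unnecessary. Your second option would not help on its own, because the recorded sign is a product of up to $\kappa$ ancilla $Z$'s rather than a bounded-support observable.
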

The proof, in Theorem~\ref{thm:stitch} and Corollary~\ref{cor:trenchcoat} of App.~\ref{app:sim}, expands the observable into products of Pauli operators on each subcircuit and estimates every branch separately.
With $\kappa=O(\log n)$ there are at most $m^\kappa$ branches and $\Gcut$ is polynomial, so polynomially many branch estimates at inverse-polynomial precision suffice.
An efficient cut whose branch tasks are classically easy is thus classically easy as a whole.

Two consequences follow for S1.
Simulating the subcircuits with the severed gates simply deleted does not settle it, because the branch tasks include the inserted local operations.
Likewise, a hardness result supports S1 only if it concerns these branch tasks.
Appendix~\ref{app:hard} constructs a family in which such hardness sits inside one subcircuit of a bounded-interface cut.
The observation behind the theorem, that knitting classically simulable subcircuits yields a classically simulable estimate, is made in Ref.~\cite{PiveteauSutter2024Knitting}, and the hybrid simulation of Ref.~\cite{BravyiSmithSmolin2016} uses the same expansion with a cost exponential in the number of severed gates.
Theorem~\ref{prop:stitching} states it with the branch tasks as the unit of hardness, which is what criterion S1 needs.

Criteria S2 and S3 determine how simulation evidence is read.
S2 records the backward causal cones of the required observables, that is, the gates that can influence them and the qubits these gates act on, including the local operations inserted by the cutting branches.
Uniformly bounded cones permit direct classical simulation, whereas growing cones do not establish hardness (Proposition~\ref{prop:lightcone}).
The generator-level diagnostic in App.~\ref{app:screen} reports the largest probed gate-list cone, whose interpretation depends on the stated circuit representation and probe set.
S3 checks that a cited simulation or hardness result concerns local expectation values at the required accuracy.
Hardness of sampling alone does not meet it.
Together, S2 and S3 separate evidence against an advantage from the mere absence of a known simulation.

\textit{Gradient resolution} (\textsc{Train}).
For variational families, criterion T1 records $\beta$ and $\chi$ for an explicit block decomposition and checks whether the family is an FLDC.
T2 records whether the bound $g$ of Sec.~\ref{sec:preliminaries} applies, how it scales, and how large it is.
A size-independent $g$ can still be too small to help at a given budget.
Neither a failed T1 nor a small $g$ establishes a barren plateau, and Proposition~\ref{prop:design-plateau} gives a sufficient condition for one.
T3 therefore measures the gradient variance directly over a range of sizes, for a stated initialization ensemble and cost normalization.
A variance that does not decay supports trainability only over the tested range.
Random initialization and a warm start answer different questions, and T3 concerns the former.

In practice, one first fixes the task and budget, searches for partitions, and runs the Tier-1 checks.
Tier-2 diagnostics then refine the families that survive, and the Tier-3 obligations decide which claims can be made.
Finally, the reachable sizes must be compared with the sizes at which classical methods still perform the same branch tasks at matched accuracy.

\subsection{Scaling regimes}
\label{subsec:regimes}

The criteria also constrain how a computation may be distributed.
Growing the subcircuits and growing their number, the two ways of scaling a distributed computation, have opposite consequences.
\begin{theorem}[Scaling regimes for distributed circuit cutting]
\label{prop:regimes}
With $D=O(1)$ and $w=\Theta(n)$, an interface capacity $\kappa(\Pi)=O(1)$ gives $\Gcut=O(1)$, and $\kappa(\Pi)=O(\log n)$ gives $\Gcut=\poly(n)$.
With $w=O(1)$ and $D=\Theta(n/w)$, a connected arrangement severs at least $D-1$ gates.
If $|\sin\theta_s|$ is bounded below by a positive constant on a constant fraction of them, independent cutting has $\Gcut=2^{\Omega(D)}$, and the joint decomposition has the same scaling whenever its time-slice condition holds.
Independently, a pure target with $\Ehalf(\rho_A)=\Omega(\sqrt n)$ across a balanced grouping of the subcircuits obeys $\Gcut=2^{\Omega(\sqrt n)}$ for the exact knitting protocols of Proposition~\ref{prop:cost-entropy}.
\end{theorem}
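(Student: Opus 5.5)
The plan is to prove the three regimes separately, each from a result already stated: the norm formulas \eqref{eq:prelim-independent} and \eqref{eq:prelim-joint}, and Proposition~\ref{prop:cost-entropy}.

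\emph{Growing subcircuits.} With $D=O(1)$ and $w=\Theta(n)$, the crossing gates are $\kappa(\Pi)$ two-qubit Pauli rotations. Decomposing each one optimally and independently, Eq.~\eqref{eq:prelim-independent} gives $\Gcut\le\Gamma_{\mathrm{ind}}\le3^{\kappa}$. This bound does not depend on the subcircuit-local gates, because the branch maps act only at the crossing gates and the local gates contribute unit norm. If $\kappa=O(1)$, then $\Gcut=O(1)$. If $\kappa\le c\log_2 n$, then $3^{\kappa}\le n^{c\log_2 3}=\poly(n)$. For gates that are not Pauli rotations, I would assume any fixed-norm decomposition with $\gamma_s\le\gamma_{\max}=O(1)$, which gives $\gamma_{\max}^{\kappa}$ and the same conclusions.

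\emph{Many small subcircuits.} With $w=O(1)$ and $D=\Theta(n/w)$, I would define the quotient graph on subcircuits, with an edge wherever some gate crosses between two of them. A connected arrangement means this graph is connected on $D$ vertices, so it has at least $D-1$ edges, and each edge is witnessed by a distinct severed gate. For independent cutting, $\Gcut=\prod_s(1+2|\sin\theta_s|)$. By hypothesis, at least $cD$ of these factors are at least $1+2s_0>1$, and every other factor is at least $1$. Hence $\Gcut\ge(1+2s_0)^{c(D-1)}=2^{\Omega(D)}$. When the severed gates share a time slice, the joint formula in Eq.~\eqref{eq:prelim-joint} gives $2(1+s_0)^{c(D-1)}-1=2^{\Omega(D)}$. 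There is one subtlety. A minimum over decompositions is needed, not just the specific construction, so I would state the result for the stated optimal constructions, as the theorem does.

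\emph{Entropy regime.} I would group the subcircuits into two balanced halves $A{:}B$ and treat the multi-part knitting as an exact protocol across this coarser bipartition. Local operations within each half remain local to $A$ or $B$. Proposition~\ref{prop:cost-entropy} then gives $\Gcut\ge2\cdot2^{\Ehalf}-1=2^{\Omega(\sqrt n)}$.

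The main obstacle is this coarse-graining step. I need to check that a protocol with branches that are LOCC among $D$ parties is still LOCC for the two-party split, which holds because grouping parties preserves LOCC. I also need to check that the product input stays a product across $A{:}B$, so that Proposition~\ref{prop:cost-entropy} applies verbatim. I would also verify that the crossing count in the second regime counts gates and not just edges, so that the time-slice condition is used only for the joint case.
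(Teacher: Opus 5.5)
Your proposal is correct and follows the paper's proof in App.~\ref{app:cost-regimes} essentially step for step: the bound $\Gcut\le3^{\kappa}$ from Eq.~\eqref{eq:prelim-independent} for growing subcircuits, the independent and joint norm formulas together with $K\ge D-1$ severed gates for the fixed-width regime, and Proposition~\ref{prop:cost-entropy} applied across a balanced grouping for the entropy statement. Your quotient-graph argument for $K\ge D-1$, and your check that grouping the parties keeps the input a product state and the branches LOCC, fill in details that the paper leaves implicit.
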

Theorem~\ref{prop:regimes} is proved in App.~\ref{app:cost-regimes}.
Its first two statements concern the norms of the stated constructions and need not bound the best decomposition of a particular known circuit, whereas the entropy statement holds for every exact protocol in the stated class.
For the cutting constructions and target-state assumptions of the theorem, growing the subcircuits behind a bounded or logarithmically growing interface permits efficient reconstruction, with a constant or polynomial knitting cost, whereas the fixed-width regime incurs an exponential overhead.
Many small subcircuits would stay affordable only if the problem itself split into many weakly coupled pieces, which suggests that such a problem is close to classically parallel from the outset.
A cut can exploit a bottleneck that the physics provides, but it cannot create one.
Even in the favorable regime, a bounded interface supplies neither hardness nor trainability, and both must be established separately.

\subsection{Assessment of existing circuit families}
\label{sec:families}
\tableCircuits

Table~\ref{tab:families} applies the criteria to eighteen circuit families and regimes from the literature.
Each column corresponds to one criterion of Table~\ref{tab:properties}, ordered by tier, and each row records the outcome in a specified regime.
Appendix~\ref{app:screen} reports the Tier-1 criteria for seven explicit generators associated with these families and a constant-depth control, over sizes from 7 to 517 qubits (Fig.~\ref{fig:tier1-screen}).

Four patterns emerge among the assessed families.

\emph{Most assessed families lack one or more structural guarantees.}
Every conventional variational circuit in the table (rows 1--6) has a local depth $\chi$ that grows with system size, through repeated ansatz layers or, for QAOA, through the swaps added by routing on a planar device.
For ADAPT-VQE with the assessed generalized operator pool (row ~4), the block support $\beta$ grows as well; the verdict concerns this FLDC screen, not all adaptive variants~\cite{Tang2021QubitADAPT,Grimsley2023ADAPTBP}.
The bounded-cone statement for row~5 applies only at fixed $p$ and bounded logical degree, not to dense graphs or the conservative routed gate-list cone of App.~\ref{app:screen}.
Where evaluated, the formal bound $g$ is strongly suppressed, reaching values as small as $10^{-2167}$ for row~4.
This makes the FLDC bound uninformative at the assessed sizes, since it does not establish a barren plateau.
For the hardware-efficient ansatz and deep scrambling circuits (rows~3 and~6a), gradient suppression is supported separately by the T3 assessment.
Several assessed constructions have an interface that grows with the linear size $L$ of the partition boundary (rows 8b, 9, 10, 11a, 12a).
The $S_3$ example in row~9 is distinct from the related $D_4$ preparation settings of Refs.~\cite{Tantivasadakarn2023Shortest,Iqbal2024NonAbelian}; its fixed-state assessment does not cover fusion-assisted universal computation.
For them, knitting is exponentially expensive in $L$.

\emph{Affordable cutting is typically associated with classically tractable subcircuits.}
Sequential matrix-product-state preparation (row 7), the Clifford toric code with a bounded contact (row 8a), and strongly injective isometric tensor-network states (row 11b) all admit a constant interface.
Their subcircuits, however, are classically tractable, and by Theorem~\ref{prop:stitching} so is the knitted computation.
Row~7 refers to the sequential unitary construction; logarithmic-depth unitary~\cite{Malz2024LogDepth} and adaptive constant-depth~\cite{Smith2024ConstantDepth} preparations are alternative resource models.

For the assessed tasks, efficient classical simulation of the branch contributions also makes the reconstructed expectation values classically tractable.
The easy verdict for the Hubbard ladder (row~1b) concerns the ground-state reference calculation, while efficient simulation of all required variational branches would need a separate argument.

\emph{Few assessed families exhibit both classical hardness and affordable cutting.}
The families with the strongest hardness evidence carry seam entropy that grows with system size.
Sequential isometric tensor-network states with a transparent seam (row 11a) are hard in the worst case, but their seam entropy grows as $\sqrt n$, so no exact knitting protocol is affordable (Proposition~\ref{prop:cost-entropy}).
The hardness ticks in rows~6a and~6b refer to sampling evidence, not a verified S1 result for local-observable branches.
The Pauli-propagation guarantee of Ref.~\cite{Angrisani2024Classical} is ensemble dependent and polynomial at fixed error, becoming quasi-polynomial at inverse-polynomial accuracy; it does not by itself supply the uniform branch simulator required by S1.
Weakly injective isometric tensor-network states with a bounded interface (row 11c) provide a candidate with worst-case hardness; numerical trainability for a specified initialization ensemble remains a separate assessment.
Corollary~\ref{cor:coexistence} in App.~\ref{app:hard} shows that such families exist, but it does not identify natural instances.
The injectivity-dependent local-observable results are those of Ref.~\cite[Sec.~V]{MalzTrivedi2025}, with $\eta=\delta^2D^2$ in that source; Refs.~\cite{Haghshenas2022TNCircuits,Slattery2021isoTNSCircuits} provide related circuit and variational context.

\emph{Topological state families provide a benchmark for seam cost.}
For the Clifford toric code with an extended seam (row 8b), the knitting cost is known and grows exponentially with the seam length (Theorem~\ref{thm:seam-witness} and Corollary~\ref{cor:toric-exact} in App.~\ref{app:cost-transparent}), which follows from the analytic seam entropy $L-1$ of its fixed-point stabilizer state.
The non-Abelian examples (rows 9 and 10) show that richer anyon content does not change this picture.
The easy verdicts in rows~9 and~10 are restricted to the stated fixed-state or short-braid benchmarks, while neither nonuniversal braiding nor a preparation reference alone establishes efficient simulation of arbitrary cutting branches.
The variational toric code with a transparent seam (row 12a) inherits this exponential seam cost when its states retain enough of the fixed-point seam correlations (Corollary~\ref{cor:perturbed}), and at the fixed point the cost $2^L-1$ fits the norm threshold of Eq.~\eqref{eq:gamma-budget} only up to $L\le6$ for $M=10^9$, $\varepsilon=0.01$, and $\delta=0.05$.

Our single-bridge application (row 12b) severs one gate, so its Schmidt rank across the seam is at most two and its knitting norm $1+2|\sin\phi|$ never exceeds three.
Its local depth is bounded, and its sweeps are applied sequentially.
The guarantee constant does not apply because the fixed encoder, the bridge, and the anchor are not drawn from the local 2-designs that the bound of Sec.~\ref{sec:preliminaries} assumes.
We therefore measure the gradient variance of criterion T3 directly (Sec.~\ref{subsec:app-tier2}).

No single criterion decides the question.
A cheap seam does not imply an advantage, while a large seam entropy at every admissible partition excludes exact knitting regardless of the decomposition.
Passing every structural and numerical check still leaves the theoretical obligations open, and it is these obligations, above all S1, that decide whether distribution can deliver an advantage.

\section{A single-bridge distributed ground-state experiment}
\label{sec:application}

\figExperiment

The screen of Sec.~\ref{sec:screening} identifies bounded-interface FLDCs as a candidate class in which affordable knitting, trainability, and classical difficulty can coexist.
Its Tier-1 criteria follow from the circuit generator alone.
The Tier-2 and Tier-3 criteria depend on the states that a circuit prepares and on its branch tasks, so they can only be decided for a concrete model.
This section tests them on the simplest nontrivial interface, two FLDCs joined by one entangling gate (row 12b of Table~\ref{tab:families}).
The bounded interface controls the knitting cost, but it does not determine the quality of the variational state, the resolution of its gradients, or the classical difficulty of its branch tasks.
Because the branch tasks remain classically tractable at the sizes we run, the device results can be compared with classical references, which makes this a verifiable test of what bridge reconstruction recovers beyond independent execution.

Section~\ref{subsec:app-model} fixes the model and the circuit (App.~\ref{app:endpoints}).
Section~\ref{subsec:app-sim} evaluates the circuit and the ground state classically and prepares the angles (App.~\ref{app:angles}).
Section~\ref{subsec:app-tier2} checks the knitting cost and the Tier-2 criteria in simulation (Apps.~\ref{app:cutting} and~\ref{app:variance}).
Section~\ref{subsec:app-knit} reports the hardware results (App.~\ref{app:hardware}).

Throughout this section, $n$ denotes the number of spins in one patch and $N=2n$ the total number of spins.

\subsection{Model and variational circuit}
\label{subsec:app-model}

We consider two patches $A$ and $B$ of Kitaev's toric code in a parallel magnetic field $0\le h\le1$~\cite{Kitaev2003Anyons,Tupitsyn2010Multicritical,Vidal2009LowEnergy,WuDengProkofev2012}, connected by a contact of strength $t=0.5$ between the facing exterior spins $a\in A$ and $b\in B$,
\begin{equation}
\begin{aligned}
  H(h,t)={}&-(1-h)\sum_{p\in\{A,B\}}\Bigl(\sum_v S_{v,p}+\sum_f P_{f,p}\Bigr)\\
  &-h\sum_{j=0}^{N-1}\bigl(X_j+Z_j\bigr)-t\,Z_aZ_b,
\end{aligned}
\label{eq:app-H}
\end{equation}
where, for every vertex $v$ and face $f$ of patch $p\in\{A,B\}$ and with $X$ and $Z$ the Pauli matrices, the star and plaquette operators are
\begin{equation}
  S_{v,p}=\prod_{e\ni v}X_{e,p},
  \qquad
  P_{f,p}=\prod_{e\in\partial f}Z_{e,p}.
  \label{eq:app-stabilisers}
\end{equation}
Figure~\ref{fig:experiment}(a) shows one of each.
A star on the boundary acts on its two or three edges.
A patch is a rectangle of $n_x\times n_y$ square plaquettes with one spin on every edge, so that the number of spins in one patch is
\begin{equation}
  n=2n_xn_y+n_x+n_y .
  \label{eq:app-n}
\end{equation}

Our task is to prepare the ground state of Eq.~\eqref{eq:app-H} variationally and to estimate its energy.
We use the toric code~\cite{Kitaev2003Anyons} because its zero-field, i.e. $h=0$, ground state needs exactly what FLDCs offer, since no constant-depth local circuit prepares it~\cite{BravyiHastingsVerstraete2006,TikkuKim2022}, whereas a sequential circuit of bounded local depth does~\cite{Higgott2021Encoding,Satzinger2021ToricCode}.
The system is solvable at zero field, where the ground state is the common stabilizer state of the two patches mixed by the contact with a single excited state, and at $h=1$, where it is a product state apart from the contact pair (Apps.~\ref{app:ep-zero} and~\ref{app:ep-polarized}).
In between, the field and stabilizer terms do not commute and no closed form is known. 
The field drives the bulk out of the topological phase through a transition estimated between $h\approx0.25$ and $h\approx0.29$~\cite{Tupitsyn2010Multicritical,Vidal2009LowEnergy,WuDengProkofev2012}, where the gap $\gamma_N(h)$ between the two lowest levels is expected to become small.
This region is the natural target of a variational study at large sizes, since classical methods are also expected to struggle there.
Tensor-network evaluations of the circuit and its branch tasks stop converging within our budget at the hardware sizes near and above the transition (Sec.~\ref{subsec:app-sim}).
Criterion S1 therefore remains open.

The variational circuit mirrors the structure of the Hamiltonian~\cite{ZhangLiuZhang2024FLDC}, see Fig.~\ref{fig:experiment}(b).
A fixed encoder $E_p$ of Hadamard and CNOT gates prepares the zero-field ground state of each patch~\cite{Higgott2021Encoding}, and local sweeps then dress this state toward the ground state at $h>0$ (App.~\ref{app:ep-sweeps}).
A sweep proceeds through the faces row by row, and within each face it applies a pair block, made of $R_Y$ and $R_Z$ rotations on two adjacent spins and one $R_{ZZ}$ rotation between them, to each of the four pairs of consecutive edges in a fixed cyclic order, see inset of Fig.~\ref{fig:experiment}(b).
A single gate generated by the contact term couples the patches, as in the Hamiltonian variational ansatz~\cite{Wecker2015HVA},
\begin{equation}
  R_{ZZ}^{ab}(\phi)=\exp\bigl[-i\phi Z_aZ_b/2\bigr].
  \label{eq:app-bridge}
\end{equation}
It is followed on patch $A$ by an anchor rotation $R_{S_v}(\alpha)=e^{-i\alpha S_v/2}$, generated by a star $S_v$ incident to $a$, and on both patches by further sweeps.
Collecting the sweeps into $V_p^{\mathrm{pre}}$ and $V_p^{\mathrm{post}}$, with $V_A^{\mathrm{post}}$ including the anchor rotation, and writing $E_p$ for the encoders, the prepared state is
\begin{equation}
\begin{aligned}
  |\psi(\boldsymbol\theta)\rangle
  ={}&\bigl(V_A^{\mathrm{post}}\otimes V_B^{\mathrm{post}}\bigr)\,R_{ZZ}^{ab}(\phi)\\
  &\times\bigl(V_A^{\mathrm{pre}}E_A\otimes V_B^{\mathrm{pre}}E_B\bigr)|0^N\rangle .
\end{aligned}
  \label{eq:app-state}
\end{equation}

We use two prefix and two suffix sweeps.
A second sweep lets every pair block act again after all its neighbors have been dressed, which a single sequential sweep cannot provide, while the local depth stays bounded.
Each spin enters at most four pair blocks per sweep and a bounded number of encoder, anchor, and bridge gates, so the circuit is an FLDC in the sense of Definition~\ref{def:fldc}.

We reconstruct the inter-patch bridge $R_{ZZ}^{ab}(\phi)$
using the six-branch decomposition of
Ref.~\cite{MitaraiFujii2021VirtualGate}, illustrated in
Fig.~\ref{fig:experiment}(c). Each branch involves only
patch-local operations, and the target expectation value
is recovered as a weighted sum of the branch results.

Writing $c=\cos(\phi/2)$, $s=\sin(\phi/2)$, and
$d=cs=\frac12\sin\phi$, two branches apply either the
identity or a Pauli $Z$ to both bridge spins, with weights
$c^2$ and $s^2$, respectively. The remaining four branches
pair a signed $Z$ measurement on one bridge spin with a
rotation $R_Z(\pm\pi/2)$ on the other. For either choice
of measured spin, the positive and negative rotations
carry weights $d$ and $-d$, respectively. A signed
measurement means that the circuit continues after the
measurement and the corresponding shot contribution
is multiplied by the recorded eigenvalue $m=\pm1$.

The absolute branch weights sum to the knitting norm
\begin{equation}
\Gamma(\phi)
=c^2+s^2+4|d|
=1+2|\sin\phi|,
\label{eq:app-gamma}
\end{equation}
which depends only on the bridge angle and is independent
of the patch size. App.~\ref{app:cut-energy} combines the branches into the knitted estimator.

\figTierTwo

\subsection{Classical evaluation}
\label{subsec:app-sim}

This section describes the classical methods we use to estimate reference ground energies, the energies of the prepared circuits, and the gradients that train them.
We also perform knitted simulations and check the Tier-2 criteria before the device runs.

Because the bridge is $R_{ZZ}(\phi)=c\,\openone-is\,Z_aZ_b$, the prepared state splits into two product states,
\begin{equation}
  |\psi\rangle=c\,|A_0\rangle|B_0\rangle-is\,|A_1\rangle|B_1\rangle ,
\end{equation}
where $|p_1\rangle$ is the state of patch $p$ with a $Z$ inserted on its bridge spin and $|p_0\rangle$ the state without it.
The energy and its gradient therefore follow from quantities computed on one patch at a time, which lowers the cost from $2^N$ to $2^n$.
 
Up to $N=48$, we evaluate the circuit energy $E_{\boldsymbol\theta}=\langle\psi(\boldsymbol\theta)|H|\psi(\boldsymbol\theta)\rangle$ exactly, with state vectors of one patch.
Beyond that we use a matrix-product-state (MPS) evaluator at bond dimension $64$, which matches the exact energies at $N=48$ to about $10^{-3}$ per spin but does not reach this accuracy within our budget at $N=80$ for $h\ge0.25$ and at $N=120$.
Its cost grows polynomially with the bond dimension, but the bond dimension needed for a given accuracy grows with the width of a patch~\cite{Stoudenmire2012DMRG2D}, making the method not suitable for larger compact patches.
For the ground energy, $E_0$, we use the closed forms at $h=0$ and $h=1$ for all $N$, and exact diagonalization at every field up to $N=24$. 
Beyond that, we use the density-matrix renormalization group (DMRG)~\cite{White1992DMRG} at bond dimension $128$, whose energies change by at most $10^{-4}$ per spin between bond dimensions 64 and 128 at every field up to $N=142$.

The energy is knitted from the six branches of Sec.~\ref{subsec:app-model}, shown in Fig.~\ref{fig:experiment}(c).
Each subcircuit is read out in the all-$X$ and the all-$Z$ basis, which together cover every term of Eq.~\eqref{eq:app-H}.
The six branches, two patches, and two bases give $24$ subcircuit settings.
Each energy is estimated from $M=24{,}576$ shots in total, split among the 24 settings so that subcircuits with a larger variance receive more shots (App.~\ref{app:cut-energy}).
Our noiseless simulations of the subcircuits and the knitting protocol reproduce the exact energies of the full system within about one error bar.

Gradients are obtained from the exact evaluator by reverse-mode differentiation~\cite{jones2020gradients}.
This evaluator is used both to train the angles and to compute the gradient variances of criterion T3 up to $N=48$.
Since the MPS evaluator does not provide reliable gradients at larger sizes within our computational budget, training is restricted to $N\le48$.
The larger patches measured on the device therefore use the zero-field angles, which are exact at $h=0$ and require no training (App.~\ref{app:zf-angles}).
A parameter transfer that supplies trained angles to longer strips without optimization at the target size is described in App.~\ref{app:angles}, but not used in the hardware runs.

\subsection{Simulated checks of Tier-2 criteria}
\label{subsec:app-tier2}

\figHardware
\figContact 

\textit{Knitting cost} (C1).
In noiseless simulations of the protocol, the statistical error of the knitted energy per spin at fixed $M$ does not grow with $N$ at any field, and for $h>0$ it even decreases (Fig.~\ref{fig:tier2}(a)).
This is because the knitting norm of the single bridge, Eq.~\eqref{eq:app-gamma}, stays at most three whatever the patch size, while the energy per spin averages over more terms as the patches grow.
The trained bridge angles give $\Gamma(\phi)$ between $1.009$ and $1.535$, well below the worst case of three (trained angles of App.~\ref{app:optimization}).
The number of shots needed for a given accuracy per spin therefore does not increase with the size, so C1 holds.

\textit{Gradient variance} (T3).
We compute the variance of the gradient of the total energy at $h=0.25$ over uniformly random angles with the exact evaluator, following Ref.~\cite{ZhangLiuZhang2024FLDC}, for strips of $8$ to $38$ spins and compact patches of $24$ to $48$ spins\footnote{Strips have plaquettes dimensions $1\times k$ with $k=1,\dots,6$, giving $N=8$, 14, 20, 26, 32, and 38, and compact patches have $2\times2$, $2\times3$, $2\times4$, and $3\times3$ plaquettes, giving $N=24$, 34, 44, and 48.}.
The variance falls with $N$ but not exponentially over these sizes (Fig.~\ref{fig:tier2}(b)), which is an improvement over circuits whose local depths grow with the patch and whose variances fall by a fifth with every added spin (App.~\ref{app:variance}).
The remaining decrease is a boundary effect, due to the gates near the open ends of a strip, which have a larger variance than gates in the bulk.
Thus, longer strips, which add only bulk gates, have an average variance that levels off with increasing $N$, whereas compact patches, which have a larger share of bulk gates, have a lower variance that levels off only at larger sizes (App.~\ref{app:variance}).
Over the simulated sizes the gradients therefore remain resolvable with polynomially many shots, so T3 holds.
 
\textit{Seam entropy} (C2).
Criterion C2 asks whether the entanglement across the seam stays small as the system grows, since by Proposition~\ref{prop:cost-entropy} it sets a floor on the knitting cost.
We evaluate it for the ground state, which is what a single bridge has to support, and for the prepared states, which are what we knit.

For the ground state, we evaluate Eq.~\eqref{eq:seam-entropy} on its Schmidt spectrum from exact diagonalization, for $N\le24$.
Its seam entropy stays below about $0.2$ bits, and almost all of its weight lies on its two largest Schmidt components, which mirrors the single bridge construction.
For the prepared states, the seam entropy follows from the bridge angle $\phi$ and the expectation values $z_a$ and $z_b$ of $Z_a$ and $Z_b$ just before the bridge (App.~\ref{app:cut-seam}),
\begin{equation}
  \Ehalf(\rho_A)=\log_2\Bigl[1+|\sin\phi|\sqrt{(1-z_a^2)(1-z_b^2)}\,\Bigr].
  \label{eq:app-seam}
\end{equation}
The prepared states also use at most about a fifth of the one bit that Proposition~\ref{prop:port-capacity} allows for a single bridge, with almost no dependence on $N$ (Fig.~\ref{fig:tier2}(c)), so C2 holds.

\subsection{Hardware setup and results}
\label{subsec:app-knit}

We run the knitted circuits on the IBM Nighthawk processor \texttt{ibm\_phoenix} for two sets of sizes, with the angles and the knitting protocol of Sec.~\ref{subsec:app-sim}.
The first set, $N=8$, 14, 20, 24, 26, 32, and 34, uses the trained angles, and the second, the $3\times3$ to $6\times6$ patches with $N=48$, 62, 80, 98, 120, and 142, uses the zero-field angles at every field.
Every size is measured at $h\in\{0,0.1,0.2,0.25,0.3,0.5,1\}$ and $t=0.5$.
In the four branches that measure a bridge spin, the measurement is deferred onto one ancilla, so every subcircuit has $n+1$ qubits, from 5 to 85, with 38 to 2{,}995 two-qubit gates.

To suppress errors, we protect idle qubits against decoherence through dynamical decoupling sequences~\cite{Viola1999DD}, and two-qubit gates through Pauli twirling~\cite{Wallman2016RC}, which turns coherent gate errors into stochastic Pauli errors.
Additionally, we also post-select shots using the inversion test of Ref.~\cite{DragoiEgger2026Transpilation}, which discards shots with readout or relaxation errors at the end of the circuit (App.~\ref{app:device}).
Post-selection keeps most shots at $N=8$ but only a few percent at the largest sizes, so the number of shots that enter an estimate decreases with the size.
No further error mitigation is applied to the raw energies.

The hardware runs are designed as a classically verifiable test of whether an affordable bridge reconstruction recovers physically relevant information that running the two patches independently would miss.
While the measured energies assess how accurately the device reproduces the prescribed coupled circuit, it is the connected contact correlation that tests the information that a bridge-off product state lacks.
The experiment is thus a proof of principle for the value and the limits of bounded-interface reconstruction, not a demonstration of quantum advantage or a faster ground-state solver.

\textit{Energies.}
On the device, the knitted energies reproduce the field dependence of the ideal circuits at every size (Fig.~\ref{fig:hardware}(a,b)).
For the trained angles, they follow the ideal curves, including their maximum near the transition and their ordering by size, and for the zero-field angles they follow the exact linear dependence on $h$ (Eq.~\eqref{eq:ep-zf}).
The magnitudes, however, are reduced by noise.
For the trained angles, the ideal energies lie within $2\times10^{-3}$ per spin of the ground-state reference, so the preparation is accurate and the reduction comes from the device.
At $h=0$, the fraction of the energy signal that survives falls from about $0.8$ at $N=8$ to about $0.04$ at $N=142$, steadily with the routed two-qubit depth of the subcircuits (App.~\ref{app:depth}).
A supplementary rescaling calibrated at $h=0$ is assessed separately in App.~\ref{app:rescaling}, and it is not applied to the raw energies of Fig.~\ref{fig:hardware}.

The energy is, however, a poor measure of what the bridge contributes, since switching the bridge off changes the energy per spin by only $0.031/N$ at zero field, far below the device error (App.~\ref{app:ep-zero}).
The reconstructed dependence of the energy on the bridge angle shows the same limitation (App.~\ref{app:cut-angle}).

\textit{Contact correlation.}
We therefore measure the connected contact correlation of the final state
\begin{equation}
  C_{ab}=\langle Z_aZ_b\rangle-\langle Z_a\rangle\langle Z_b\rangle ,
  \label{eq:contact-corr}
\end{equation}
which vanishes for any product of patch states.
Independent execution thus predicts $C_{ab}=0$ whatever the local parameters, and a nonzero value reveals correlations only the bridge can create.
The energy per spin barely reflects it, since $C_{ab}$ changes it by only $-tC_{ab}/N$.
We knit its three terms from the same branch measurements and combine them into $C_{ab}$ (App.~\ref{app:cut-energy}).
At zero field, its target is $t/\sqrt{16+t^2}\approx0.124$ at every size (App.~\ref{app:ep-zero}).

On the device, knitting recovers a clearly nonzero contact correlation at every size up to $N=98$, which independent execution would miss entirely (Fig.~\ref{fig:contact}).
At $N=8$ the central value reaches about $87\%$ of the target, so knitting removes most of the error of independent execution.
The recovered fraction declines as the subcircuits grow deeper, to about $20\%$ at $N=80$ and 98, broadly following the loss of energy signal, and from $N=120$ on the correlation is no longer distinguishable from zero.
Bounded-interface reconstruction thus recovers a measurable part of the contact information on hardware, within limits that device noise tightens as the subcircuits grow.

\section{Discussion and outlook}
\label{sec:conclusion}

Assessed jointly, affordable knitting, classical difficulty of the branch tasks, and resolvable gradients narrow where a distributed quantum advantage can be sought.
An efficient cut whose branch tasks are classically easy is classically easy as a whole (Theorem~\ref{prop:stitching}), so a low knitting cost is never sufficient on its own.
For the cutting constructions and target states we consider, knitting stays affordable only when the subcircuits grow behind a bounded or logarithmically growing interface (Theorem~\ref{prop:regimes}), since every severed gate multiplies the shot count and even a generous budget pays for only a handful of them.
Among the eighteen assessed families, only one combines such a cut with a hardness result for its branch tasks, and that result holds in the worst case.
The outcomes of Sec.~\ref{sec:screening} fall into the four classes of Table~\ref{tab:classes}.

Local depth, interface size, and branch-task complexity require separate assessment.
A bounded interface controls reconstruction overhead, whereas finite local depth supports a gradient-variance bound only under the stated cost and initialization assumptions.
Extensive causal cones do not establish hardness, as the classically simulable examples illustrate.
\label{subsec:independence}

We tested the bounded-interface FLDC in its simplest form, two toric-code patches joined by a single variational gate and executed as separate subcircuits.
The experiment verifies what the bridge reconstruction recovers beyond running the patches independently, with a classical reference for every result.
This makes it a testbed for practical distributed quantum advantage, since it checks the necessary ingredient that correlations across the cut survive the reconstruction as the subcircuits grow. 
The results show that the favorable geometry behaves as the criteria predict.
In simulation, the single bridge keeps the knitting cost independent of the size, the prepared states use only a small part of the entanglement it can carry, and the gradients remain resolvable.
On the device, knitting recovers a contact correlation that independent execution misses up to 98 spins, and the energies reproduce the field dependence of the ideal circuits up to 142 spins, with a signal that noise attenuates as the routed subcircuits grow deeper.

Moving from these observations toward a distributed advantage requires addressing three aspects.
The first asks whether the circuits one would actually run are hard to simulate classically.
Our application has bounded seam and local depth, but the hardness of its branch tasks is not established.
Tensor-network calculations that stop converging at the hardware sizes are no evidence of hardness, and for bounded-interface subcircuits hardness is known only in the worst case over instances engineered to be hard (Theorem~\ref{thm:hardness}).
What is still missing is a physically motivated Hamiltonian whose bounded-interface circuits keep their branch tasks hard.

The second asks how many processors a single computation can usefully be split across.
Theorem~\ref{prop:regimes} keeps knitting affordable for at most logarithmically many subcircuits joined by single gates, and it is open whether any hard problem can be divided this way.
Splitting into more subcircuits gives every balanced grouping of them a long boundary, and the theorem makes this exponentially expensive for every exact protocol only if the seam entropy of the target grows as $\sqrt n$ across that boundary.
Excluding a family therefore needs such lower bounds, since a heuristic search for partitions can only show that a cheap cut exists, never that none does.

The third asks how to prepare good states once the patches outgrow classical simulation.
A single bridge keeps knitting affordable for any angles.
A vanishing bridge angle, however, also decouples the patches, so affordability alone says nothing about the quality of the prepared state.
Testing trainability and hardness together therefore needs scalable preparation and validation beyond classical references.
\label{subsec:open}

Candidate computations for distributed quantum advantage should therefore be assessed jointly by their reconstruction cost, the classical difficulty of the required branch tasks, and, for variational approaches, gradient resolution, since none of these three properties can be inferred from the others.

\tableClasses

\section{Acknowledgments}
SD acknowledges funding from the ETH Zurich Quantum Center.
MGG acknowledges funding from the EPSRC UK Quantum Technologies Programme under grant EP/T001062/1 and VeriQloud.
LP is supported by the National Research Foundation, Singapore through the National Quantum Office, hosted in A*STAR, under its Centre for Quantum Technologies Funding Initiative (S24Q2d0009). LP is also partially supported by A*STAR under its Young Investigator Research Grant (YIRG) M25N8c0131.

Claude Opus 5.5, Fable 5.1 (Anthropic) and GPT Astra (OpenAI) were used to draft and revise text, suggest references, write and debug analysis and plotting code, and check the manuscript for internal consistency.
The tools also proposed two parts of the additional data analysis in the appendices, namely the transfer of angles between patch sizes (App.~\ref{app:angles}), and the post-correction method used at $h=0$ (App.~\ref{app:rescaling}).
The authors reviewed all outputs and take full responsibility for the content.
The research questions, results, proofs, and experimental design are the authors'.

\section*{Data and code availability}

The code used to generate the circuits and analyze the data will be made publicly available upon publication and is available from the corresponding author upon request.

\newpage
\bibliography{refs}

\appendix
\onecolumngrid
\let\addcontentsline\fldcaddcontentsline
\tableofcontents

\section{Knitting cost and seam entanglement}
\label{app:cost}

This appendix proves Proposition~\ref{prop:cost-entropy} and Theorem~\ref{prop:regimes} and collects every bound on $\Gcut$ used by criteria C1 and C2.
The cost--entropy identity rests on the robustness of entanglement of pure states~\cite{VidalTarrach1999}, and the scaling regimes combine it with the norms of Eqs.~\eqref{eq:prelim-independent} and~\eqref{eq:prelim-joint}.
Throughout, $\rho_A=\Tr_B\rho$, logarithms are base two, and $S(\rho)=-\Tr\rho\log_2\rho$.

\subsection{The cost--entropy identity}
\label{app:cost-identity}

We define the cost of a cut at the level of the output state and relate it to the circuit-level norm $\Gcut$ and to the seam entropy of Eq.~\eqref{eq:seam-entropy}.

\begin{definition}[State-level cutting cost]
\label{def:gsep}
For a bipartite state $\rho_{AB}$,
\begin{equation}
  \Gsep(\rho)=\inf\Bigl\{\textstyle\sum_i|c_i|:\
  \rho=\sum_ic_i\,\sigma_i^A\otimes\tau_i^B,\
  \sigma_i^A,\tau_i^B\ \text{states}\Bigr\}.
  \label{eq:app-gsep-def}
\end{equation}
\end{definition}

\begin{lemma}[Hierarchy]
\label{lem:hierarchy}
Let a circuit with product input $|0^n\rangle$ have its cross-device
channels replaced by quasiprobability decompositions of the form of
Eq.~\eqref{eq:prelim-qpd} whose terms are local operations with classical
communication, possibly including measurements whose outcomes multiply the
estimator by a sign, and let $\Gcut$ be the infimum of their coefficient
norms. Then
\begin{equation}
  \Gsep(\rho_{\rm out})\le\Gcut\le\prod_a\gamma_a,
\end{equation}
where $\rho_{\rm out}$ is the output of the uncut circuit and $\gamma_a$
are the norms of independent per-channel decompositions.
\end{lemma}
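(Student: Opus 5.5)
The plan is to prove the two inequalities separately. The upper bound is the easier one and comes from tensoring the per-channel decompositions. The lower bound comes from pushing any admissible circuit-level decomposition through to the output state and showing that it induces a signed separable decomposition of $\rho_{\rm out}$ with no larger one-norm.

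\emph{Upper bound.} Take independent decompositions $\mathcal V_a=\sum_i a_{a,i}\mathcal F_{a,i}$ with $\sum_i|a_{a,i}|=\gamma_a$, as in Eq.~\eqref{eq:prelim-qpd}. Substituting each into the circuit and expanding multilinearly gives the branch expansion of Eq.~\eqref{eq:prelim-circuit-qpd}. Its coefficients are $c_\alpha=\prod_a a_{a,i_a}$, so
\[
\sum_\alpha|c_\alpha|=\prod_a\gamma_a .
\]
This expansion is itself an admissible decomposition of the required form. Taking the infimum over admissible decompositions therefore gives $\Gcut\le\prod_a\gamma_a$.

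\emph{Lower bound.} Fix any admissible decomposition $\mathcal U=\sum_\alpha c_\alpha\mathcal F_\alpha$, where each $\mathcal F_\alpha$ is an LOCC branch map that may carry a sign-weighted measurement. Write each branch map as
\[
\mathcal F_\alpha=\sum_m s_m\,\mathcal F_{\alpha,m}, \qquad s_m=\pm1,
\]
where $m$ runs over the recorded outcomes and each $\mathcal F_{\alpha,m}$ is a completely positive LOCC instrument element. Summing over $m$ without the signs gives a trace-preserving map.

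Apply this to the product input $|0^n\rangle\langle0^n|$. LOCC maps send product, and more generally separable, states to unnormalized separable states. Hence each $\mathcal F_{\alpha,m}(|0^n\rangle\langle0^n|)$ can be written as $p_{\alpha,m}\,\omega_{\alpha,m}$, with $\omega_{\alpha,m}$ a normalized separable state. The weights satisfy $\sum_m p_{\alpha,m}=1$ by trace preservation.

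Each separable state is a convex combination of product states $\sigma^A\otimes\tau^B$. Substituting these convex combinations yields
\[
\rho_{\rm out}=\sum_{\alpha,m,k}c_\alpha\,s_m\,p_{\alpha,m}\,q_k\;\sigma_k^A\otimes\tau_k^B,
\]
where the $q_k\ge0$ are the convex weights. The one-norm of these coefficients is at most
\[
\sum_\alpha|c_\alpha|\sum_m p_{\alpha,m}=\sum_\alpha|c_\alpha| .
\]
This gives $\Gsep(\rho_{\rm out})\le\sum_\alpha|c_\alpha|$. Taking the infimum over decompositions gives $\Gsep(\rho_{\rm out})\le\Gcut$.

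The local gates that are not cut cause no trouble. They are local unitaries inside every branch, so they are absorbed into the branch maps and preserve separability.

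\emph{Main obstacle.} The delicate step is making the branch-map model precise enough that the decomposition into positive LOCC pieces with signs holds. Three features must be covered: mid-circuit signed measurements, classical communication between rounds, and the fact that the sign may depend on the outcome while the circuit continues afterwards.

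My plan is to formalize each $\mathcal F_\alpha$ as a classically controlled sequence of local instruments followed by a sign function $f(m)\in\{\pm1\}$ of the full outcome record. Then
\[
\mathcal F_\alpha=\sum_m f(m)\,\mathcal E_m ,
\]
where each $\mathcal E_m$ is a composition of local CP maps, hence separable-preserving, and the family $\{\mathcal E_m\}_m$ sums to a trace-preserving map.

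If the paper's maps instead allow post-processing weights of modulus at most one rather than exact signs $\pm1$, the same bound holds with the factor $|f(m)|\le1$, since that only shrinks the one-norm. Deferred measurements on ancillas, as used in the experiment, fit this model once the ancilla is assigned to one side of the bipartition.
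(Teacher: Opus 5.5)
Your proposal is correct and follows the paper's proof. Both arguments push an arbitrary admissible decomposition through the product input, and both use that each branch is an instrument whose sign-weighted outcomes sum to a trace-preserving map, so the branch produces unnormalized separable states of total trace one; the upper bound comes from noting that the product of independent per-channel decompositions is admissible. The only cosmetic difference is that you keep each outcome record $m$ separately, whereas the paper groups the outcomes into the two sign classes $\rho_{i,\pm}$.
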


\begin{proof}
Substituting a decomposition with coefficients $a_i$ into the circuit gives
$\rho_{\rm out}=\sum_ia_i(\rho_{i,+}-\rho_{i,-})$, where $\rho_{i,\pm}$ is
the unnormalized state prepared by branch $i$ when the product of its
outcome signs is $\pm1$. Each branch is an instrument whose outcomes sum to
a trace-preserving map, so $\Tr\rho_{i,+}+\Tr\rho_{i,-}=1$. It acts on a
product input by subcircuit-local circuits interleaved with local operations
and classical communication, so each $\rho_{i,\pm}$ is separable, that is,
a nonnegative combination of product states with weights summing to
$\Tr\rho_{i,\pm}$. Inserting these expansions gives a decomposition of the
form of Eq.~\eqref{eq:app-gsep-def} with coefficient norm $\sum_i|a_i|$.
Taking infima gives the first inequality; the product of independent
per-channel decompositions is one admissible decomposition, which gives the
second.
\end{proof}

\begin{theorem}[Exact pure-state cutting cost]
\label{thm:exact-gsep}
For any state $\rho$, $\Gsep(\rho)=1+2R(\rho)$, where $R$ is the robustness
of entanglement relative to separable states~\cite{VidalTarrach1999}. For a
pure state $\psi$ with Schmidt probabilities $\{\lambda_a\}$,
\begin{equation}
  \Gsep(\psi)=2\Bigl(\sum_a\sqrt{\lambda_a}\Bigr)^2-1
  =2\cdot2^{\Ehalf(\rho_A)}-1 .
  \label{eq:app-exact-gsep}
\end{equation}
\end{theorem}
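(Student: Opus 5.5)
The proof has two parts: the general identity $\Gsep=1+2R$, and the evaluation of $R$ for pure states via the Vidal--Tarrach formula.

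\emph{General identity.} For the lower bound on $R$, take any decomposition $\rho=\sum_i c_i\,\sigma_i^A\otimes\tau_i^B$ from Definition~\ref{def:gsep}. Collect the positive and negative coefficients into $p=\sum_{c_i>0}c_i$ and $q=\sum_{c_i<0}|c_i|$. Taking the trace gives $p-q=1$, and the norm is $p+q=1+2q$.

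If $q>0$, define the separable states $\rho_+=p^{-1}\sum_{c_i>0}c_i\,\sigma_i\otimes\tau_i$ and $\rho_-=q^{-1}\sum_{c_i<0}|c_i|\,\sigma_i\otimes\tau_i$. Then $\rho=(1+q)\rho_+-q\rho_-$, which rearranges to $(\rho+q\rho_-)/(1+q)=\rho_+$, a separable state. By the definition of the robustness relative to separable states, $R(\rho)\le q$. Conversely, any separable $\rho_-$ and $s\ge0$ with $(\rho+s\rho_-)/(1+s)$ separable yield a decomposition of this form with norm $1+2s$, after expanding both separable states into product states. Taking infima on both sides gives $\Gsep=1+2R$.

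One technical point is that separable states may need convex decompositions with many terms. Carathéodory's theorem, in finite dimension, keeps these decompositions finite, so the infimum in Definition~\ref{def:gsep} is over admissible objects and is attained.

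\emph{Pure states.} Vidal and Tarrach give $R(\psi)=\bigl(\sum_a\sqrt{\lambda_a}\bigr)^2-1$. Since $\sum_a\lambda_a=1$, the identity above yields
\[
\Gsep=2\Bigl(\sum_a\sqrt{\lambda_a}\Bigr)^2-1 .
\]
Eq.~\eqref{eq:seam-entropy} gives $2^{\Ehalf}=\bigl(\sum_a\sqrt{\lambda_a}\bigr)^2$, which produces the second form in Eq.~\eqref{eq:app-exact-gsep}.

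To keep the argument self-contained, I would sketch both directions of the Vidal--Tarrach formula.

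For the upper bound on $R$, set $|\psi\rangle=\sum_a\sqrt{\lambda_a}|aa\rangle$ and average over local phase twirls $|a\rangle\mapsto e^{i\theta_a}|a\rangle$ on $A$ and $e^{-i\theta_a}|a\rangle$ on $B$ with product states of the form $|\varphi_\theta\rangle|\varphi_\theta^*\rangle$. This writes $|\psi\rangle\langle\psi|$ as $(1+s)$ times a separable state minus $s$ times a separable state, with $s=\bigl(\sum_a\sqrt{\lambda_a}\bigr)^2-1$.

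For the lower bound on $R$, I would use the witness $W=\openone-\bigl(\sum_a\sqrt{\lambda_a}\bigr)^{-2}\ldots$, or, more cleanly, the fact that $|\langle\alpha\beta|\psi\rangle|^2\le\lambda_{\max}$ does not suffice. The correct tool is the dual witness $\sum_{a,b}\sqrt{\lambda_a\lambda_b}\,|aa\rangle\langle bb|$ modified by its partial transpose. Its expectation on product states is bounded by $\bigl(\sum_a\sqrt{\lambda_a}\bigr)^2$ after normalization, while it evaluates to $\bigl(\sum_a\sqrt{\lambda_a}\bigr)^4$ on $\psi$. Applying it to $\rho+s\rho_-=(1+s)\rho_+$ then forces $s\ge\bigl(\sum_a\sqrt{\lambda_a}\bigr)^2-1$.

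\emph{Main obstacle.} The hard step is this lower bound, that is, optimality. I would first try the negativity route: the partial transpose $\psi^{T_B}$ has trace norm $\bigl(\sum_a\sqrt{\lambda_a}\bigr)^2$. Separable states satisfy $\|\sigma^{T_B}\|_1=1$, and the partial transpose preserves the decomposition, so
\[
\Bigl(\sum_a\sqrt{\lambda_a}\Bigr)^2=\|\psi^{T_B}\|_1\le\sum_i|c_i|\,\|(\sigma_i\otimes\tau_i)^{T_B}\|_1=\sum_i|c_i| .
\]
This bounds $\Gsep$ from below only by $\bigl(\sum_a\sqrt{\lambda_a}\bigr)^2$, not by $2\bigl(\sum_a\sqrt{\lambda_a}\bigr)^2-1$.

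The sharper bound must exploit the trace constraint $p-q=1$. Let $\Pi^-$ be the projector onto the negative eigenspace of $\psi^{T_B}$. Evaluating $\Tr\bigl[\Pi^-(\cdot)^{T_B}\bigr]$ on the negative part gives
\[
q\ge\frac{\bigl(\sum_a\sqrt{\lambda_a}\bigr)^2-1}{2}\cdot 2 ,
\]
because $\Tr\bigl[\Pi^-\sigma^{T_B}\bigr]\ge0$ for separable $\sigma$, and the positive-part contributions are bounded by $1$ each. I would verify this bookkeeping carefully, since it is exactly where the factor of two arises.
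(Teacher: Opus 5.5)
Your main argument is correct and is the paper's proof. You split a signed product decomposition into its positive and negative parts, use the trace to write the norm as $1+2q$, identify the infimum of $q$ with the robustness, and quote Vidal--Tarrach for $R(\psi)$. As a citation-based proof this is complete. Your phase-twirl upper bound also works, once you take $|\varphi_\theta\rangle\propto\sum_a\lambda_a^{1/4}e^{i\theta_a}|a\rangle$.

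The optional self-contained lower bound on $R(\psi)$ does not work as written. Write $F=\sum_a\sqrt{\lambda_a}$.

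\emph{The first witness.} The operator $\sum_{a,b}\sqrt{\lambda_a\lambda_b}\,|aa\rangle\langle bb|$ is $|\psi\rangle\langle\psi|$ itself. It therefore evaluates to $1$ on $\psi$, not to $F^4$.

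\emph{The $\Pi^-$ bookkeeping.} A bound of $1$ per separable state only reproduces the negativity estimate $q\ge(F^2-1)/2$. The trace constraint $p-q=1$ cannot supply the missing factor of two. The fact you need has two parts. First, the partial transpose of a product state is again a product state. Second, on any normalized product vector $|x\rangle|y\rangle$, the projector $\Pi^-$ onto $\mathrm{span}\{(|ab\rangle-|ba\rangle)/\sqrt2\}_{a<b}$ gives $\frac12\sum_{a<b}|x_ay_b-x_by_a|^2\le\frac12$, by Lagrange's identity. Hence $\Tr[\Pi^-\sigma^{T_B}]\le\frac12$ for every separable $\sigma$.

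Now write $\psi=(1+q)\rho_+-q\rho_-$ and drop the nonnegative $\rho_+$ term:
\[
\frac{F^2-1}{2}=-\Tr\bigl[\Pi^-\psi^{T_B}\bigr]\le q\,\Tr\bigl[\Pi^-\rho_-^{T_B}\bigr]\le\frac{q}{2}.
\]
This gives $q\ge F^2-1$, and it uses no trace constraint at all.

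\emph{The same argument untransposed.} This is the witness $W=\openone-\sum_{a,b}|aa\rangle\langle bb|$, whose partial transpose is $2\Pi^-$ on the Schmidt support. It has three properties:
\begin{itemize}
\item $W\le\openone$;
\item $\langle ef|W|ef\rangle=1-|\sum_a e_af_a|^2\ge0$ on product vectors, with components taken in the Schmidt bases;
\item $\Tr(W\psi)=1-F^2$.
\end{itemize}
Applying it to $\psi+q\rho_-=(1+q)\rho_+$, with $\Tr(W\rho_+)\ge0$ and $\Tr(W\rho_-)\le1$, gives the same bound. So your two attempted witnesses are one object, once the weights $\sqrt{\lambda_a\lambda_b}$ are dropped and the constant $\frac12$ is used.
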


\begin{proof}
Let $\rho=\sum_ic_i\,\sigma_i^A\otimes\tau_i^B$. Taking traces,
$\sum_ic_i=1$. Collecting positive and negative coefficients,
$\rho=p_+\omega_+-p_-\omega_-$ with separable states $\omega_\pm$,
$p_+-p_-=1$, and coefficient norm $p_++p_-=1+2p_-$. Conversely, any such
two-term form is a valid decomposition once $\omega_\pm$ are expanded into
products. Hence $\Gsep=1+2\min p_-$, and $\min p_-$ is by definition the
robustness $R(\rho)$: the least weight of a separable state $\omega_-$ such
that $(\rho+p_-\omega_-)/(1+p_-)$ is separable. For pure states Vidal and
Tarrach showed $R(\psi)=(\sum_a\sqrt{\lambda_a})^2-1$~\cite{VidalTarrach1999},
which gives Eq.~\eqref{eq:app-exact-gsep}.
\end{proof}

\begin{proof}[Proof of Proposition~\ref{prop:cost-entropy}]
The identity is Theorem~\ref{thm:exact-gsep}, and the inequality is
Lemma~\ref{lem:hierarchy}.
\end{proof}

For pure targets the identity reads $\Ehalf(\rho_A)=\log_2[(\Gsep+1)/2]$, so state-level cutting cost and seam entropy determine each other.
Attaining $\Gsep$ may require a decomposition that is hard to find, but the lower bounds below do not depend on finding it.

\begin{corollary}[Area-law no-go]
\label{cor:area-law}
Let $\{\rho_n\}$ be pure outputs of distributed circuits. If
$\Gsep(\rho_n)=\poly(n)$, in particular if $\Gcut=\poly(n)$, then
\begin{equation}
  S(\rho_A)\le\Ehalf(\rho_A)=\log_2\frac{\Gsep+1}{2}=O(\log n).
\end{equation}
Hence no family whose targets carry seam entanglement $S(\rho_A)=\Theta(L)$
with $L=\omega(\log n)$ is efficiently cuttable. Conversely,
$\Ehalf=O(\log n)$ implies $\Gsep=\poly(n)$.
\end{corollary}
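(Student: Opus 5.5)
The plan is to chain Lemma~\ref{lem:hierarchy} with Theorem~\ref{thm:exact-gsep} and then use the standard ordering of R\'enyi entropies.

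\emph{Step 1: from $\Gcut$ to $\Gsep$.} Since the outputs $\rho_n$ are produced by distributed circuits with product input and LOCC-type branch terms, Lemma~\ref{lem:hierarchy} gives $\Gsep(\rho_n)\le\Gcut$. A polynomial bound on $\Gcut$ therefore transfers to $\Gsep$, which reduces the ``in particular'' clause to the main hypothesis.

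\emph{Step 2: from $\Gsep$ to the seam entropy.} For pure $\rho_n$, Theorem~\ref{thm:exact-gsep} gives $\Gsep=2\cdot2^{\Ehalf}-1$. Inverting,
\[
\Ehalf=\log_2\frac{\Gsep+1}{2},
\]
and $\Gsep\le\poly(n)$ yields $\Ehalf=O(\log n)$.

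\emph{Step 3: comparing entropies.} To obtain $S\le\Ehalf$, I will use monotonicity of R\'enyi entropies in the order $\alpha$. A direct argument is Jensen's inequality applied to the concave function $\log_2$:
\[
\log_2\sum_a\lambda_a\lambda_a^{-1/2}\ \ge\ \sum_a\lambda_a\log_2\lambda_a^{-1/2}.
\]
Multiplying by two gives $\Ehalf\ge S$.

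\emph{Step 4: the no-go and the converse.} If $S(\rho_A)=\Theta(L)$ with $L=\omega(\log n)$, then $S=\omega(\log n)$. This contradicts $S=O(\log n)$, which Steps 1--3 would force under efficient cuttability, so no such family is efficiently cuttable. Conversely, if $\Ehalf\le c\log_2 n$, then $\Gsep=2\cdot2^{\Ehalf}-1\le2n^{c}$, which is polynomial.

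The main subtlety is only checking that the hypotheses of Lemma~\ref{lem:hierarchy} apply, namely product input and branch terms that are local operations with classical communication. The remaining steps are direct substitutions.
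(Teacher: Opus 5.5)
Your proposal is correct and follows the paper's own proof. It chains Lemma~\ref{lem:hierarchy} ($\Gsep\le\Gcut$) with Theorem~\ref{thm:exact-gsep} inverted for pure states, and then uses $S\le\Ehalf$ from the monotonicity of R\'enyi entropies, which you verify explicitly via Jensen's inequality where the paper only cites it; the converse is the same direct substitution.
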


\begin{proof}
Combine Lemma~\ref{lem:hierarchy} and Theorem~\ref{thm:exact-gsep} with
$S\le\Ehalf$, which holds because R\'enyi entropies do not increase with
their index.
\end{proof}

Subcircuits are executed unmodified, so the entanglement structure inside
each subcircuit, including any area law within it, is untouched by cutting;
the corollary concerns only the entanglement that reconstruction must
supply across the seam. Read in the other direction, a cut of $\kappa$
channels with per-channel norm at most $\bar\gamma$ gives
$\Ehalf(\rho_A)\le\log_2[(\bar\gamma^\kappa+1)/2]\le\kappa\log_2\bar\gamma$,
so a bounded interface forces bounded seam entropy.

Subcircuits run unmodified, so the entanglement inside each of them is untouched by cutting.
The corollary concerns only the entanglement that knitting must supply across the seam.
Read in the other direction, a cut of $\kappa$ gates with per-gate norm at most $\bar\gamma$ gives $\Ehalf(\rho_A)\le\log_2[(\bar\gamma^\kappa+1)/2]\le\kappa\log_2\bar\gamma$, so a bounded interface forces bounded seam entropy.

\begin{example}[Long-range entanglement is not the obstruction]
\label{ex:ghz}
The $n$-qubit GHZ state split into two halves has Schmidt probabilities
$\{\frac12,\frac12\}$, hence $\Gsep=2(\sqrt{1/2}+\sqrt{1/2})^2-1=3$,
independently of $n$. More generally, any state of Schmidt rank $R$ across
the seam has $\Gsep\le2R-1$. Long-range entanglement alone does not
determine the cost; seam capacity does.
\end{example}

A cheap cut constrains the whole Schmidt spectrum across the seam, not only its entropy.
\begin{corollary}[Cheap seams are low-rank seams]
\label{cor:lowrank}
Let $\psi_{AB}$ be pure with Schmidt coefficients $\{\sqrt{\lambda_a}\}$ across $A{:}B$, and let $\varepsilon\in(0,1]$.
There is a set $K$ of Schmidt indices with
\begin{equation}
  |K|\le\frac{2^{\Ehalf(\rho_A)}}{\varepsilon^2}
  =\frac{\Gsep(\psi)+1}{2\varepsilon^2}
  \le\frac{\Gcut+1}{2\varepsilon^2}
  \label{eq:lowrank}
\end{equation}
such that the normalized truncation $\psi_K$ of $\psi$ to $K$ satisfies $\lVert\psi-\psi_K\rVert_1\le2\varepsilon$.
Hence $|\langle O\rangle_\psi-\langle O\rangle_{\psi_K}|\le2\varepsilon\lVert O\rVert$ for every observable, and an efficiently cuttable family has a seam of polynomial effective Schmidt rank at any fixed accuracy.
\end{corollary}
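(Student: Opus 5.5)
Order the Schmidt probabilities decreasingly, $\lambda_1\ge\lambda_2\ge\cdots$, and take $K=\{a:\lambda_a\ge\tau\}$ for a threshold $\tau$ fixed below. This counting argument uses only the definition of $\Ehalf$ in Eq.~\eqref{eq:seam-entropy}. Write $Z=\sum_a\sqrt{\lambda_a}$, so $Z^2=2^{\Ehalf(\rho_A)}$.

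First bound the size of $K$. Every $a\in K$ contributes at least $\sqrt\tau$ to $Z$, so $|K|\le Z/\sqrt\tau$.

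Next bound the discarded weight. With $\delta=\sum_{a\notin K}\lambda_a$ we have
\[
\delta\le\sum_{a\notin K}\sqrt{\lambda_a}\,\sqrt{\tau}\le Z\sqrt\tau .
\]
Choosing $\sqrt\tau=\varepsilon^2/Z$ gives $\delta\le\varepsilon^2$ and $|K|\le Z^2/\varepsilon^2=2^{\Ehalf(\rho_A)}/\varepsilon^2$, the first bound in Eq.~\eqref{eq:lowrank}. The two equalities and inequalities after it follow from Theorem~\ref{thm:exact-gsep}, which gives $2^{\Ehalf}=(\Gsep+1)/2$, together with Lemma~\ref{lem:hierarchy}, which gives $\Gsep\le\Gcut$.

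To convert the discarded weight into trace distance, let $\psi_K=P_K\psi/\sqrt{1-\delta}$, where $P_K$ projects onto the kept Schmidt vectors. Then $|\langle\psi|\psi_K\rangle|^2=1-\delta$, and for pure states
\[
\lVert\psi-\psi_K\rVert_1=2\sqrt{1-|\langle\psi|\psi_K\rangle|^2}=2\sqrt\delta\le2\varepsilon .
\]
Hölder's inequality then gives $|\Tr O(\psi-\psi_K)|\le\lVert O\rVert\,\lVert\psi-\psi_K\rVert_1$.

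The polynomial-rank conclusion follows directly: if $\Gcut=\poly(n)$, then $|K|=\poly(n)/\varepsilon^2$.

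The main point of care is the choice of $\tau$. A naive Markov-type bound on $\lambda_a$ alone only controls the rank by $1/\tau$ and loses the $\Ehalf$ dependence. Weighting by $\sqrt{\lambda_a}$ in both estimates is what makes the two bounds meet exactly at $Z^2/\varepsilon^2$.
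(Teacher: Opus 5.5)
Your proof is correct and matches the paper's argument essentially step for step. Both use the threshold set $K=\{a:\lambda_a\ge\tau\}$ with $\sqrt\tau=\varepsilon^2/\sum_a\sqrt{\lambda_a}$, the same two estimates weighted by $\sqrt{\lambda_a}$, the pure-state trace-distance formula, and Theorem~\ref{thm:exact-gsep} with Lemma~\ref{lem:hierarchy} for the remaining forms of Eq.~\eqref{eq:lowrank}. One small detail neither you nor the paper states explicitly: $K$ is nonempty, so $\psi_K$ is well defined. This holds because $1=\sum_a\lambda_a\le\sqrt{\lambda_1}\sum_a\sqrt{\lambda_a}$ gives $\sqrt{\lambda_1}\ge\varepsilon^{-2}\sqrt\tau\ge\sqrt\tau$.
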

\begin{proof}
Write $F=\sum_a\sqrt{\lambda_a}=2^{\Ehalf/2}\ge1$ and keep $K=\{a:\lambda_a\ge\eta\}$ with $\sqrt\eta=\varepsilon^2/F\le1$.
Every kept index has $\sqrt{\lambda_a}\ge\sqrt\eta$, and the $\sqrt{\lambda_a}$ sum to $F$, so $|K|\le F/\sqrt\eta=F^2/\varepsilon^2$.
The discarded weight obeys $\sum_{a\notin K}\lambda_a\le\sqrt\eta\sum_a\sqrt{\lambda_a}=\varepsilon^2$, so the retained weight is $p\ge1-\varepsilon^2$ and $|\langle\psi|\psi_K\rangle|=\sqrt p$.
For pure states $\lVert\psi-\psi_K\rVert_1=2\sqrt{1-p}\le2\varepsilon$.
The remaining forms of Eq.~\eqref{eq:lowrank} follow from Eq.~\eqref{eq:cost-entropy} and $\Gsep\le\Gcut$.
\end{proof}
The corollary concerns the state, not an algorithm.
Estimating $\langle O_A\otimes O_B\rangle$ from the truncated state would require transition elements between Schmidt vectors, which the branch tasks do not provide.
The stitching theorem below avoids the Schmidt basis by reusing the cutting decomposition, whose branches are already products of subcircuit states.

\subsection{Bounded interfaces}
\label{app:cost-bounded}

The single bridge of Sec.~\ref{sec:application} is the case $\kappa=1$ of the following two results, which bound the seam entropy and the knitting cost of any circuit whose subcircuits interact through $\kappa$ Pauli rotations.

\begin{proposition}[Capacity bounds seam entropy]
\label{prop:port-capacity}
Let $U$ consist of subcircuit-local gates and $\kappa$ contact gates $\exp[-i\phi_jP^A_jP^B_j/2]$, with $P^A_j$ and $P^B_j$ Pauli strings supported in $A$ and $B$, applied to a product input.
The state of Eq.~\eqref{eq:app-state} has this form, with $\kappa=1$.
The output then has Schmidt rank at most $2^{\kappa}$ across $A{:}B$, and
\begin{equation}
  \Ehalf(\rho_A)\le\kappa,
  \qquad
  \Gsep(\psi)\le2\cdot2^{\kappa}-1,
  \label{eq:capacity}
\end{equation}
uniformly in the angles $\phi_j$, in $n$, and in the subcircuit-local gates.
\end{proposition}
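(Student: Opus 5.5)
The plan is to track the Schmidt rank across $A{:}B$ gate by gate through the circuit, and then convert the rank bound into the entropy and cost bounds.

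\emph{Rank at the input.} The input is a product state, so it has Schmidt rank $1$.

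\emph{Local gates.} A subcircuit-local unitary has the form $V_A\otimes\openone_B$ or $\openone_A\otimes V_B$. Such a unitary maps Schmidt vectors on one side to another orthonormal set. It therefore leaves the Schmidt rank, and indeed the whole Schmidt spectrum, unchanged. This holds whatever the ordering of local gates relative to the contact gates.

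\emph{Contact gates.} Since $(P^A_jP^B_j)^2=\openone$, each contact gate can be expanded as
\[
e^{-i\phi_jP^A_jP^B_j/2}=\cos(\phi_j/2)\,\openone_A\otimes\openone_B-i\sin(\phi_j/2)\,P^A_j\otimes P^B_j .
\]
This is a sum of two product operators. Suppose the current state is $\sum_{k=1}^{r}|u_k\rangle|v_k\rangle$. After the gate it is a sum of at most $2r$ product terms, namely $|u_k\rangle|v_k\rangle$ and $P^A_j|u_k\rangle\,P^B_j|v_k\rangle$. The Schmidt rank is the minimal number of product terms, so the rank at most doubles.

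\emph{Induction.} Combining the three steps over the gate sequence, the output has Schmidt rank $R\le2^\kappa$. The bound does not depend on the angles, on $n$, or on the local gates.

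\emph{Entropy bound.} For Schmidt probabilities $\lambda_1,\dots,\lambda_R$, Cauchy--Schwarz gives $\bigl(\sum_a\sqrt{\lambda_a}\bigr)^2\le R\sum_a\lambda_a=R$. Hence $\Ehalf(\rho_A)=2\log_2\sum_a\sqrt{\lambda_a}\le\log_2R\le\kappa$.

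\emph{Cost bound.} Inserting this into Theorem~\ref{thm:exact-gsep} gives $\Gsep(\psi)=2\cdot2^{\Ehalf}-1\le2\cdot2^\kappa-1$. This is the same bound as in Example~\ref{ex:ghz}, $\Gsep\le2R-1$.

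\emph{Instance.} The state of Eq.~\eqref{eq:app-state} has exactly this form. The encoders, sweeps and anchor are patch-local, the bridge $R_{ZZ}^{ab}(\phi)$ is a single contact gate with $P^A=Z_a$ and $P^B=Z_b$, and the input is $|0^N\rangle$. So it is the case $\kappa=1$.

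The only step that needs care is the doubling claim. One must use that the Schmidt rank equals the minimal number of product terms, i.e.\ the operator rank of the coefficient matrix, rather than the length of some particular expansion. Once that is stated, subadditivity of matrix rank under sums gives the doubling directly, and the remaining steps are routine.
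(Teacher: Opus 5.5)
Your proposal is correct and follows the paper's proof essentially step for step. Both expand each contact gate as a sum of two product operators so the Schmidt rank at most doubles, note that local gates preserve the rank, and finish with Cauchy--Schwarz and the cost--entropy identity. Your remark that the doubling relies on the Schmidt rank being the minimal number of product terms makes explicit a point the paper leaves implicit.
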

\begin{proof}
Each contact gate equals $\cos(\phi_j/2)\,\openone-i\sin(\phi_j/2)\,P^A_j\otimes P^B_j$, an operator of Schmidt rank at most two across $A{:}B$, so it at most doubles the Schmidt rank of the state.
Subcircuit-local gates leave the rank unchanged, and the product input has rank one, so the output rank is $R\le2^{\kappa}$.
The Cauchy--Schwarz inequality gives $\sum_{a=1}^{R}\sqrt{\lambda_a}\le\sqrt R$, hence $\Ehalf\le\log_2R\le\kappa$, and Eq.~\eqref{eq:cost-entropy} gives the bound on $\Gsep$.
\end{proof}
The bound follows from the operator Schmidt rank of the contact gates alone, so it holds before any decomposition is chosen.
Corollary~\ref{cor:area-law} reaches a similar bound from the opposite direction, through the cutting cost.

\begin{proposition}[Cost of cutting bounded interfaces]
\label{prop:port-cost}
The contact gate $\exp[-i\phi\,P^AP^B/2]$ admits a local quasiprobability
decomposition with $\gamma(\phi)=1+2|\sin\phi|$, optimal among local
decompositions with or without classical
communication~\cite{MitaraiFujii2021VirtualGate,PiveteauSutter2024Knitting}. Cutting $\kappa$
contacts independently therefore costs
\begin{equation}
  \Gcut(\bm\phi)\le\prod_{j=1}^\kappa(1+2|\sin\phi_j|)
  \le\min\bigl\{3^\kappa,\ e^{2\sum_j|\phi_j|}\bigr\},
\end{equation}
and near a warm start, $\phi_j\to0$, the overhead tends to one: weakly
coupled contacts are asymptotically free, consistently with the seam
entropy they generate also vanishing with $\bm\phi$.
\end{proposition}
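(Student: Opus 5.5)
We first construct the decomposition explicitly. Since $(P^AP^B)^2=\openone$, the gate is $U=c\,\openone-is\,P^A\otimes P^B$ with $c=\cos(\phi/2)$ and $s=\sin(\phi/2)$. As a channel it acts as
\[
\mathcal U(\rho)=c^2\rho+s^2(P^AP^B)\rho(P^AP^B)+ics\bigl[\rho,P^AP^B\bigr].
\]
For the commutator term we use the identity $i[\rho,P\otimes Q]=\sum_{\pm}\pm\tfrac12\,(\ldots)$. Here the expression is written as a difference of terms in which one side undergoes a signed projective measurement of $P$, whose eigenvalue multiplies the estimator, and the other side receives the Clifford rotation $e^{\mp i\pi Q/4}$. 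Each side can play either role, which gives four terms of weight $\pm cs/2$, each of unit norm.

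This is exactly the six-branch decomposition of Fig.~\ref{fig:experiment}(c), with $R_\pm$ and $M_Z$ replaced by their Pauli-string analogues obtained through local Clifford conjugation mapping $Z$ to $P^A$ or $P^B$. I would verify the commutator identity by a direct check on the operator basis. Summing the absolute weights gives $c^2+s^2+4\cdot\tfrac12|\sin\phi|\cdot\tfrac12\cdot 2=1+2|\sin\phi|$, as in Eq.~\eqref{eq:app-gamma}.

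For optimality, I would not reprove anything. I would cite the lower bound of Refs.~\cite{MitaraiFujii2021VirtualGate,PiveteauSutter2024Knitting}, which shows that the optimal LO and LOCC quasiprobability norm of a two-qubit Pauli rotation equals $1+2|\sin\phi|$. The reduction from Pauli strings $P^A,P^B$ to single-qubit $Z\otimes Z$ goes through local Clifford circuits, because the norm is invariant under local unitary pre- and post-processing. As an independent sanity check, applying the gate to a product input with $\langle P^A\rangle=\langle P^B\rangle=0$ produces a state of Schmidt rank two with coefficients $c,s$. Theorem~\ref{thm:exact-gsep} then gives $\Gsep=2(|c|+|s|)^2-1=1+2|\sin\phi|$, and Lemma~\ref{lem:hierarchy} shows that no LOCC decomposition can beat this. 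This yields optimality directly and is the cleanest route.

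For $\kappa$ contacts, the product of the independent decompositions is admissible, so Lemma~\ref{lem:hierarchy} gives $\Gcut\le\prod_j(1+2|\sin\phi_j|)$. The bound $3^\kappa$ follows from $|\sin\phi_j|\le1$. The bound $e^{2\sum_j|\phi_j|}$ follows from $1+x\le e^x$ together with $|\sin\phi|\le|\phi|$. As $\bm\phi\to0$ each factor tends to one, and Eq.~\eqref{eq:app-seam} (or the Schmidt coefficients $c,s$ above) shows that the seam entropy also vanishes.

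The main obstacle is the signed-measurement identity for the commutator. Its sign conventions and the factor in $d=\tfrac12\sin\phi$ are easy to get wrong, and I would check them explicitly on $Z\otimes Z$ before conjugating to general Pauli strings.
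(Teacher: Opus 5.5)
Your overall structure is right, and for optimality you take a different route from the paper. The construction step, however, has a factor error.

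The commutator identity holds with unit coefficients. Let $\mathcal M_P(\sigma)=\tfrac12\{P,\sigma\}$ be the signed measurement of $P$, and let $\mathcal R^Q_\pm(\tau)=e^{\mp i\pi Q/4}\,\tau\,e^{\pm i\pi Q/4}$. Then $\mathcal R^Q_+-\mathcal R^Q_-=i[\,\cdot\,,Q]$ and
\[
i[\rho,P\otimes Q]=\bigl(\mathcal M_P\otimes(\mathcal R^Q_+-\mathcal R^Q_-)+(\mathcal R^P_+-\mathcal R^P_-)\otimes\mathcal M_Q\bigr)(\rho),
\]
which a direct check on $\sigma\otimes\tau$ confirms using only $P^2=Q^2=\openone$.

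The four measurement branches therefore carry weights $\pm cs=\pm d=\pm\tfrac12\sin\phi$, not $\pm cs/2$. This matches Fig.~\ref{fig:experiment}(c), which you invoke yourself. With your stated weights the four terms contribute only $4\cdot\tfrac14|\sin\phi|=|\sin\phi|$, so the norm would be $1+|\sin\phi|$. The final factor $2$ in your sum has no source. A decomposition with these weights would also beat the lower bound you prove next, so the identity as written cannot hold. Because the identity uses only that $P^A$ and $P^B$ are Hermitian involutions, you do not need the Clifford reduction to $Z\otimes Z$ for the construction.

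With that fixed the proof is complete, and its optimality part differs from the paper. The paper gives no separate argument. It takes the six-branch decomposition from Mitarai--Fujii and the LO/LOCC optimality from Piveteau--Sutter by citation, and the product bounds are immediate.

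You instead derive the lower bound from the paper's own machinery:
\begin{itemize}
\item A product input with $\langle P^A\rangle=\langle P^B\rangle=0$ exists for any non-identity Pauli string and is locally preparable.
\item On this input the gate produces Schmidt coefficients $|c|,|s|$, so Theorem~\ref{thm:exact-gsep} gives $\Gsep=1+2|\sin\phi|$.
\item Lemma~\ref{lem:hierarchy} then transfers this bound to every LOCC decomposition with sign-valued post-processing.
\end{itemize}
This is the equality case $z_a=z_b=0$ of Proposition~\ref{prop:single-bridge}. The paper has every ingredient but never closes the loop. Your route makes optimality self-contained up to the Vidal--Tarrach formula and shows directly why classical communication cannot help. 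The paper's citation route is shorter but leaves the LOCC claim resting on external results.

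One last detail concerns $\kappa>1$ contacts. Eq.~\eqref{eq:app-seam} and the coefficients $c,s$ describe only a single bridge. For several contacts, read the vanishing seam entropy off $\Ehalf(\rho_A)=\log_2[(\Gsep+1)/2]\le\log_2[(\Gcut+1)/2]\to0$, again using Lemma~\ref{lem:hierarchy} and Theorem~\ref{thm:exact-gsep}.
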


\subsection{Transparent toric-code seam}
\label{app:cost-transparent}

We now turn to the opposite geometry, a single toric-code patch divided by a straight cut across which anyons can pass anywhere along the seam.
This is the seam of rows 8b and 12a of Table~\ref{tab:families}.
Two lemmas turn a witness operator into a lower bound on the coefficient norm.
Applied to the stabilizers straddling the seam, they give an exponential lower bound certified by operators of bounded support, together with the exact cost of the seam.

\begin{lemma}[Witness ratio]
\label{lem:witness}
For any Hermitian $W$,
\begin{equation}
  \Gsep(\rho)\ge\frac{|\Tr(W\rho)|}
  {\sup_{\sigma^A,\tau^B}|\Tr[W(\sigma^A\otimes\tau^B)]|}.
\end{equation}
The same bound holds for the coefficient norm of any signed product
decomposition of any operator whose $W$-expectation matches $\Tr(W\rho)$ to
additive error $\varepsilon$, with numerator $|\Tr(W\rho)|-\varepsilon$.
\end{lemma}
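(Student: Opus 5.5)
The plan is to prove the first inequality directly from the definition of $\Gsep$ in Definition~\ref{def:gsep} by linearity. Take any admissible decomposition $\rho=\sum_i c_i\,\sigma_i^A\otimes\tau_i^B$ with normalized states. Applying the linear functional $X\mapsto\Tr(WX)$ gives
\begin{equation*}
  \Tr(W\rho)=\sum_i c_i\,\Tr[W(\sigma_i^A\otimes\tau_i^B)].
\end{equation*}
Write $s=\sup_{\sigma^A,\tau^B}|\Tr[W(\sigma^A\otimes\tau^B)]|$, which is finite because $W$ is bounded and the product states form a compact set. The triangle inequality then yields $|\Tr(W\rho)|\le s\sum_i|c_i|$.

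If $s>0$, divide by $s$ and take the infimum over decompositions; this is exactly the claimed bound. The degenerate case $s=0$ needs a separate remark. Then $\Tr(W\rho)=0$ for every $\rho$ that admits a product decomposition. Since every state admits one (for instance by expanding in a basis of products of local density matrices, as the finiteness of $\Gsep$ in Theorem~\ref{thm:exact-gsep} presupposes), the ratio should be read as $0/0$, or the statement restricted to $s>0$.

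For the approximate version, let $X=\sum_i c_i\,\sigma_i^A\otimes\tau_i^B$ be any signed product decomposition of an operator with $|\Tr(WX)-\Tr(W\rho)|\le\varepsilon$. The same triangle-inequality step gives $|\Tr(WX)|\le s\sum_i|c_i|$. A reverse triangle inequality gives $|\Tr(WX)|\ge|\Tr(W\rho)|-\varepsilon$. Combining the two yields $\sum_i|c_i|\ge(|\Tr(W\rho)|-\varepsilon)/s$. This is trivially true when the numerator is negative.

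There is no real obstacle. The one point requiring care is that the supremum runs over normalized product states, matching the normalization in Definition~\ref{def:gsep}, so that each term is bounded by $|c_i|\,s$. One might also note that the supremum may be taken over pure product states by convexity, though this is not needed for the bound.
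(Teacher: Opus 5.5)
Your proof is correct and is the paper's argument: apply the linear functional $\Tr(W\,\cdot\,)$ to an admissible decomposition, bound the sum with the triangle inequality, and take the infimum over decompositions. You go beyond the paper's one-line proof by writing out the approximate version with the reverse triangle inequality and by treating the degenerate case $s=0$, which in fact forces $W=0$ because products of local states span all operators.
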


\begin{proof}
$|\Tr(W\rho)|=|\sum_ic_i\Tr[W(\sigma_i\otimes\tau_i)]|
\le\sum_i|c_i|\,\sup|\Tr[W(\sigma\otimes\tau)]|$.
\end{proof}

\begin{lemma}[Maximally entangled overlap]
\label{lem:overlap}
Let $\Phi_D$ be maximally entangled of Schmidt rank $D$ across $A{:}B$. For
any state $\omega$ of Schmidt number $r$, in particular any product state
with $r=1$, $\langle\Phi_D|\omega|\Phi_D\rangle\le r/D$.
\end{lemma}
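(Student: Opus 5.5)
The plan is to reduce the claim to the case of a product state and then extend it to Schmidt number $r$ by convexity and Cauchy--Schwarz. Write $|\Phi_D\rangle=D^{-1/2}\sum_{k=1}^{D}|k\rangle_A|k\rangle_B$ in its Schmidt basis. Since $\omega\mapsto\langle\Phi_D|\omega|\Phi_D\rangle$ is linear, and a state of Schmidt number $r$ is by definition a convex combination of pure states each of Schmidt rank at most $r$, it suffices to bound $|\langle\Phi_D|\chi\rangle|^2\le r/D$ for a single normalized pure state $|\chi\rangle$ of Schmidt rank at most $r$.

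For such a $|\chi\rangle$, I would write its Schmidt decomposition as $|\chi\rangle=\sum_{j=1}^{r}\sqrt{\mu_j}\,|u_j\rangle_A|v_j\rangle_B$. The overlap then becomes
\[
\langle\Phi_D|\chi\rangle=D^{-1/2}\sum_j\sqrt{\mu_j}\sum_k\langle k|u_j\rangle\langle k|v_j\rangle .
\]
The inner sum is $\langle\bar u_j|v_j\rangle$, where the bar denotes complex conjugation in the $k$ basis restricted to the span of $\{|k\rangle\}$. Its modulus is at most $1$ by Cauchy--Schwarz, since projecting onto the $D$-dimensional Schmidt support only shortens the vectors. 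Hence
\[
|\langle\Phi_D|\chi\rangle|\le D^{-1/2}\sum_j\sqrt{\mu_j}\le D^{-1/2}\sqrt r,
\]
using Cauchy--Schwarz once more on $\sum_j\sqrt{\mu_j}$ with $\sum_j\mu_j=1$. Squaring gives $r/D$. The product case $r=1$ is the special case with a single term.

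The only delicate step is handling local Hilbert spaces that may be larger than $D$. I would settle it by noting that the quantity $\sum_k\langle k|u_j\rangle\langle k|v_j\rangle$ involves only the components of $u_j$ and $v_j$ in the $D$-dimensional supports of $\Phi_D$. The bound $|\langle\bar u_j|v_j\rangle|\le\|P u_j\|\,\|P v_j\|\le1$, with $P$ the projector onto that support, therefore holds without any further assumption. The final step is averaging over the convex decomposition of $\omega$, which preserves the bound $r/D$.
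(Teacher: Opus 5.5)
Your proof is correct and follows essentially the same route as the paper: bound each product-state overlap $|\langle\Phi_D|u_jv_j\rangle|$ by $D^{-1/2}$, apply Cauchy--Schwarz to get $\sum_j\sqrt{\mu_j}\le\sqrt r$, and extend to Schmidt number $r$ by convexity. Your projector argument for local spaces larger than $D$ simply makes explicit the step the paper states directly as $|\langle\Phi_D|u_bv_b\rangle|\le D^{-1/2}$.
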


\begin{proof}
For a pure state $|\varphi\rangle=\sum_{b\le r}s_b|u_b\rangle|v_b\rangle$,
each $|\langle\Phi_D|u_bv_b\rangle|\le D^{-1/2}$, so
$|\langle\Phi_D|\varphi\rangle|\le D^{-1/2}\sum_bs_b\le\sqrt{r/D}$ by the
Cauchy--Schwarz inequality. Mixed states of Schmidt number $r$ are
mixtures of such states.
\end{proof}

Consider a planar toric-code patch of $W_p\times H_p$ plaquettes with
qubits on edges, a star operator on every vertex, truncated at the
boundary, and a plaquette operator on every face, so that its stabilizers
fix a unique state $\psi$.

\begin{definition}[Straddling stabilizer group]
\label{def:straddling}
For a vertical seam through vertex column $c$, $0<c<W_p$, assign to $A$
every qubit whose edge midpoint has $x<c$ and to $B$ the rest, and let
$L=H_p+1$ be the number of seam vertices. Let $\mathcal G_\partial$ be the
set of star and plaquette operators whose support meets both $A$ and $B$,
and define the strip witness
\begin{equation}
  W=\prod_{g\in\mathcal G_\partial}\frac{\openone+g}{2}.
  \label{eq:app-strip-witness}
\end{equation}
\end{definition}

\begin{theorem}[Strip-local robust seam witness]
\label{thm:seam-witness}
For the state $\psi$ of Definition~\ref{def:straddling}:
\begin{enumerate}[label=(\roman*),leftmargin=*,nosep]
\item \emph{(locality)} $|\mathcal G_\partial|=2L-1$, consisting of the $L$
 stars centered on the seam and the $L-1$ plaquettes immediately to its left.
 Every element acts on at most four qubits within one lattice spacing of
 the seam and is localized at a single seam position, and all elements
 commute, so $W$ is a product of $2L-1$ commuting projectors of bounded
 support;
\item \emph{(normalization)} $\Tr(W\psi)=1$;
\item \emph{(separable ceiling)}
 $\sup_{\sigma^A,\tau^B}\Tr[W(\sigma^A\otimes\tau^B)]\le2^{-(L-1)}$;
\item \emph{(robust bound)} any signed product decomposition, in particular
 any cutting protocol of Lemma~\ref{lem:hierarchy}, whose reconstruction
 reproduces $\langle W\rangle$ to additive error $\varepsilon$ obeys
 $\Gamma\ge(1-\varepsilon)\,2^{L-1}$.
\end{enumerate}
Moreover $W=2^{-(2L-1)}\sum_{g\in\langle\mathcal G_\partial\rangle}g$ is the
uniform average of the $2^{2L-1}$ products of seam-crossing star and
plaquette operators, the closed $e$- and $m$-string configurations threading
the seam, so reproducing all of those correlators to error $\varepsilon$
reproduces $\langle W\rangle$ to error $\varepsilon$.
\end{theorem}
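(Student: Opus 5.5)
The plan is to treat $\psi$ as a stabilizer state and reduce everything to the symplectic structure of the straddling group $\langle\mathcal G_\partial\rangle$ restricted to the two sides of the seam. I would take the four items in order, with (iii) as the real work.

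For (i), I count directly from Definition~\ref{def:straddling}. A star at a vertex of column $c$ always has edges on both sides; for the $L-2$ interior seam vertices this gives four qubits, and for the two boundary vertices the star is truncated to three. A star centred off the seam column has all its edges on one side. A plaquette meets both sides exactly when it has a seam edge, and by the midpoint convention the vertical seam edges (midpoint $x=c$) lie in $B$. The plaquettes to the left of the seam have their right edge on the seam, so they straddle. The plaquettes to the right contain only edges with midpoint $x\ge c$, so they lie entirely in $B$. This gives $L+(L-1)=2L-1$ elements of weight at most four within one lattice spacing. They commute because they are stabilizers. They are independent because a nontrivial product of distinct stars and plaquettes of a planar patch with a unique code state is never the identity; one can see this from the bijective relation between generators and the rank $n$ of the stabilizer group. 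Hence each $(\openone+g)/2$ is a projector and $W$ is their product.

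Item (ii) is immediate from $g\psi=\psi$. The ``Moreover'' statement follows by expanding the product of $2L-1$ independent commuting projectors into the sum over the $2^{2L-1}$ group elements. By the triangle inequality, an error of at most $\varepsilon$ on every correlator yields an error of at most $\varepsilon$ on the average $\langle W\rangle$.

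For (iii), I would write each $g=g_A\otimes g_B$ and compute the commutation matrix of the restrictions $\{g_A\}$. A seam star and an adjacent left plaquette share exactly one $A$-edge, the horizontal edge entering the seam vertex from the left, so their $A$-parts anticommute. Stars commute with stars and plaquettes with plaquettes on any restriction. The $(2L-1)$-dimensional symplectic form therefore has a bidiagonal anticommutation pattern between the $L$ stars and the $L-1$ plaquettes, and I expect it to have rank $2(L-1)$ with a one-dimensional radical. This is consistent with the seam entropy $L-1$ quoted for row 8b. By the standard normal form for bipartite stabilizer groups (local Clifford $U_A\otimes U_B$ together with a change of generators), $\langle\mathcal G_\partial\rangle$ then maps to $\langle X_{a_i}X_{b_i},Z_{a_i}Z_{b_i}:i=1,\dots,L-1\rangle$ times one element $h$ that is effectively local. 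Consequently
\[
W\le(U_A\otimes U_B)^\dagger\bigl(\Phi_D\otimes\openone\bigr)(U_A\otimes U_B),\qquad D=2^{L-1},
\]
where $\Phi_D$ is the projector onto the product of $L-1$ Bell pairs. Local unitaries map product states to product states, and tracing out the remaining qubits leaves a product state on the Bell-pair registers. Lemma~\ref{lem:overlap} with $r=1$ then bounds the expectation by $1/D=2^{-(L-1)}$, and the supremum in (iii) follows.

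Item (iv) then follows from Lemma~\ref{lem:witness}: the numerator is at least $1-\varepsilon$ by (ii), and the denominator is at most $2^{-(L-1)}$ by (iii), giving $\Gamma\ge(1-\varepsilon)2^{L-1}$. Lemma~\ref{lem:hierarchy} extends this to cutting protocols.

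The main obstacle is the rank computation in (iii), in particular at the two truncated boundary stars, where the shared-edge pattern must still give full rank $2(L-1)$ rather than something smaller. I would settle this by explicitly row-reducing the $L\times(L-1)$ star--plaquette anticommutation matrix over $\mathbb F_2$. This matrix has ones on the diagonal and the superdiagonal, so its rank is $L-1$.
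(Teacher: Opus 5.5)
Your argument is the paper's proof. You restrict to $A$ and note that the seam star at row $y$ restricts to $X_{e_y}$ and the left plaquette at row $y'$ to $Z$ on $e_{y'}$, $e_{y'+1}$ and one vertical edge. This gives the bidiagonal star--plaquette matrix of rank $L-1$, hence a form of rank $2(L-1)$. You then drop the radical projector, send the $L-1$ symplectic pairs to Bell-pair stabilizers by a local Clifford, and conclude with Lemma~\ref{lem:overlap} and Lemma~\ref{lem:witness}. Your explicit count in (i) and your expansion for the ``Moreover'' clause add detail that the paper leaves implicit.

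One justification in (i) is false, although the claim it supports is true. You state that no nontrivial product of distinct stars and plaquettes of the patch is the identity, appealing to a bijection between generators and the rank $n$. But the patch has $n+1$ generators of rank $n$, and the product of all $(W_p+1)(H_p+1)$ stars is the identity, since every edge lies in exactly two stars.

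Argue independence of $\mathcal G_\partial$ directly instead. The $A$-parts of the seam stars are the distinct single-qubit operators $X_{e_y}$, and the $B$-parts of the left plaquettes are $Z$ on distinct vertical seam edges. Separating $X$-type from $Z$-type parts then shows that a product over a nonempty subset is never proportional to the identity. Alternatively, the only global relation involves stars off column $c$, which exist because $0<c<W_p$.

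Independence is what gives $|\langle\mathcal G_\partial\rangle|=2^{2L-1}$ in the ``Moreover'' statement. Item (iii) does not need it, since $W$ is the projector onto the joint $+1$ eigenspace of $\langle\mathcal G_\partial\rangle$ for any generating set.

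A smaller slip: a plaquette does not straddle ``exactly when it has a seam edge'', because the plaquettes to the right also contain a vertical seam edge. Your following sentence treats them correctly.
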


\begin{proof}
(i) The stars centered at the $L$ seam vertices have one edge in $A$ and the
rest in $B$; the plaquettes in the column immediately left of the seam have
three edges in $A$ and one in $B$; every other star or plaquette lies
wholly on one side. Stars and plaquettes share an even number of edges, so
$\mathcal G_\partial$ is abelian. (ii) Each $g$ stabilizes $\psi$.

(iii) Define the alternating form $\omega(g,h)$ on the $\mathbb F_2$ space
$\langle\mathcal G_\partial\rangle$ recording whether the restrictions
$g|_A$ and $h|_A$ anticommute. The star at seam row $y$ restricts to $X$ on
the single horizontal edge $e_y$ entering the seam at that row, and the
plaquette at row $y'$ restricts to $Z$ on $e_{y'}$, $e_{y'+1}$ and one
vertical edge. Hence
$\omega(\mathrm{star}_y,\mathrm{plaq}_{y'})=\delta_{y,y'}+\delta_{y,y'+1}$,
a bidiagonal $L\times(L-1)$ matrix of rank $L-1$, while stars and
plaquettes restrict to $X$-type and $Z$-type operators that commute among
themselves. So $\omega$ has rank $2(L-1)$. Choose a symplectic basis
$(u_i,v_i)_{i=1}^{L-1}$ for $\omega$ and a complement spanning its radical,
so that
\begin{equation}
  W=\Pi_0\prod_{i=1}^{L-1}\frac{(\openone+u_i)(\openone+v_i)}{4}
  \le\prod_{i=1}^{L-1}\frac{(\openone+u_i)(\openone+v_i)}{4},
\end{equation}
with $\Pi_0$ the commuting projector associated with the radical. The
$A$-restrictions of $\{u_i,v_i\}$ obey the commutation relations of Pauli
$X$ and $Z$ on $L-1$ qubits, and because $u_i$ and $v_i$ commute, so do the
$B$-restrictions. A local Clifford unitary $V_A\otimes V_B$ therefore maps
the right-hand side to $\Phi_{2^{L-1}}\otimes\openone$ with
$\Phi_{2^{L-1}}$ a maximally entangled projector~\cite{Fattal2004Stabilizer}.
Local unitaries map products to products, and partial traces of products
are products, so Lemma~\ref{lem:overlap} gives the ceiling $2^{-(L-1)}$.
(iv) Lemma~\ref{lem:witness}.
\end{proof}

The symplectic basis used in (iii) is in general supported along the whole
seam, but it enters only the proof of the ceiling. By (i) the witness
itself is a product of commuting projectors of bounded support, each
attached to one seam site; a disentangling circuit never has to be
constructed, only shown to exist.

\begin{corollary}[Exact transparent-seam cost]
\label{cor:toric-exact}
For the state of Definition~\ref{def:straddling}, $\rho_A$ has a flat
spectrum of rank $2^{L-1}$, so $\Gsep(\psi)=2\cdot2^{L-1}-1=2^L-1$,
and every cutting protocol satisfies $\Gcut\ge2^L-1$, with a sufficient
shot count scaling as $\Gcut^2\sim4^L$. More generally, $k$ Bell pairs have
$\Gsep=2^{k+1}-1$, which coincides with the coefficient norm of the optimal
joint cut of $k$ wires with classical communication~\cite{Brenner2023WireCutting},
so that protocol is optimal already at the state level.
\end{corollary}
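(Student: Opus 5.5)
The plan is to reduce everything to the Schmidt spectrum of the stabilizer state $\psi$ across the vertical seam. Once that spectrum is known to be flat of rank $2^{L-1}$, Theorem~\ref{thm:exact-gsep} gives $\Gsep$, Lemma~\ref{lem:hierarchy} gives the bound on $\Gcut$, and Eq.~\eqref{eq:prelim-shots} gives the shot count.

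\emph{Flat spectrum.} For a stabilizer state $\psi$ with stabilizer group $\mathcal S$ and a bipartition $A{:}B$, I will use the standard local Clifford normal form~\cite{Fattal2004Stabilizer}. There is a local Clifford $V_A\otimes V_B$ that maps $\psi$ to $e$ Bell pairs tensored with product states. Here $2e$ is the rank of the alternating form $\omega(g,h)$ on $\mathcal S$, where $\omega(g,h)=1$ when the restrictions $g|_A$ and $h|_A$ anticommute. Hence $\rho_A$ is flat of rank $2^{e}$.

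It remains to show that $e=L-1$. The group $\mathcal S$ is generated by two kinds of elements: the straddling set $\mathcal G_\partial$, and stars and plaquettes lying wholly in $A$ or wholly in $B$.
\begin{itemize}
\item Let $g$ be supported in $A$. Then $g|_A=g$ commutes with every stabilizer $h$. Since $h|_B$ acts outside $A$, $g$ also commutes with $h|_A$, so $g$ lies in the radical of $\omega$.
\item Let $g$ be supported in $B$. Then $g|_A=\openone$, so $g$ is again in the radical.
\end{itemize}
Therefore the rank of $\omega$ on $\mathcal S$ equals its rank on $\langle\mathcal G_\partial\rangle$. Part~(iii) of Theorem~\ref{thm:seam-witness} computed this rank to be $2(L-1)$ from the bidiagonal star--plaquette matrix. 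So $e=L-1$, and $\lambda_a=2^{-(L-1)}$ on $2^{L-1}$ Schmidt components.

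\emph{Cost.} Substituting the flat spectrum into Eq.~\eqref{eq:seam-entropy} gives $\Ehalf(\rho_A)=L-1$. Theorem~\ref{thm:exact-gsep} then gives $\Gsep=2\cdot2^{L-1}-1=2^L-1$. Lemma~\ref{lem:hierarchy} gives $\Gcut\ge\Gsep$ for every protocol in the stated class, and Eq.~\eqref{eq:prelim-shots} turns this into a sufficient shot count of order $\Gcut^2\sim4^L$. As a consistency check, this exact value exceeds the robust bound $2^{L-1}$ of Theorem~\ref{thm:seam-witness}(iv) by roughly a factor of two, as it should.

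\emph{Bell pairs.} $k$ Bell pairs have Schmidt probabilities $2^{-k}$, so $\Ehalf=k$ and $\Gsep=2^{k+1}-1$. The optimal joint cut of $k$ wires with classical communication has coefficient norm $2^{k+1}-1$~\cite{Brenner2023WireCutting}. Because Lemma~\ref{lem:hierarchy} bounds any such protocol from below by $\Gsep$, that protocol meets the state-level floor with equality.

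\emph{Main obstacle.} The delicate step is justifying that the spectrum is flat with rank set exactly by the symplectic rank, and in particular that the non-straddling generators and the radical of $\langle\mathcal G_\partial\rangle$ contribute no extra entanglement. I would settle this with the radical argument above together with the normal form of Ref.~\cite{Fattal2004Stabilizer}. If needed, I would cross-check it by counting: for a pure stabilizer state, the number of independent elements of $\mathcal S$ supported in $A$ equals $|A|$ minus the entropy of $\rho_A$.
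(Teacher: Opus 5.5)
Your proposal is correct and follows essentially the same route as the paper. Both arguments use the Fattal et al.\ characterization of $\rho_A$ for a stabilizer state through the anticommutation form of the $A$-restrictions. One-sided generators lie in the radical of that form, so only $\mathcal G_\partial$ contributes, and it contributes the rank $2(L-1)$ computed in Theorem~\ref{thm:seam-witness}(iii). The claims then follow from Theorem~\ref{thm:exact-gsep} and Lemma~\ref{lem:hierarchy}, with the direct Schmidt spectrum for $k$ Bell pairs. Your explicit radical argument spells out the paper's one-line remark that a generator supported wholly in $A$ or in $B$ has trivial form with every other generator.
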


\begin{proof}
For a stabilizer state, $\rho_A$ is proportional to a projector and
$S(\rho_A)$ is half the $\mathbb F_2$ rank of the anticommutation form of
the $A$-restrictions of a full set of stabilizer
generators~\cite{Fattal2004Stabilizer}. A generator supported wholly in $A$
or in $B$ has a trivial form with every other generator, so only
$\mathcal G_\partial$ contributes, and step (iii) of
Theorem~\ref{thm:seam-witness} gives $S(\rho_A)=L-1$ with a flat
spectrum. Then $\Ehalf=L-1$, and Theorem~\ref{thm:exact-gsep} and
Lemma~\ref{lem:hierarchy} give the claims. For $k$ Bell pairs,
$(\sum_a\sqrt{\lambda_a})^2=2^k$.
\end{proof}

\begin{corollary}[Seam-crossing loop degradation]
\label{cor:loop-degradation}
Any signed product decomposition of coefficient norm $\Gamma$ assigns the
witness a value $\langle W\rangle_{\rm rec}\le\Gamma\,2^{-(L-1)}$. At fixed
subexponential budget, the joint correlator of the seam-crossing string
algebra therefore cannot be reproduced beyond a value exponentially small in
$L$, whereas observables supported in one device are reproduced without
overhead, since subcircuits run unmodified.
\end{corollary}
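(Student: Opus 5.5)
The statement to prove is Corollary~\ref{cor:loop-degradation}. The plan is to apply Lemma~\ref{lem:witness} directly with the strip witness $W$ of Eq.~\eqref{eq:app-strip-witness}, using the separable ceiling of Theorem~\ref{thm:seam-witness}(iii). Let $X=\sum_i c_i\,\sigma_i^A\otimes\tau_i^B$ be any signed product decomposition with $\sum_i|c_i|=\Gamma$, where the $\sigma_i^A$ and $\tau_i^B$ are states. Its reconstructed witness value is
\begin{equation}
\langle W\rangle_{\rm rec}=\sum_i c_i\Tr[W(\sigma_i^A\otimes\tau_i^B)].
\end{equation}
Since $W$ is a product of commuting projectors, $0\le W\le\openone$. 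Hence every term has $0\le\Tr[W(\sigma_i\otimes\tau_i)]\le 2^{-(L-1)}$ by part (iii). The triangle inequality then gives
\begin{equation}
|\langle W\rangle_{\rm rec}|\le\sum_i|c_i|\,2^{-(L-1)}=\Gamma\,2^{-(L-1)},
\end{equation}
which is the first claim. For cutting protocols, the decomposition comes from Lemma~\ref{lem:hierarchy}, whose branch outputs $\rho_{i,\pm}$ are separable. Expanding these outputs into products produces exactly such a decomposition with norm $\Gamma$, so the bound applies to them as well.

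For the second claim, take a budget $\Gamma=2^{o(L)}$. The reconstructed value is then at most $2^{o(L)-(L-1)}$, which is exponentially small in $L$, while the true value is $\Tr(W\psi)=1$ by part (ii). The closing remark of Theorem~\ref{thm:seam-witness} writes $W$ as the uniform average of the $2^{2L-1}$ seam-crossing string correlators. Consequently the joint correlator of the seam-crossing string algebra, in the form of this averaged witness, cannot be reproduced beyond that exponentially small value. The contrast with one-device observables is immediate. If $O$ is supported in $A$, each branch runs the $A$ subcircuit unmodified, so $O$ is estimated directly from that subcircuit at unit norm and incurs no knitting overhead.

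The only delicate point is the phrase ``cannot be reproduced beyond''. It concerns the value assigned to the averaged witness, and it should not be read as a claim about each individual string correlator. I would state it in that form, pointing to Theorem~\ref{thm:seam-witness}(iv) for the additive-error version. I expect no real obstacle beyond checking that $W\ge0$, which ensures the ceiling bounds the absolute value and not only the supremum of the signed overlap.
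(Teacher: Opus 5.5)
Your proposal is correct and follows the argument the paper relies on, although the paper gives no separate proof of this corollary: the bound is the inequality in the proof of Lemma~\ref{lem:witness}, applied to an arbitrary signed product decomposition and combined with the separable ceiling of Theorem~\ref{thm:seam-witness}(iii), as in the neighboring Corollary~\ref{cor:perturbed}, and you are right that the ceiling constrains the averaged witness rather than each individual string correlator. The one loose point, which the paper shares, is the one-device remark: cutting branches do insert local operations on $A$, so a quasiprobability estimator of an $A$-only observable need not have unit norm, and the clean justification at the level of the statement is that the norm-one decomposition $\rho_A\otimes\rho_B$ already reproduces every observable supported in one device exactly.
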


Because $W$ is a witness rather than a fidelity, it applies away from the
fixed point. The overlap $|\langle\psi_0|\psi_h\rangle|^2$ between the
fixed-point and field-perturbed states decays exponentially in the total
system size, whereas $\Tr(W\rho_h)$ is controlled by a strip of width $O(1)$
around the seam.

\begin{corollary}[Perturbed reduction]
\label{cor:perturbed}
For any state $\rho$ on the patch, a decomposition reproducing
$\langle W\rangle_\rho$ has $\Gamma\ge\Tr(W\rho)\,2^{L-1}$. Hence if the
field-perturbed ground state obeys $\Tr(W\rho_h)\ge e^{-c(h)L}$, then
$\Gamma\ge2^{(1-c(h)/\ln2)L-1}$, and the exponential bound survives at
every field with $c(h)<\ln2$.
\end{corollary}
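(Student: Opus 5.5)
The first claim follows from Lemma~\ref{lem:witness}, applied to the strip witness $W$ of Eq.~\eqref{eq:app-strip-witness}, together with the separable ceiling of Theorem~\ref{thm:seam-witness}(iii).

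Let $\rho=\sum_ic_i\,\sigma_i^A\otimes\tau_i^B$ be any signed product decomposition that reproduces $\langle W\rangle_\rho$. Then
\[
\Tr(W\rho)=\Bigl|\sum_ic_i\Tr[W(\sigma_i^A\otimes\tau_i^B)]\Bigr|\le\Gamma\,\sup_{\sigma^A,\tau^B}\Tr[W(\sigma^A\otimes\tau^B)]\le\Gamma\,2^{-(L-1)}.
\]
The left-hand side is nonnegative because $W$ is a product of commuting projectors, by Theorem~\ref{thm:seam-witness}(i). Rearranging gives $\Gamma\ge\Tr(W\rho)\,2^{L-1}$. This step uses nothing about $\rho$ beyond its $W$-expectation, which is why the bound holds for every state on the patch. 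If only approximate reproduction to error $\varepsilon$ is intended, the same argument gives $(\Tr(W\rho)-\varepsilon)\,2^{L-1}$, as in the second clause of Lemma~\ref{lem:witness}.

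For the perturbed statement, I substitute the hypothesis $\Tr(W\rho_h)\ge e^{-c(h)L}$ into this bound:
\[
\Gamma\ge e^{-c(h)L}\,2^{L-1}=2^{L-1-c(h)L/\ln2}=2^{(1-c(h)/\ln2)L-1},
\]
using $e^{-cL}=2^{-cL/\ln 2}$. The exponent grows linearly in $L$ exactly when $1-c(h)/\ln2>0$, that is, when $c(h)<\ln2$. This gives the survival of the exponential bound at every such field. When Lemma~\ref{lem:hierarchy} applies, the same bound transfers to $\Gcut$.

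No step is a genuine obstacle, since the ceiling of Theorem~\ref{thm:seam-witness}(iii) is already established and is independent of the state. The only points to handle carefully are two. First, the ceiling is a supremum over product states, so it also bounds each term of a signed decomposition. Second, the sign of $\Tr(W\rho)$ is fixed by positivity of $W$, so no absolute value is lost. The substantive physical content, namely whether $c(h)<\ln2$ actually holds for the field-perturbed ground state, is a hypothesis of the corollary and is not part of the proof.
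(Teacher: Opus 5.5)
Your proof is correct and follows the paper's own route. The paper likewise applies Lemma~\ref{lem:witness} with numerator $\Tr(W\rho)$, combines it with the separable ceiling $2^{-(L-1)}$ of Theorem~\ref{thm:seam-witness}(iii), and then substitutes the hypothesis using $e^{-c(h)L}=2^{-c(h)L/\ln 2}$.
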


\begin{proof}
Lemma~\ref{lem:witness} with numerator $\Tr(W\rho)$ and
Theorem~\ref{thm:seam-witness}(iii).
\end{proof}

\subsection{Scaling regimes and affordable interface sizes}
\label{app:cost-regimes}

This subsection proves Theorem~\ref{prop:regimes}.
Let $\{U_n\}$ be cut into $D(n)$ subcircuits of width $w(n)$, severing $K(n)$ gates in total.
Two lower bounds on $\Gcut$ apply, one for rotation gates and one for every exact protocol.

For the first, let the severed gates be two-qubit rotations with $|\sin\theta_s|\ge c>0$ on a fraction $f>0$ of them.
Independent cutting gives $\Gcut=\prod_{s=1}^{K}(1+2|\sin\theta_s|)\ge(1+2c)^{fK}=2^{\Omega(K)}$ by Eq.~\eqref{eq:prelim-independent}.
When the time-slice condition holds, the joint decomposition of Eq.~\eqref{eq:prelim-joint} gives $\Gcut\ge2(1+c)^{fK}-1=2^{\Omega(K)}$.
A connected arrangement of $D$ subcircuits has $K\ge D-1$, so both costs are $2^{\Omega(D)}$.

For the second, group the subcircuits into two halves $A$ and $B$ and write $s(n)=\Ehalf(\rho_A)$ for the pure target.
Proposition~\ref{prop:cost-entropy} gives $\Gcut\ge2\cdot2^{s(n)}-1$ for every exact protocol.

\begin{proof}[Proof of Theorem~\ref{prop:regimes}]
If $D=O(1)$ and $w=\Theta(n)$, the total interface $K$ may stay bounded, and then $\Gcut\le3^K=O(1)$ by Eq.~\eqref{eq:prelim-independent}.
The asymptotics in $n$ then describe subcircuits that grow behind a fixed interface.
If $w=O(1)$ and $D=\Theta(n/w)$, then $K\ge D-1=\Omega(n/w)$, and the first bound gives $2^{\Omega(D)}$ under its angle condition.
Independently, a target with $s(n)=\Omega(\sqrt n)$ across a balanced grouping has $\Gcut=2^{\Omega(\sqrt n)}$ by the second bound.
\end{proof}

The entropy hypothesis holds, for example, for a geometrically local two-dimensional target whose entanglement grows with the boundary.
Every balanced grouping of the subcircuits then has a boundary $\partial$ of $\Omega(\sqrt n)$ qubits, and $S(\rho_A)\ge a|\partial|-b$ with $a>0$ gives $s(n)\ge S(\rho_A)=\Omega(\sqrt n)$, since $\Ehalf\ge S$.
An area-law upper bound alone does not imply the hypothesis, which is why Theorem~\ref{prop:regimes} states it explicitly.

We now evaluate Eq.~\eqref{eq:kmax} for representative budgets.
We take $\varepsilon=0.01$ and $\delta=0.05$, for which Eq.~\eqref{eq:gamma-budget} gives $\Gamma_{\max}\approx3.68\times10^{-3}\sqrt M$.
Table~\ref{tab:kmax} lists the resulting counts.

\tableKmax

Each additional severed gate costs a factor of nine in shots for independent decompositions and a factor approaching four for joint ones.
The affordable count therefore grows only as $\frac12\log_2M$ and remains a small integer for any realistic budget.
Interface gates interleaved through the circuit, rather than confined to one time slice, must use the independent count.
For the single bridge gate of Sec.~\ref{sec:application}, $\kappa=1$ and the joint and independent norms coincide.

\section{Classical simulability}
\label{app:sim}

This appendix proves Theorem~\ref{prop:stitching}, which underlies criterion S1, and gives the two sufficient conditions for efficient simulation that criteria S2 and S1 use.
{The theorem formalizes an observation of Refs.~\cite{PiveteauSutter2024Knitting,BravyiSmithSmolin2016}, that knitting classically simulable subcircuits yields a classically simulable estimate, with the branch tasks as the unit of hardness.

\subsection{Knitting easy subcircuits is easy}
\label{app:sim-stitching}

This subsection proves Theorem~\ref{prop:stitching}.
Each branch of a cutting decomposition without classical communication is a product of subcircuit states, so a global expectation value is a signed sum of products of subcircuit expectation values.

\begin{theorem}[Stitching]
\label{thm:stitch}
Let $U_n$ carry a cut of $\kappa(n)$ gates, each replaced by at most $m=O(1)$ local terms of total coefficient norm at most $\bar\gamma$, none requiring classical communication.
Then
\begin{equation}
  \rho_{\mathrm{out}}=\sum_\alpha c_\alpha\,\sigma^A_\alpha\otimes\tau^B_\alpha,
  \qquad
  \sum_\alpha|c_\alpha|=\Gamma\le\bar\gamma^{\kappa},
  \qquad
  \#\{\alpha\}\le m^{\kappa},
  \label{eq:stitch-branches}
\end{equation}
where $\sigma^A_\alpha$ and $\tau^B_\alpha$ are prepared by the two subcircuits with the operations of branch $\alpha$ inserted and weighted by the signs of any measurement outcomes.
Let $O$ have support at most $r=O(1)$ and $\lVert O\rVert\le1$.
Suppose the branch tasks, estimating $\Tr(O'\sigma^A_\alpha)$ and $\Tr(O''\tau^B_\alpha)$ to additive error $\eta$ for any $\alpha$ and any $O'$, $O''$ of support at most $r$ and norm at most one, are solvable classically in time $T(n,1/\eta)$.
Then $\langle O\rangle$ can be estimated to additive error $\varepsilon$ in time
\begin{equation}
  O\bigl(4^r m^{\kappa}\,T(n,3\cdot4^r\Gamma/\varepsilon)\bigr).
\end{equation}
In particular, if $\kappa=O(\log n)$ and both branch tasks are in \textsf{BPP}, then estimating $\langle O\rangle$ to inverse-polynomial additive error is in \textsf{BPP}.
\end{theorem}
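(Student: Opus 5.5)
We prove the statement in three steps: write the output as a signed sum of product states, expand the observable in Pauli products, and then bound the error of estimating each branch separately.

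\emph{Branch expansion.} We substitute the quasiprobability decomposition Eq.~\eqref{eq:prelim-qpd} of each severed gate into the circuit and expand the product over the $\kappa$ severed gates. This gives at most $m^\kappa$ branches with coefficients $c_\alpha=\prod_s a_{s,i_s}$ and $\sum_\alpha|c_\alpha|\le\prod_s\gamma_s\le\bar\gamma^\kappa$. No term requires classical communication. So in each branch the operations inserted on $A$ and on $B$ act independently on the product input $|0^{n_A}\rangle\otimes|0^{n_B}\rangle$, and the branch output factorizes as $\sigma^A_\alpha\otimes\tau^B_\alpha$.

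Measurement signs need some care. A signed measurement on one side multiplies the estimator by a local sign. We absorb it into that side's operator, so $\sigma^A_\alpha$ is the Hermitian, possibly non-positive, operator $\sum_m m\,\rho_{\alpha,m}$ with trace norm at most one. This is where the absence of classical communication is essential: the joint sign is a product of local signs, and each local part remains a contraction in trace norm. This bookkeeping proves Eq.~\eqref{eq:stitch-branches}.

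\emph{Pauli expansion.} Write $O=\sum_P o_P\,P$ over Pauli strings supported on $\supp O$. There are at most $4^r$ such strings, and $|o_P|=|\Tr(OP)|/2^r\le\lVert O\rVert\le1$. Each $P$ factorizes as $P_A\otimes P_B$ with supports of size at most $r$, so
\[
\langle O\rangle=\sum_P o_P\sum_\alpha c_\alpha\,\Tr(P_A\sigma^A_\alpha)\,\Tr(P_B\tau^B_\alpha).
\]
The algorithm estimates every factor with the assumed branch-task solver. The local observables passed to it are Pauli strings of norm one and support at most $r$, so they satisfy the hypotheses. This takes $2\cdot4^r m^\kappa$ calls.

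\emph{Error propagation.} This is the main step. Let $x,y$ be the true factors and $\hat x,\hat y$ their estimates, each with error at most $\eta$. Because the branch operators have trace norm at most one, $|x|,|y|\le1$, and the product error satisfies
\[
|\hat x\hat y-xy|\le|x|\,\eta+|\hat y|\,\eta\le\eta(2+\eta)\le3\eta
\]
for $\eta\le1$. The estimates may not individually be bounded by one, but we can clip them to $[-1,1]$ without increasing the error. Summing over branches and Pauli strings, the total error is at most $\sum_P|o_P|\cdot\Gamma\cdot3\eta\le4^r\Gamma\cdot3\eta$. Choosing $\eta=\varepsilon/(3\cdot4^r\Gamma)$ gives additive error $\varepsilon$ in time $O(4^r m^\kappa T(n,3\cdot4^r\Gamma/\varepsilon))$, which is the stated bound.

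If the branch-task solvers are randomized, we boost each call's success probability to $1-\delta/(2\cdot4^rm^\kappa)$ by median amplification, at logarithmic overhead, and apply a union bound over all calls.

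\emph{The \textsf{BPP} case.} When $\kappa=O(\log n)$ and $m,\bar\gamma=O(1)$, we have $m^\kappa=\poly(n)$ and $\Gamma\le\bar\gamma^\kappa=\poly(n)$, and $4^r=O(1)$. An inverse-polynomial target $\varepsilon$ therefore requires only inverse-polynomial precision $\eta$. Since $T$ is polynomial in $n$ and $1/\eta$, the whole procedure runs in polynomial time. The branches must also be enumerable efficiently, which follows from the assumption that the local terms are efficiently specified.
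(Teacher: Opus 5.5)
Your proof is correct and follows the paper's argument: the same Pauli expansion into at most $4^r$ products, the per-factor error bound $2\eta+\eta^2\le3\eta$, the choice $\eta=\varepsilon/(3\cdot4^r\Gamma)$, median amplification for randomized solvers, and polynomiality of $m^\kappa$ and $\bar\gamma^\kappa$ when $\kappa=O(\log n)$. You also justify the branch expansion by absorbing measurement signs into trace-norm-contractive local operators, which guarantees that every factor lies in $[-1,1]$; the paper's proof takes this step for granted.
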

\begin{proof}
Expand $O=\sum_j o_j P^j_A\otimes P^j_B$ in the Pauli basis of its support, with at most $4^r$ terms and $|o_j|\le\lVert O\rVert\le1$.
By Eq.~\eqref{eq:stitch-branches}, $\langle O\rangle=\sum_j o_j\sum_\alpha c_\alpha\Tr(P^j_A\sigma^A_\alpha)\Tr(P^j_B\tau^B_\alpha)$.
Estimate each of the $2\cdot4^r m^{\kappa}$ factors to additive error $\eta\le1$.
The exact factors lie in $[-1,1]$, so each product carries error at most $2\eta+\eta^2\le3\eta$, and the total error is at most $3\cdot4^r\Gamma\eta$.
Choosing $\eta=\varepsilon/(3\cdot4^r\Gamma)$ gives error $\varepsilon$, and the running time is the number of factors times the cost of each.
For randomized subroutines, repeating each call $O(\log(4^r m^{\kappa}/\delta))$ times and taking the median makes all estimates succeed together with probability at least $1-\delta$.
If $\kappa=O(\log n)$, then $m^{\kappa}$ and $\Gamma\le\bar\gamma^{\kappa}$ are polynomial, so the branch tasks are called polynomially often at inverse-polynomial precision.
\end{proof}

\begin{proof}[Proof of Theorem~\ref{prop:stitching}]
This is the last statement of Theorem~\ref{thm:stitch}.
\end{proof}

\begin{corollary}[The trench coat]
\label{cor:trenchcoat}
Under the hypotheses of Theorem~\ref{thm:stitch} with $\kappa=O(\log n)$, if estimating local expectation values of $\{U_n\}$ is not in \textsf{BPP}, then the branch task of at least one subcircuit is not in \textsf{BPP}.
A family that is efficiently cuttable by such a protocol, and whose subcircuits are both classically easy, is therefore classically easy as a whole.
\end{corollary}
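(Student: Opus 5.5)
The corollary is the contrapositive of Theorem~\ref{thm:stitch}, so the plan is to argue by contradiction and then restate the conclusion in positive form. Suppose that, under the hypotheses of Theorem~\ref{thm:stitch} with $\kappa=O(\log n)$, the branch tasks of both subcircuits are in \textsf{BPP}. Here "in \textsf{BPP}" should be read uniformly over branches $\alpha$ and over observables $O'$, $O''$ of support at most $r$ and norm at most one, to inverse-polynomial additive error. The last statement of Theorem~\ref{thm:stitch} then gives a \textsf{BPP} algorithm for $\langle O\rangle$ at inverse-polynomial additive error. This contradicts the assumption that estimating local expectation values of $\{U_n\}$ is not in \textsf{BPP}, so at least one subcircuit's branch task is not in \textsf{BPP}.

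The second sentence is the same implication read forward. "Efficiently cuttable by such a protocol" supplies the hypotheses of the theorem:
\begin{enumerate}[label=(\roman*),leftmargin=*,nosep]
\item $\kappa=O(\log n)$ severed gates;
\item at most $m=O(1)$ terms per gate;
\item per-gate norm at most $\bar\gamma$;
\item no classical communication between subcircuits.
\end{enumerate}
Together these give $\Gamma\le\bar\gamma^{\kappa}=\poly(n)$ and $m^{\kappa}=\poly(n)$ branches. "Both subcircuits classically easy" supplies the branch-task assumption, and the theorem's conclusion is exactly "classically easy as a whole".

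The step I expect to need most care is the quantifier bookkeeping, not any estimate. The "classically easy" hypothesis must be the uniform branch-task version used in Theorem~\ref{thm:stitch}, covering the inserted local operations and sign-weighted measurements of every branch. Easiness of the subcircuits with the severed gates deleted is not enough, as noted after Theorem~\ref{prop:stitching}. The precision $\eta=\varepsilon/(3\cdot4^r\Gamma)$ must also remain inverse-polynomial, which holds because $\Gamma$ is polynomial. The success probability must survive the union bound over $2\cdot4^r m^{\kappa}$ calls, which the median amplification in the proof of Theorem~\ref{thm:stitch} already handles. With these points stated explicitly, the corollary follows immediately.
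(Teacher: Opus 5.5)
Your proposal is correct and matches the paper's argument. The paper gives no separate proof of this corollary: it is the contrapositive of the last statement of Theorem~\ref{thm:stitch}, and the second sentence is that statement read forward. Your checks on the uniform branch-task hypothesis, the inverse-polynomial precision $\eta$, and the median amplification are exactly what the proof of that theorem already covers.
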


The theorem uses only the branch tasks of criterion S1, with no Schmidt basis and no transition amplitudes.
It covers structured protocols with boundedly many local terms per severed gate and no classical communication, which includes the six-branch decomposition of the bridge gate used in Sec.~\ref{sec:application}.
Allowing classical communication with a message alphabet of constant size per gate multiplies the branch count by a further factor $O(1)^{\kappa}$ and leaves the conclusion unchanged.
A protocol reaching $\Gamma=\poly(n)$ with superpolynomially many branches is not covered.
Corollary~\ref{cor:lowrank} still shows that its seam has polynomial effective Schmidt rank, but turning that into a simulation would require transition elements, so this case remains open.

\subsection{Sufficient conditions for easiness}
\label{app:sim-easy}

\begin{proposition}[Light-cone easiness]
\label{prop:lightcone}
Let $U$ be a geometrically local circuit on a two-dimensional lattice with
conventional depth $d$, and let $O$ be an observable supported in a region
of diameter $r$ with $\|O\|\le1$. Then
$\langle0^n|U^\dagger OU|0^n\rangle$ can be computed in time
$2^{O((r+d)^2)}$, independent of $n$. In particular, constant-depth
two-dimensional families fail S2.
\end{proposition}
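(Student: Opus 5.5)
The plan is a standard backward light-cone argument. Work in the Heisenberg picture. Write $U=L_d\cdots L_1$ as $d$ layers of geometrically local gates, each of bounded range on the two-dimensional lattice. Then $\langle0^n|U^\dagger OU|0^n\rangle=\langle0^n|L_1^\dagger\cdots L_d^\dagger\,O\,L_d\cdots L_1|0^n\rangle$.

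The first step is to note that every gate of $L_d$ whose support does not meet $\supp(O)$ commutes with $O$ and cancels against its adjoint. Conjugation by $L_d$ therefore enlarges the support of the evolved operator only by the gate range. After conjugating by all $d$ layers, the operator $O_d=U^\dagger OU$ is supported in the backward light cone, a region $\mathcal C$ of diameter at most $r+2cd$, with $c$ the gate range. In two dimensions $\mathcal C$ contains $|\mathcal C|=O((r+d)^2)$ qubits. Only the gates lying inside the cone, at most $O(d\,(r+d)^2)$ of them, enter the computation.

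The second step evaluates the expectation value restricted to the cone. Since the input is the product state $|0^n\rangle$, we have $\langle0^n|O_d|0^n\rangle=\langle0^{\mathcal C}|O_d|0^{\mathcal C}\rangle$. This can be computed either by a state-vector simulation of the cone circuit on $|\mathcal C|$ qubits, applying the cone gates, then $O$, then the inverse gates, or by forming $O_d$ as a $2^{|\mathcal C|}\times2^{|\mathcal C|}$ matrix. Each gate application costs $2^{O(|\mathcal C|)}$, so the total time is $O(d(r+d)^2)\cdot2^{O((r+d)^2)}=2^{O((r+d)^2)}$. The gate count and the cost of reading the relevant gates from the generator are absorbed into the exponent, and the bound is independent of $n$. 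The computation is exact, so it meets any additive accuracy.

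The main point needing care is the light-cone bookkeeping: the cone must be defined layer by layer and the cancellation $L^\dagger O L=L_{\mathrm{cone}}^\dagger O L_{\mathrm{cone}}$ justified when gates within one layer have disjoint supports. I would state that each layer consists of commuting gates with disjoint supports, which is the meaning of conventional depth, so the gates factorize and those outside the current support cancel.

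For the final claim, note that if $d=O(1)$ and $r=O(1)$, the cones are uniformly bounded and the expectation value is computable in time $O(1)$ per observable. The backward causal cones of S2 are then bounded, so the family permits direct classical simulation and is recorded as failing S2 in the sense of Table~\ref{tab:properties}.
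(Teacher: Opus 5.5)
Your proposal is correct and follows the paper's proof. Both bound the backward light cone of $\supp(O)$ by a region of radius $O(r+d)$ containing $O((r+d)^2)$ qubits, cancel the gates outside it between $U$ and $U^\dagger$, and run a state-vector simulation of the cone circuit on its product input; your layer-by-layer bookkeeping spells out the step the paper states in one line.
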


\begin{proof}
The backward light cone of $\supp(O)$ under $d$ layers of geometrically
local gates is contained in a disk of radius $O(r+d)$, hence contains
$O((r+d)^2)$ qubits. Gates outside the cone cancel between $U$ and
$U^\dagger$, and a state-vector simulation of the cone circuit on its
product input gives the result.
\end{proof}

The argument uses only the number $c$ of qubits in the backward light cone
of $\supp(O)$ in the gate DAG, and gives cost $\poly(n)\,2^{O(c)}$ for any
circuit; this is the quantity S2 measures. A sequential FLDC can have
conventional depth $\Theta(n)$ and cones that grow with $n$, so bounded
local depth does not imply a bounded cone.

\begin{proposition}[Quasi-one-dimensional easiness]
\label{prop:quasi1d}
Let a circuit of two-qubit gates act on a $w\times n$ strip, with gates
crossing any vertical cut at most $t$ times in total. Then the output has
Schmidt rank at most $4^t$ across every vertical cut, admits a
matrix-product description of bond dimension $2^{O(w+t)}$, and its local
expectation values can be computed classically in time
$\poly(n)\,2^{O(w+t)}$. For depth $d$, $t\le wd$, so $wd=O(\log n)$ implies
polynomial cost. Constant-$p$ QAOA on graphs of bounded maximum degree $\Delta$ is
likewise easy for local observables, since the backward causal cone of an $r$-local
observable contains at most $r\Delta^{O(p)}$ qubits~\cite{Farhi2014QAOA}.
\end{proposition}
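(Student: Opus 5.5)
The three claims of Proposition~\ref{prop:quasi1d} are proved in turn. The central bound is on the Schmidt rank across a single vertical cut, and everything else follows from it by standard matrix-product-state arguments.

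\emph{Schmidt rank.} Fix a vertical cut and write $U=G_K\cdots G_1$. Gates lying entirely on one side of the cut are local unitaries and leave the Schmidt rank unchanged. A two-qubit gate $G$ that crosses the cut is an operator on $\mathbb C^2\otimes\mathbb C^2$. Its operator Schmidt decomposition therefore has at most $4$ terms $\sum_{k\le4}L_k\otimes R_k$, because the one-qubit operators on each side span a space of dimension four. Applying such a gate multiplies the Schmidt rank by at most $4$, exactly as in the proof of Proposition~\ref{prop:port-capacity}. The input $|0^{wn}\rangle$ has rank one, so after $t$ crossings the rank is at most $4^t$.

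\emph{Matrix-product description.} Order the qubits column by column, so the chain has $wn$ sites. Cuts between whole columns are vertical cuts and have rank at most $4^t$. A cut inside a column leaves at most $w$ qubits of that column on one side. Its rank is therefore at most $2^{w}\cdot4^t$, since a partial column on one side can only increase the rank by at most $2^w$. The standard sequential SVD construction then gives an exact MPS of bond dimension $\chi=2^{O(w+t)}$. This MPS can also be built efficiently: I would simulate the circuit gate by gate in MPS form, swapping qubits into adjacency within a column as needed. At every step the bond dimension stays bounded by the same rank argument applied to the partially applied circuit, because each prefix of $U$ crosses each cut at most $t$ times. Each gate update then costs $\poly(\chi)$, and the number of gates is polynomial in $n$. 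A local expectation value of an $r$-local observable is a contraction with the transfer matrices, costing $\poly(n)\chi^{O(1)}=\poly(n)\,2^{O(w+t)}$.

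\emph{Depth and QAOA.} In depth $d$, each layer has at most $w$ gates crossing a given vertical cut, one per row, since the gates in a layer act on disjoint qubits. Hence $t\le wd$, and $wd=O(\log n)$ makes $2^{O(w+t)}$ polynomial. For QAOA, the backward cone argument of Proposition~\ref{prop:lightcone} applies. Each of the $p$ layers enlarges the support by a factor of at most $\Delta+1$, so the cone contains at most $r(\Delta+1)^{2p}=r\Delta^{O(p)}$ qubits. This is a constant, and a state-vector simulation of the cone finishes the argument.

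The main obstacle is the bond dimension within a column together with the efficiency of the gate-by-gate MPS simulation, that is, ensuring that intermediate bond dimensions never exceed the final bound. The prefix argument handles this. The swaps needed to bring qubits into adjacency act within a column and so add only the $2^{w}$ factor.
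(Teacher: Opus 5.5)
Your proof is correct and follows the same route as the paper: operator Schmidt rank at most four per crossing gate, column-by-column ordering with an extra factor $2^w$ for cuts inside a column, a gate-by-gate (time-evolving) matrix-product simulation, and the light-cone argument of Proposition~\ref{prop:lightcone} for QAOA. The details you add fill in what the paper leaves to its citation of time-evolving MPS simulation: the prefix bound on intermediate bond dimensions, the within-column swaps, and the one-crossing-per-row count behind $t\le wd$, which, like the paper's, implicitly assumes nearest-neighbour gates.
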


\begin{proof}
A two-qubit gate crossing a cut has operator Schmidt rank at most four
across it, so it multiplies the Schmidt rank of the state by at most four,
and gates not crossing the cut leave it unchanged; this gives $4^t$ at
every time. Ordering the qubits column by column, a cut inside a column
exceeds the adjacent vertical cut by at most $w$ qubits, which gives the
bond dimension, and a time-evolving matrix-product simulation~\cite{Vidal2003MPSClassical}
gives the running time. The QAOA statement is the light-cone argument of
Proposition~\ref{prop:lightcone} applied on the interaction graph.
\end{proof}

\section{Trainability}
\label{app:train}

This appendix states the finite-local-depth gradient bound behind criteria T1 and T2 with its hypotheses, gives the sufficient condition for a barren plateau behind the fail conditions of T2 and T3, and shows that gradients can be estimated through a cut without losing the bound.

\subsection{The gradient bound and its hypotheses}
\label{app:train-bound}

Recall from Definition~\ref{def:fldc} that $U(\boldsymbol\theta)=U_M(\boldsymbol\theta_M)\cdots U_1(\boldsymbol\theta_1)$ consists of blocks that act on at most $\beta$ qubits each, with at most $\chi$ blocks acting on any one qubit.
For
\[
  C(\boldsymbol\theta)
  =\sum_\nu\lambda_\nu\langle O_\nu\rangle_{\boldsymbol\theta},
  \qquad |\operatorname{supp}(O_\nu)|\le r=O(1),
\]
and a parameter in block $U_{k(\mu)}$, the bound used in
Sec.~\ref{sec:preliminaries} is
\begin{equation}
\begin{split}
  \operatorname{Var}_{\boldsymbol\theta}[\partial_\mu C]
  &\ge g,\\
  g&=2\,4^{-r\chi\beta}
    \sum_{\substack{\nu:\,\operatorname{supp}(O_\nu)\,\cap\\
      \operatorname{supp}(U_{k(\mu)})\ne\emptyset}}\lambda_\nu^2.
\end{split}
\label{eq:prelim-gradient}
\end{equation}
The bound holds under the hypotheses of Zhang, Liu, and
Zhang~\cite{ZhangLiuZhang2024FLDC}: the blocks are drawn independently from
local $2$-designs, $\theta_\mu$ parametrizes a gate inside block
$U_{k(\mu)}$, and only cost terms whose support overlaps that block
contribute to the sum. The bound $g$ depends on the parameter $\mu$ even
though that index is suppressed in the notation.

The quantities $\beta$ and $\chi$ in Eq.~\eqref{eq:prelim-gradient} need some care.
They are not the circuit depth.
A sequential FLDC can have conventional depth $\Theta(n)$ while $\beta$ and $\chi$ stay bounded, as for the toric-code circuits of Table~\ref{tab:families}, and this gap is what the FLDC construction exploits.
They are also not intrinsic to a circuit, which has no canonical block decomposition.
The finest choice, one block per gate, gives $\beta=2$ with $\chi$ equal to the largest number of gates on any qubit, and coarser blocks trade a larger $\beta$ for a smaller $\chi$.
Criterion T1 therefore records both quantities for a stated decomposition.
Finally, every circuit has some $\beta$ and $\chi$, and an FLDC is distinguished only by both remaining bounded as $n$ grows.

\subsection{Barren plateaus from unitary designs}
\label{app:train-plateau}

\begin{proposition}[Design-induced barren plateau]
\label{prop:design-plateau}
Let $C(\theta)=\Tr[OU\rho_0U^\dagger]$ with $\|O\|\le1$, and write
$U=V_2e^{-i\theta G/2}V_1$ with $G$ a Pauli string. If $V_1$ or $V_2$ is
drawn from a unitary $2$-design on the full $n$-qubit register,
independently of the other, then
\begin{equation}
  \mathbb E[\partial_\theta C]=0,
  \qquad
  \operatorname{Var}[\partial_\theta C]\le\frac{\Tr(O^2)}{4^n-1}
  \le\frac{2^n}{4^n-1}=2^{-\Omega(n)}.
  \label{eq:app-design-plateau}
\end{equation}
Consequently, FLDC families whose blocks are enlarged until the circuit, or
the sub-circuit on either side of a parameter, forms a global $2$-design
lose trainability, and the guarantee of Eq.~\eqref{eq:prelim-gradient},
which carries the factor $4^{-r\chi\beta}$, becomes vacuous as soon as
$r\chi\beta=\omega(\log n)$.
\end{proposition}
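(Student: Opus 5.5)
Write $\rho_1=V_1\rho_0V_1^\dagger$ and $O_2=V_2^\dagger OV_2$. The parameter-shift form of the derivative gives
\[
  \partial_\theta C=\tfrac{i}{2}\Tr\bigl(O_2[G,\rho_1]\bigr)=-\tfrac{i}{2}\Tr\bigl([G,O_2]\rho_1\bigr),
\]
evaluated at the rotated frame, where the factor $e^{-i\theta G/2}$ is absorbed into $V_1$ or $V_2$. This absorption is harmless because the Haar measure, and hence the $2$-design moments, are invariant under multiplication by a fixed unitary. The two cases are symmetric, so I treat $V_1$ drawn from a $2$-design independently of $V_2$, and condition on $V_2$. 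The case of $V_2$ random is identical after exchanging the roles of $\rho_1$ and $O_2$, using the second form of the derivative.

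\emph{Mean.} By the first-moment design property, $\mathbb E_{V_1}[\rho_1]=\openone/2^n$. The trace of a commutator with the identity vanishes, so $\mathbb E[\partial_\theta C\mid V_2]=0$, and averaging over $V_2$ gives $\mathbb E[\partial_\theta C]=0$.

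\emph{Variance.} I set $A=\tfrac{i}{2}[O_2,G]$, which is Hermitian and traceless, so that $\partial_\theta C=\Tr(A\rho_1)$ with $\rho_1=V_1\rho_0V_1^\dagger$. The standard second-moment (Weingarten) formula for a $2$-design gives, for traceless $A$,
\[
  \mathbb E_{V_1}\bigl[\Tr(AV_1\rho_0V_1^\dagger)^2\bigr]
  =\frac{\Tr(A^2)\bigl(\Tr(\rho_0^2)-2^{-n}\bigr)}{4^n-1}
  \le\frac{\Tr(A^2)}{4^n-1},
\]
where the last step uses $\Tr\rho_0^2\le1$. It remains to bound $\Tr(A^2)=\tfrac14\Tr\bigl(-[O_2,G]^2\bigr)$. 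Since $G$ is a Pauli string, $G^2=\openone$, and expanding gives $-[O_2,G]^2=2O_2^2-2\,O_2GO_2G$ in trace. Hence $\Tr(A^2)=\tfrac12\bigl(\Tr O^2-\Tr(O_2GO_2G)\bigr)$. By Cauchy--Schwarz and unitary invariance, $|\Tr(O_2GO_2G)|\le\Tr O^2$, so $\Tr(A^2)\le\Tr O^2$. Averaging over $V_2$ preserves this bound and gives the first inequality of Eq.~\eqref{eq:app-design-plateau}. The second inequality follows from $\Tr O^2\le2^n\lVert O\rVert^2\le2^n$.

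\emph{Consequences.} If enlarging the blocks makes the circuit, or the subcircuit on either side of a parameter, a global $2$-design, then the hypothesis above holds and the variance is $2^{-\Omega(n)}$. Separately, the guarantee $g$ of Eq.~\eqref{eq:prelim-gradient} carries the factor $4^{-r\chi\beta}$ and bounded coefficients. It is therefore $n^{-\omega(1)}$ when $r\chi\beta=\omega(\log n)$, so it certifies no inverse-polynomial variance and is vacuous for \textsc{Train}.

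The main obstacle is the second-moment computation. The constant must come out exactly as $1/(4^n-1)$ rather than something like $2/4^n$, which requires tracking the Weingarten coefficients carefully and using that $A$ is traceless to kill the term proportional to $(\Tr A)^2$. I would verify it by writing $\mathbb E[V^{\otimes2}\rho_0^{\otimes2}V^{\dagger\otimes2}]$ in the identity–swap basis and contracting with $A\otimes A$.
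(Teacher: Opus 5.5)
Your proposal is correct and follows essentially the paper's route. Both write $\partial_\theta C$ as the trace of a traceless commutator against a design-rotated operator, absorb $e^{-i\theta G/2}$ into the design (the paper uses $V=RV_1$), use the first moment to get zero mean, and apply the second-moment formula $\Tr(Y^2)[\Tr X^2-(\Tr X)^2/d]/(d^2-1)$, bounding the commutator's Hilbert--Schmidt norm by $\Tr O^2$ (or by $\Tr\rho_0^2\le1$ when $V_2$ is the design). The differences are cosmetic: you verify $\Tr(A^2)\le\Tr O^2$ explicitly by Cauchy--Schwarz where the paper only asserts it, and your derivative has the opposite overall sign to $-\tfrac{i}{2}\Tr(O_2[G,\rho_1])$, which does not affect the mean or the variance.
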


\begin{proof}
Write $d=2^n$, $R=e^{-i\theta G/2}$, $O'=V_2^\dagger OV_2$ and
$\sigma'=RV_1\rho_0V_1^\dagger R^\dagger$. Differentiating,
$\partial_\theta C=\Tr(A\sigma')=\Tr(O'B)$ with the traceless Hermitian
operators $A=\frac i2[G,O']$ and $B=\frac i2[\sigma',G]$, which obey
$\Tr(A^2)\le\Tr(O^2)$ and $\Tr(B^2)\le\Tr(\sigma'^2)\le1$. For $V$ drawn
from a $2$-design and traceless $Y$, the standard second-moment
formulas~\cite{McClean2018BarrenPlateaus} give $\mathbb E[\Tr(VXV^\dagger Y)]=0$ and
$\mathbb E[\Tr(VXV^\dagger Y)^2]
=\Tr(Y^2)[\Tr(X^2)-\Tr(X)^2/d]/(d^2-1)\le\Tr(X^2)\Tr(Y^2)/(d^2-1)$. If $V_1$
is the design, apply this with $V=RV_1$, $X=\rho_0$ and $Y=A$; if $V_2$ is,
apply it with $V=V_2^\dagger$, $X=O$ and $Y=B$. Both cases give
Eq.~\eqref{eq:app-design-plateau}.
\end{proof}

The vacuous case is the fail condition of T2, and the straight line on a log plot is the fail condition of T3.

\subsection{Gradient estimation through a cut}
\label{app:train-cut}

The following result shows that cutting leaves the landscape and its gradient bound unchanged, and that the sampling cost of a gradient component is set by the interface norm.
It is the guarantee that the diagnostic of App.~\ref{app:variance} would test.

\begin{theorem}[Cutting guarantee]
\label{thm:fldc-train}
Let $U$ be a distributed FLDC with block support $\beta$ and local depth
$\chi$, cut through $\kappa$ channels with per-channel quasiprobability
norms $\gamma_a\le\bar\gamma$, and let $O$ be an observable with
$\|O\|\le1$. The product estimator
$\widehat o=\Gamma\,\operatorname{sgn}(a_{\bm\alpha})\,x$, with
$\Gamma=\prod_a\gamma_a\le\bar\gamma^\kappa$, is unbiased with
$|\widehat o|\le\Gamma$, so Eq.~\eqref{eq:prelim-shots} bounds the number of shots that resolve $\langle O\rangle$ to error $\varepsilon$.
A gradient component given by the parameter-shift rule,
$\partial_\mu C=\frac12[C(\theta_\mu+\frac\pi2)-C(\theta_\mu-\frac\pi2)]$,
is estimated to error $\varepsilon$ by estimating both shifted costs to
error $\varepsilon$. Removing interface gates deletes blocks, so neither
subcircuit's $\chi$ nor $\beta$ increases, and cutting leaves the landscape
pointwise unchanged, so for blocks drawn from local $2$-designs the bound of
Eq.~\eqref{eq:prelim-gradient}~\cite{ZhangLiuZhang2024FLDC} applies
verbatim. Consequently, for a cost $C=\sum_\nu\lambda_\nu O_\nu$ with
$|\supp(O_\nu)|\le r$, $\|O_\nu\|\le1$ and $\Lambda=\sum_\nu|\lambda_\nu|$,
resolving $\partial_\mu C$ to a constant fraction of its root-mean-square
value is achieved with
\begin{equation}
  N=O\Bigl(\Lambda^2\,\bar\gamma^{2\kappa}\,4^{r\chi\beta}
  \Bigl(\sum_\nu\lambda_\nu^2\Bigr)^{-1}\log\frac1\delta\Bigr)
\end{equation}
shots per shifted cost, with the sum as in Eq.~\eqref{eq:prelim-gradient};
this is independent of $n$ whenever $\Lambda$ is.
\end{theorem}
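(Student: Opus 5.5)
The proof splits into four parts: unbiasedness and boundedness of the product estimator, the parameter-shift reduction, invariance of the gradient bound under cutting, and the final shot count.

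\emph{Estimator.} By Eq.~\eqref{eq:prelim-circuit-qpd} with independent per-channel decompositions, $\mathcal U=\sum_{\bm\alpha}a_{\bm\alpha}\mathcal F_{\bm\alpha}$ with $a_{\bm\alpha}=\prod_a a_{a,i_a}$ and $\sum_{\bm\alpha}|a_{\bm\alpha}|=\prod_a\gamma_a=\Gamma$. Sample $\bm\alpha$ with probability $|a_{\bm\alpha}|/\Gamma$ and let $x\in[-1,1]$ be the signed knitted outcome of $O$, measured in its product eigenbasis after Pauli expansion. Then
\[
\mathbb E[\widehat o]=\sum_{\bm\alpha}|a_{\bm\alpha}|\operatorname{sgn}(a_{\bm\alpha})\,\mathbb E[x\mid\bm\alpha]=\Tr[O\,\mathcal U(\rho_0)],
\]
and $|\widehat o|\le\Gamma$. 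Hoeffding's inequality then gives Eq.~\eqref{eq:prelim-shots} directly.

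\emph{Parameter shift.} For a Pauli-generated parameter, the shift rule is exact. Estimating the two shifted costs to error $\varepsilon$ gives error at most $\tfrac12(\varepsilon+\varepsilon)=\varepsilon$. The shifted circuits are cut with the same decompositions, so the same norm $\Gamma$ applies.

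\emph{Invariance of the gradient bound.} Cutting replaces the crossing blocks by a linear combination, and that combination reproduces $\mathcal U$ exactly. Hence $C(\boldsymbol\theta)$ is unchanged pointwise, and so is its variance over the block ensemble. The bound of Eq.~\eqref{eq:prelim-gradient} concerns the uncut FLDC and transfers verbatim. Removing the interface blocks, or inserting the $O(1)$-support local branch operations, only deletes blocks. If one counts the inserted local operations as blocks, they replace deleted ones on the same qubits, so $\beta$ and $\chi$ do not increase. I would state this as a remark rather than a computation.

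\emph{Shot count.} Since $\mathbb E[\partial_\mu C]$ is generally zero at random initialization, the root-mean-square value is $\sqrt{\operatorname{Var}[\partial_\mu C]}\ge\sqrt g$. I would target error $\varepsilon=c\sqrt g$. The cost $C=\sum_\nu\lambda_\nu O_\nu$ has two natural estimators:
\begin{itemize}
\item estimate each term separately, which incurs an extra factor in the number of terms;
\item sample $\nu$ with probability $|\lambda_\nu|/\Lambda$ and scale by $\Lambda\operatorname{sgn}(\lambda_\nu)$.
\end{itemize}
I would use the second, giving single-shot outcomes bounded by $\Lambda\Gamma$. Hoeffding then gives
\[
N=O\!\left(\frac{\Lambda^2\Gamma^2}{g}\log\frac1\delta\right).
\]
Substituting $g=2\,4^{-r\chi\beta}\sum\lambda_\nu^2$ and $\Gamma\le\bar\gamma^\kappa$ yields the stated expression.

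The main obstacle is interpretive rather than technical. The statement says the bound "applies verbatim", but the estimator's randomness (shots) and the initialization randomness (angles) must be kept separate. The shot bound holds conditionally for each fixed $\boldsymbol\theta$. "Resolving to a constant fraction of the root-mean-square value" is then a statement about the typical magnitude $\sqrt g$, not a guarantee for every draw. I would state this explicitly. Independence of $n$ follows once $\Lambda$, $\kappa$, $r$, $\chi$ and $\beta$ are bounded, and once $\sum\lambda_\nu^2$ over overlapping terms is bounded below.
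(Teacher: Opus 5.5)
Your proposal is correct and follows the paper's proof essentially step for step. The steps are the same: unbiasedness and $|\widehat o|\le\Gamma$ from sampling branches with probability $|a_{\bm\alpha}|/\Gamma$; pointwise invariance of the landscape by linearity of the reconstruction, so the uncut bound applies; and the shot count from Hoeffding, with single-shot outcomes in $[-\Lambda\Gamma,\Lambda\Gamma]$, at target error $\varepsilon=\Theta(\sqrt{g})$. Your added clarifications make explicit what the paper leaves implicit: the shot bound holds conditionally on $\boldsymbol\theta$, and $n$-independence also requires the overlapping $\sum_\nu\lambda_\nu^2$ to stay bounded below.
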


\begin{proof}
Unbiasedness and $|\widehat o|\le\Gamma$ follow from sampling $\bm\alpha$
with probability $|a_{\bm\alpha}|/\Gamma$ and $|x|\le\|O\|$. The landscape statement is linearity of the
reconstruction in expectation. For the cost, each shot lies in
$[-\Lambda\Gamma,\Lambda\Gamma]$, and combining
Eq.~\eqref{eq:prelim-shots} at $\varepsilon=\Theta(v^{1/2})$ with the
variance bound $v\ge2\cdot4^{-r\chi\beta}\sum_\nu\lambda_\nu^2$ of
Eq.~\eqref{eq:prelim-gradient} gives the shot count.
\end{proof}

\section{Hardness at a bounded interface}
\label{app:hard}

This appendix shows that the three structural properties targeted in the
main text are mutually compatible. In particular, finite local depth and a
bounded interface do not by themselves force the branch tasks to be
classically easy. The result is an existence statement: it places a
worst-case hard FLDC inside one subcircuit and connects it to another
subcircuit through a constant-size interface. It does not identify a natural
average-case hard instance.

\begin{theorem}[Subcircuit hardness at bounded interface]
\label{thm:hardness}
Local-observable estimation to constant additive accuracy for
two-dimensional sequential FLDCs is \textsf{BQP}-complete in the worst
case~\cite[Sec.~V]{MalzTrivedi2025}. Consequently, for every constant $\kappa$
there exists a $\kappa$-interface distributed FLDC family whose
subcircuit-local estimation task remains \textsf{BQP}-hard after the
interface gates are removed.
\end{theorem}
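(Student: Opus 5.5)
The first sentence is imported: I take the \textsf{BQP}-completeness of constant-accuracy local-observable estimation for two-dimensional sequential FLDCs from Ref.~\cite[Sec.~V]{MalzTrivedi2025}, and check only that its circuit class meets Definition~\ref{def:fldc}. That class consists of sequential isoTNS preparation circuits, with blocks of $O(\log D)$ qubits and boundedly many blocks per qubit, so $\beta$ and $\chi$ are bounded. Membership in \textsf{BQP} is immediate, since one can run the circuit and measure. All the work is therefore in the second sentence, which I prove by an explicit padding construction.

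Fix a constant $\kappa$ and a worst-case hard family $\{C_m\}$ on $m$ qubits. For each instance I build a distributed circuit on $n=m+m'$ qubits as follows. Subcircuit $A$ runs $C_m$ unchanged. Subcircuit $B$ is any FLDC on $m'=\Theta(m)$ qubits, for instance a product of single-qubit blocks, so that the partition is balanced. Finally I append $\kappa$ two-qubit contact gates $\exp[-i\phi_j P^A_jP^B_j/2]$, one per pair of distinct qubits $(a_j,b_j)$ with $a_j\in A$ and $b_j\in B$, placed after $C_m$. Each qubit then gains at most one block, so $\chi$ grows by at most one and $\beta$ stays bounded; the family is therefore an FLDC with $\kappa(\Pi)=\kappa$.

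Next, the subcircuit-local task stays hard. The branches of a decomposition such as Fig.~\ref{fig:experiment}(c), or any decomposition of the form of Eq.~\eqref{eq:prelim-qpd}, insert local operations only on $a_j$ and $b_j$, and only after $C_m$. Every branch task on $A$ includes estimating $\langle O\rangle$ for an observable $O$ of bounded support away from $\{a_j\}$ in the identity branch ($c^2$ term), with the severed gates simply deleted. This equals $\langle O\rangle_{C_m}$ exactly, so a classical solver of the branch task solves the hard problem. Observables touching $a_j$ are not needed. This is a reduction on the set of branch tasks, which is the unit of hardness in criterion S1.

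The main obstacle is making ``remains \textsf{BQP}-hard'' robust rather than trivially true. If the cited hardness uses observables whose light cone is only guaranteed after a fixed final layer, attaching the contacts there could interfere. I would handle this by placing the contacts on fresh ancilla qubits of $A$ that $C_m$ never touches, so the reduction is literally equality of expectation values. If the hardness statement instead requires that the chosen hard observable be supported anywhere, I would note that the contacts act after the observable's cone closes and therefore commute out. For the uniformity conditions, namely a polynomial-time generator and balanced partitions, I would check that the padding preserves polynomial-time generation.
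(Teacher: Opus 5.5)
Your proposal is correct and follows the paper's proof. Both place the worst-case \textsf{BQP}-hard sequential FLDC family of Ref.~\cite{MalzTrivedi2025} unchanged on subcircuit $A$ and an arbitrary fixed FLDC on $B$, and add $\kappa$ contact gates on designated boundary ancillas, so local depth grows only on $O(\kappa)$ qubits; deleting the interface gates (your identity branch) then returns exactly the original hard instances. The paper writes $U^{(x)}=(U_A^{(x)}\otimes U_B)\,U_{\rm port}$, so its port gates act first rather than after $C_m$, but this placement is immaterial because the reduction only uses the circuit with the contacts removed.
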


\begin{proof}
Membership in \textsf{BQP} is immediate, since a quantum computer can run
the corresponding subcircuit.

For hardness, let $\{U_A^{(x)}\}$ be the worst-case \textsf{BQP}-hard
family of two-dimensional sequential FLDCs from
Ref.~\cite{MalzTrivedi2025}. Each $U_A^{(x)}$ has bounded block support and
bounded local depth. Place this family on subcircuit $A$. Let $U_B$ be any
fixed FLDC on subcircuit $B$, and add a fixed set $U_{\rm port}$ of
$\kappa=O(1)$ contact gates between designated boundary ancillas of $A$ and
$B$. Define
\begin{equation}
  U^{(x)}=(U_A^{(x)}\otimes U_B)\,U_{\rm port}.
\end{equation}
The resulting distributed circuit is still an FLDC: the contact gates
increase the local depth only on $O(\kappa)$ boundary ancillas. The
interface has constant size by construction.

Now remove the interface gates and consider the branch task on subcircuit
$A$. It contains the original family $\{U_A^{(x)}\}$. Therefore, a
classical polynomial-time algorithm for the branch task would give a
classical polynomial-time algorithm for the \textsf{BQP}-hard local
observable problem of Ref.~\cite{MalzTrivedi2025}. Hence the branch task is
\textsf{BQP}-hard in the worst case.
\end{proof}

\begin{corollary}[Coexistence]
\label{cor:coexistence}
There exist two-dimensional distributed FLDC families with
$\kappa,\chi,\beta=O(1)$ for which all of the following hold.

\begin{enumerate}[label=(\roman*),leftmargin=*]
\item The interface is cheaply cuttable:
\[
  \Gcut\le 3^\kappa=O(1),
\]
so $O(9^\kappa\varepsilon^{-2}\log\delta^{-1})$ shots suffice per
bounded-norm observable.

\item Under the local-$2$-design initialization assumptions of
Ref.~\cite{ZhangLiuZhang2024FLDC}, local costs with $n$-independent
coefficients have gradients with variance bounded below by
\[
  2\,4^{-r\chi\beta}
  \sum_\nu \lambda_\nu^2
  =\Omega(1)
\]
whenever the sum is bounded below.

\item The subcircuit-local observable-estimation task is
\textsf{BQP}-hard in the worst case.

\item Every state prepared by the architecture has seam entropy
$\Ehalf\le\kappa$, by Proposition~\ref{prop:port-capacity}.
\end{enumerate}

Thus bounded interface, finite local depth, and worst-case hard branch
tasks can coexist. Conversely, a family whose target states have seam
entropy growing with $n$ cannot satisfy the first item for exact knitting,
by Corollary~\ref{cor:area-law}.
\end{corollary}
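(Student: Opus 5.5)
The corollary follows by instantiating the construction in the proof of Theorem~\ref{thm:hardness} and checking its four items one at a time against results already proved. I would fix $\kappa$ and take $U^{(x)}=(U_A^{(x)}\otimes U_B)\,U_{\rm port}$, where $U_A^{(x)}$ is the worst-case hard two-dimensional sequential FLDC family of Ref.~\cite{MalzTrivedi2025} and $U_B$ is a fixed FLDC. The port $U_{\rm port}$ consists of $\kappa$ two-qubit Pauli rotations $\exp[-i\phi_j P^A_jP^B_j/2]$ between boundary ancillas. As observed in that proof, the combined family has $\beta,\chi=O(1)$: the port touches only $O(\kappa)$ ancillas, and each of them is acted on by at most $\kappa$ additional blocks. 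Choosing the contact gates to be Pauli rotations is the one concrete design choice I would make, because it makes every earlier bound apply directly.

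For item~(i), I would cut each contact gate independently using the decomposition of Proposition~\ref{prop:port-cost}. This gives $\Gcut\le\prod_j(1+2|\sin\phi_j|)\le3^\kappa$, which is $O(1)$. Substituting this into Eq.~\eqref{eq:prelim-shots}, with $\Gcut^2\le9^\kappa$, yields the stated shot count.

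For item~(ii), I would invoke Eq.~\eqref{eq:prelim-gradient} verbatim. Theorem~\ref{thm:fldc-train} adds that cutting only deletes blocks and leaves the landscape unchanged, so the bound $2\cdot4^{-r\chi\beta}\sum_\nu\lambda_\nu^2$ carries over to the distributed family. Since $r,\chi,\beta$ are constants, the bound is $\Omega(1)$ whenever the coefficient sum is bounded below. The only point to verify is that the design hypothesis is imposed on the blocks of this family, and that hypothesis is already part of the statement.

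For item~(iii), I would quote Theorem~\ref{thm:hardness} directly, with the one check its proof already makes: the branch task on $A$ contains the unmodified family $U_A^{(x)}$. To do this, I would take the identity branch, whose local operations are trivial on the designated ancillas, and choose observables supported away from those ancillas. For item~(iv), the output is a product input acted on by local gates and $\kappa$ Pauli contact gates, so Proposition~\ref{prop:port-capacity} gives $\Ehalf\le\kappa$ for every angle choice.

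Finally, the converse follows from Corollary~\ref{cor:area-law}. Exact knitting with $\Gcut=O(1)$, or even $\poly(n)$, forces $\Ehalf=O(\log n)$, which excludes seam entropy growing faster than logarithmically. Item~(iii) is the main obstacle. Hardness must hold for the branch task as defined, meaning with the inserted local operations, so I would have to confirm that these insertions, which sit on ancillas outside the light cone of the hard observable, cannot reduce the instance to an easy one. A cleaner alternative is to place the port ancillas so that they do not interact with $U_A^{(x)}$ at all, which makes the containment immediate.
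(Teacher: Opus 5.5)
Your proposal is correct and is essentially the paper's argument: you instantiate the construction of Theorem~\ref{thm:hardness} and read off (i) from Proposition~\ref{prop:port-cost} with Eq.~\eqref{eq:prelim-shots}, (ii) from Eq.~\eqref{eq:prelim-gradient} with Theorem~\ref{thm:fldc-train}, (iii) from the identity branch, and (iv) from Proposition~\ref{prop:port-capacity}. Two side remarks: your explicit choice of Pauli-rotation contact gates is exactly what (i) and (iv) tacitly require, and the obstacle you flag for (iii) does not arise, because the identity branch is literally $U_A^{(x)}$ (the paper's ``remove the interface gates'' step) and worst-case hardness needs only one hard branch; for the converse, $\Gcut=O(1)$ gives $\Ehalf\le\log_2[(\Gcut+1)/2]=O(1)$ via Proposition~\ref{prop:cost-entropy}, which excludes any unbounded growth of the seam entropy, not only the super-logarithmic growth you state.
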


The trainability statement in item (ii) and the hardness statement in
item (iii) refer to different kinds of evidence: random initialization for
the former, and worst-case instances for the latter.

\section{Tier-1 screen of explicit generators}
\label{app:screen}

This appendix illustrates the Tier-1 entries of Table~\ref{tab:families} on explicit circuit generators, computing the quantities of criteria T1, T2, S2, and C1 and checking that they follow the trends recorded there.
We screen eight generators with 7 to 517 qubits (Fig.~\ref{fig:tier1-screen} and Table~\ref{tab:tier1-screen}).
Seven of them instantiate rows of Table~\ref{tab:families}, and constant-depth brickwork serves as a shallow control.

The single bridge is grown as a $1\times k$ strip, the bounded-contact toric code as two such strips joined by one contact gate, the extensive-seam toric code as patches from $2\times1$ to $22\times11$ plaquettes cut along a straight seam, and the chain generators and the hardware-efficient ansatz on chains and $3\times k$ grids.
Routed QAOA applies $p=2$ layers of all $n(n-1)/2$ pairwise interactions through a linear SWAP network, as its recorded gate counts, exactly $n(n-1)$, show.
Cones are backward gate-list light cones of up to twelve probed cost observables on the uncut generator, with every recorded gate, SWAP gates included, treated as acting on all its qubits, and they bound the cones of the cutting branches, whose inserted operations are local.
Partitions come from a spectral bisection with Fiduccia--Mattheyses refinement, replaced by a multilevel variant at some sizes of the grid and seam generators, with 2 to 10 seeded restarts, and $g$ is evaluated with the smallest sum of squared cost coefficients over parameterized blocks.
All quantities follow directly from the generated gate lists, without state simulation, so they are exact properties of the generators, except for the balanced partition, which a seeded heuristic search provides.

\begin{figure}[t]
\centering
\includegraphics[width=\textwidth]{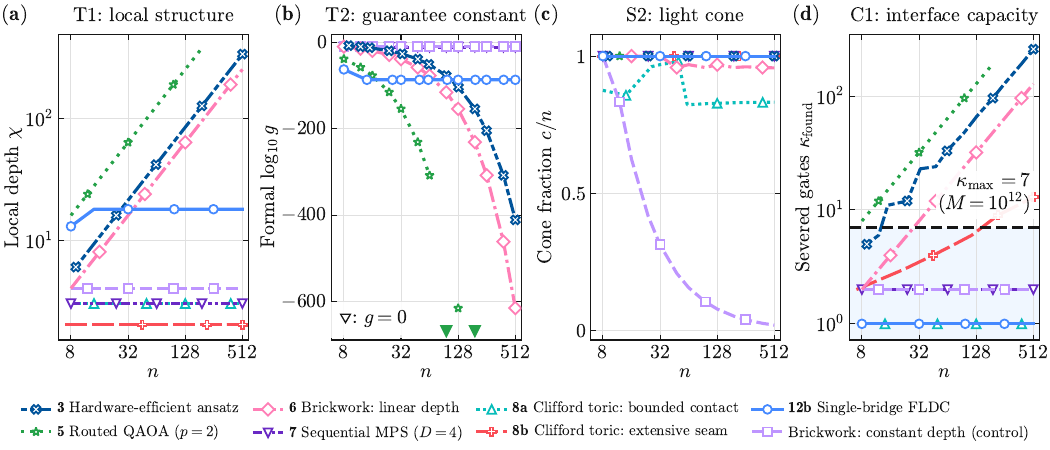}
\caption{Tier-1 screen of eight explicit generators against the number of
qubits $n$.
(a)~Local depth $\chi$ (T1).
(b)~Formal $\log_{10}g$ of Eq.~\eqref{eq:prelim-gradient} (T2), with no
curve for the Clifford generators, which have no variational blocks, and
filled triangles at an arbitrary height where $g=0$.
(c)~Fraction $c/n$ of qubits in the largest probed backward causal cone (S2).
(d)~Severed gates $\kappa_{\mathrm{found}}$ of the partition found by the
heuristic search (C1), an upper bound on $\kstar(n)$, with the
independent-cut limit $\kappa_{\max}^{\mathrm{ind}}=7$ for $10^{12}$ shots,
$\varepsilon=0.01$, and $\delta=0.05$ (dashed) and the affordable region
below it (shaded).
All values are structural counts or the result of a seeded heuristic search,
so there are no statistical error bars. All horizontal axes are logarithmic
in base 2, the vertical axes of (a) and (d) are logarithmic, the panel
titles name the criteria of Table~\ref{tab:properties}, and the legend
below the panels applies to all four, with the row of
Table~\ref{tab:families} in front of each entry and (control) marking the
constant-depth brickwork, which has no row.}
\label{fig:tier1-screen}
\end{figure}

\tableTierOneScreen

\textit{Local depth} (T1).
For the stated block decompositions, all generators have bounded block support, $2\leq\beta\leq4$, so T1 reduces to checking whether $\chi$ remains bounded as $n$ grows.
The hardware-efficient ansatz (row~3) and linear-depth brickwork (row~6a) fail this check because their increasing layer counts give $\chi=\Theta(n)$.
Routed QAOA (row~5) also has $\chi=\Theta(n)$, but here the growth comes from routing swaps, despite the fixed number of QAOA layers, $p=2$ (Fig.~\ref{fig:tier1-screen}(a)).
Both Clifford toric-code generators (rows~8a and~8b) have bounded local depth, although their lack of variational parameters makes them structural comparisons rather than trainability tests.
In contrast to the three growing-depth generators, the single-bridge FLDC (row~12b), sequential MPS preparation at fixed $D=4$ (row~7), and constant-depth brickwork (control) have $\chi=O(1)$ and pass T1.

\textit{Guarantee constant} (T2).
For the hardware-efficient ansatz (row~3) and linear-depth brickwork (row~6a), growing $\chi$ suppresses the formal bound through $4^{-r\chi\beta}$, giving $\log_{10}g\approx-411$ and $-616$, respectively, at the largest sizes (Fig.~\ref{fig:tier1-screen}(b)).
Routed QAOA (row~5) instead gives $g=0$ for parameters whose blocks overlap no cost term, making the coefficient sum in Eq.~(\ref{eq:prelim-gradient}) vanish.
Neither a collapsing nor a zero lower bound establishes a barren plateau.

The single-bridge FLDC (row~12b) has bounded local depth but does not satisfy the local-2-design assumptions: the encoder is fixed, and the bridge and anchor are not drawn from local 2-designs.
Its formal value $g\sim10^{-86}$ is therefore only formal, and trainability is assessed directly through T3 (Sec.~\ref{subsec:app-tier2} and App.~\ref{app:variance}).
The Clifford generators (rows~8a and~8b) have no variational parameters, so the \textsc{Train} criteria do not apply.

In contrast, sequential MPS preparation at fixed $D=4$ (row~7) and constant-depth brickwork (control) pass the scaling check: their formal bounds remain nonzero and independent of $n$, with $\log_{10}g\approx-10$ and $-9$, respectively.
Under the local-2-design assumptions, these would provide size-independent variance lower bounds, although their small magnitudes do not ensure gradient resolution within a practical shot budget.

\textit{Light cone} (S2).
Constant-depth brickwork (control) is the only family which fails the cone-growth check: its largest probed cone saturates at 10 qubits for $n\geq12$, permitting efficient classical simulation by Proposition~\ref{prop:lightcone} (Fig.~\ref{fig:tier1-screen}(c)).
All other generators have extensive probed cones, but this does not establish hardness, as the classically simulable Clifford codes (rows~8a and~8b) and fixed-bond MPS preparation (row~7) illustrate.

\textit{Interface} (C1).
We compare the heuristic upper bound $\kappa_{\mathrm{found}}\geq\kstar(n)$ with the independent-cut threshold $\kappa_{\max}^{\mathrm{ind}}=7$ of Eq.~\eqref{eq:kmax}, for $M=10^{12}$ shots, $\varepsilon=0.01$, and $\delta=0.05$ (Fig.~\ref{fig:tier1-screen}(d)).
The growing-local-depth generators (rows~3, 5, and~6a) exceed this threshold by $n=32$, and the extensive-seam Clifford code (row~8b) near $n=200$.
In contrast, the single-bridge FLDC (row~12b) and bounded-contact Clifford code (row~8a) pass with $\kappa_{\mathrm{found}}=1$, while fixed-bond MPS preparation (row~7) and constant-depth brickwork (control) pass with $\kappa_{\mathrm{found}}=2$ at every tested size.
Exceeding this sufficient threshold does not establish unaffordability: angle-dependent norms, a joint decomposition where applicable, or a better partition may still satisfy C1.

The Tier-1 screen illustrates the independence of local depth, interface size, and cone growth.
Bounded local depth does not imply a bounded interface or cone, while extensive cones can coexist with classically easy branch tasks.

\section{Model, circuit, and exact endpoints}
\label{app:endpoints}

The test of Sec.~\ref{sec:application} rests on exactly known states, and this appendix derives them.
We first show that the circuit, built from the encoder and the pair blocks of the sweeps, has bounded local depth and is therefore an FLDC.
At $h=0$, we find the exact ground state and the angles that prepare it in closed form, which also provides the contact benchmark of Sec.~\ref{subsec:app-knit}.
Since we reuse the zero-field angles at nonzero field to avoid training, we also give their energy there in closed form and compare it with the ground energy.
At $h=1$, the ground state is again known in closed form, which gives an exact reference at the other end of the field sweep.

\subsection{Encoder and sweeps}
\label{app:ep-sweeps}

A star and a plaquette share zero or two edges, so all stabilizers commute.
A patch has $(n_x+1)(n_y+1)$ stars and $n_xn_y$ plaquettes, $n+1$ generators with one relation, since the product of all stars is the identity.
The stabilizer group therefore has rank $n$, and an isolated patch at zero field has a unique ground state $|\Omega_p\rangle$, the common $+1$ eigenstate of all stars and plaquettes.
Since $|0^n\rangle$ is a $+1$ eigenstate of every plaquette, applying the independent star projectors prepares it,
\begin{equation}
  E_p|0^n\rangle=|\Omega_p\rangle=\prod_{v\neq v_0}\frac{\openone+S_{v,p}}{\sqrt2}\,|0^n\rangle ,
  \label{eq:ep-encoder}
\end{equation}
where $v_0$ is the top-right vertex.
The factor $2^{-1/2}$ per star normalizes the state, because distinct products of stars map $|0^n\rangle$ to orthogonal basis states.
A circuit for Eq.~\eqref{eq:ep-encoder} treats the vertices row by row from the bottom-left corner.
For each $v\neq v_0$ it applies a Hadamard gate to a pivot spin on the edge from $v$ to its right or upper neighbor, followed by controlled-NOT gates from the pivot to the other spins of $S_{v,p}$.
The pivot lies on no earlier star, so it is still in $|0\rangle$, and the step applies $(\openone+S_{v,p})/\sqrt2$.
Every spin enters at most two such steps, so the encoder has bounded local depth.
The encoder used in the simulations follows Ref.~\cite{Higgott2021Encoding} and prepares the same state.

For two consecutive edge spins $(u,v)$, the pair block of Fig.~\ref{fig:experiment}(b) is
\begin{equation}
  U_{uv}(\boldsymbol\vartheta)
  =R_{ZZ}^{uv}(\vartheta_5)\,R_Z^{v}(\vartheta_4)\,R_Z^{u}(\vartheta_3)\,R_Y^{v}(\vartheta_2)\,R_Y^{u}(\vartheta_1),
  \label{eq:ep-pair}
\end{equation}
with $R_P(\vartheta)=\exp(-i\vartheta P/2)$ for a Pauli string $P$.
The local depth of Sec.~\ref{subsec:app-model} counts the blocks acting on each spin in the generated circuits.
It is dominated by the sixteen pair blocks that four sweeps apply to an interior spin, and it does not grow with the length of the patch, which we checked on strips of up to 512 spins.

\subsection{Zero field}
\label{app:ep-zero}

Let $|\Omega_0\rangle$ be the product of the two patch ground states, so that $\mathcal H_{\mathrm s}|\Omega_0\rangle=(N+2)|\Omega_0\rangle$.
The contact $Q$ anticommutes with the two stars incident to $a$ and the two incident to $b$, and commutes with every other stabilizer.
Hence $|\Omega_1\rangle=Q|\Omega_0\rangle$ is orthogonal to $|\Omega_0\rangle$, with $\mathcal H_{\mathrm s}|\Omega_1\rangle=(N-6)|\Omega_1\rangle$.
Their span is invariant under $H(0,t)$, which acts on it as
\begin{equation}
  H(0,t)=(-N+2)\,\openone+
  \begin{pmatrix}
    -4 & -t\\
    -t & 4
  \end{pmatrix}.
  \label{eq:ep-block}
\end{equation}
With $D_t=\sqrt{16+t^2}$ and $\tan\phi=t/4$, the lower eigenpair is
\begin{equation}
  E_0(0,t)=-N+2-D_t,
  \qquad
  |\Omega(0)\rangle=\cos\frac{\phi}{2}\,|\Omega_0\rangle+\sin\frac{\phi}{2}\,|\Omega_1\rangle .
  \label{eq:ep-zero}
\end{equation}
Relative to the bridge-off state $|\Omega_0\rangle$, whose energy is $-N-2$, the coupling lowers the energy by $D_t-4\approx0.031$ at $t=0.5$, independently of the size.
It is the global ground state.
The four stars $S_1,\ldots,S_4$ that anticommute with $Q$ form, with the contact, the part $K=-\sum_iS_i-tQ$, and the anticommutation gives $K^2=(\sum_iS_i)^2+t^2\openone\le D_t^2\openone$, so $K\ge-D_t$.
Every other stabilizer contributes at least $-1$, and Eq.~\eqref{eq:ep-zero} attains all these bounds, which proves optimality and uniqueness.

The circuit prepares this state exactly.
The bridge gives $R_{ZZ}^{ab}(\phi)|\Omega_0\rangle=\cos\frac{\phi}{2}|\Omega_0\rangle-i\sin\frac{\phi}{2}|\Omega_1\rangle$, with the right amplitudes but the wrong relative phase.
A star incident to $a$ has eigenvalue $+1$ on $|\Omega_0\rangle$ and $-1$ on $|\Omega_1\rangle$, so the anchor rotation at $\alpha=\pi/2$ corrects the phase and yields $e^{-i\pi/4}|\Omega(0)\rangle$.
Every single-spin Pauli operator has zero expectation value in $|\Omega(0)\rangle$, because it anticommutes with a stabilizer of each component and cannot map one component onto the other.
In particular $z_a=z_b=0$, and since $Q$ exchanges the two components, the contact correlation of Eq.~\eqref{eq:contact-corr} is $C_{ab}=\langle Q\rangle=\sin\phi=t/D_t$ at every size.

\subsection{Zero-field angles at nonzero field}
\label{app:zf-angles}

The zero-field angles prepare the same state at every field.
Since $\langle\mathcal H_{\mathrm f}\rangle=0$ and $\langle\mathcal H_{\mathrm s}\rangle=N-2+16/D_t$ in this state, their energy is exactly linear in $h$,
\begin{equation}
  E_{\mathrm{zf}}(h,t)=-N+2-D_t+h\Bigl(N-2+\frac{16}{D_t}\Bigr),
  \label{eq:ep-zf}
\end{equation}
which is the value the device should return for these angles.
By the Hellmann--Feynman identity $\partial_hE_0=\langle\partial_hH\rangle$ with $\partial_hH=\mathcal H_{\mathrm s}-\mathcal H_{\mathrm f}$, this line is also the first-order expansion of the exact ground energy around $h=0$.
It is a variational upper bound for all $h$ and differs from the ground energy at second order in $h$.

Relative to the references of Sec.~\ref{subsec:app-sim}, its excess at $N=48$ is $9.1\times10^{-3}$, $0.051$, $0.091$, and $0.14$ per spin at $h=0.1$, $0.2$, $0.25$, and $0.3$, respectively.
The excess decreases slowly with $N$, a boundary effect; for example, at $N=80$ it is $0.008$ per spin at $h=0.1$ and $0.081$ per spin at $h=0.25$.

With no dressing and $\alpha=\pi/2$, the knitting decomposition inserts only Pauli operators, $Z$ measurements, and $R_Z(\pm\pi/2)$ rotations, so every branch circuit is a Clifford circuit at every field.

\subsection{Polarized limit}
\label{app:ep-polarized}

At $h=1$ the stabilizers drop out, and $H(1,t)=-\sum_{j\notin\{a,b\}}(X_j+Z_j)+K_{ab}$ with $K_{ab}=-(X_a+Z_a+X_b+Z_b)-tQ$.
Every spin other than $a$ and $b$ is in the single-spin ground state $|q_+\rangle=\cos(\pi/8)|0\rangle+\sin(\pi/8)|1\rangle$, with $(X+Z)|q_+\rangle=\sqrt2\,|q_+\rangle$.
The antisymmetric state of the contact pair has energy $t$, and in the symmetric basis $\{|00\rangle,(|01\rangle+|10\rangle)/\sqrt2,|11\rangle\}$
\begin{equation}
  K_{ab}^{\mathrm{sym}}=
  \begin{pmatrix}
    -2-t & -\sqrt2 & 0\\
    -\sqrt2 & t & -\sqrt2\\
    0 & -\sqrt2 & 2-t
  \end{pmatrix},
  \label{eq:ep-sym}
\end{equation}
whose eigenvalues solve $\lambda^3+t\lambda^2-(t^2+8)\lambda-t^3=0$.
With $\lambda_0(t)$ the smallest root and $|\varphi_0\rangle$ its pair state,
\begin{equation}
  E_0(1,t)=-(N-2)\sqrt2+\lambda_0(t),
  \qquad
  |\Omega(1)\rangle=|\varphi_0\rangle_{ab}\otimes|q_+\rangle^{\otimes(N-2)},
  \label{eq:ep-one}
\end{equation}
with $\lambda_0=-3.1262$ at $t=0.5$.
The ground state is a product state except for the contact pair, which is entangled for every $t>0$.
\section{Transferring angles between patch sizes}
\label{app:angles}

This appendix develops a method to transfer trained angles from small patches to larger ones, which avoids optimization at the target size.
Such a method is needed because exact classical methods have limited reach, with our exact evaluator being able to train angles only up to $N=48$.
Transferring variational parameters between problem sizes has precedents in the quantum approximate optimization algorithm, where angles optimized on small graphs can remain effective on larger graphs with similar local structure~\cite{Galda2021Transfer, WurtzLykov2021FixedAngles, Shaydulin2023ParameterTransfer,Sureshbabu2024ParameterSetting}.
That argument rests on a causal neighborhood of fixed depth, which a sequential FLDC does not have, so we motivate transfer instead by the locality of the ground-state response to the field (Figure~\ref{fig:transfer}(a)).
Figure~\ref{fig:transfer}(b) shows the route we take: we first continue in $h$ at a source size, then transfer the angles in $N$ at fixed field.
Appendix~\ref{app:qac} gives the locality argument, and App.~\ref{app:transfer} tests a rule that matches local environments, which fails for generic patches, but works strips. We leave expansion to compact patches as future work.

\figTransfer

\subsection{Why angles might transfer between sizes}
\label{subsec:app-angles}\label{app:qac}

The transfer rests on how the ground state responds to a change of the field.
Let $|\Omega(h)\rangle$ be the nondegenerate ground state of $H(h)$ with energy $E_0(h)$, and $|m(h)\rangle$ an excited eigenstate with energy $E_m(h)$.
Differentiating the eigenvalue equation $H|\Omega\rangle=E_0|\Omega\rangle$ with respect to $h$ and projecting onto $\langle m|$ gives the first-order response
\begin{equation}
  \langle m|\partial_h\Omega\rangle=-\frac{\langle m|\partial_hH|\Omega\rangle}{E_m-E_0}.
  \label{eq:transfer-response}
\end{equation}
Each excited state contributes with a weight inversely proportional to its excitation energy, so the state is most sensitive to the field where the gap $\gamma_N(h)$ between the two lowest levels is small (Fig.~\ref{fig:transfer}(a)).
A statement that holds uniformly in the size therefore needs bounded interactions and a gap that stays open along the whole path $I=[0,h]$ for every patch size,
\begin{equation}
  \gamma_{\min}=\inf_N\inf_{h'\in I}\gamma_N(h')>0 ,
  \label{eq:transfer-uniform-gap}
\end{equation}
which gaps computed for finitely many patches cannot establish.

Under this condition, quasi-adiabatic continuation~\cite{HastingsWen2005,Bachmann2012Automorphic} replaces the sum over excited states in Eq.~\eqref{eq:transfer-response} by a single operator.
It constructs a Hermitian operator $\mathcal D(h)$, the quasi-adiabatic generator, whose commutator with the ground-state projector $\Pi_0(h)=|\Omega(h)\rangle\langle\Omega(h)|$ gives the change of the projector with the field, together with the unitary $\mathcal U(h)$ that this generator produces,
\begin{equation}
  \partial_h\Pi_0=i\bigl[\mathcal D,\Pi_0\bigr],
  \qquad
  \partial_h\mathcal U=i\mathcal D\,\mathcal U,
  \qquad
  \mathcal U(0)=\openone .
  \label{eq:qa-flow}
\end{equation}
Integrating this flow from $h=0$ gives $\Pi_0(h)=\mathcal U(h)\Pi_0(0)\mathcal U(h)^\dagger$, so $\mathcal U(h)$ carries the exactly known zero-field ground state to the ground state at field $h$.
This is the dressing that the variational sweeps have to approximate.
For $H=\sum_jH_j$ and $0<\gamma\le\gamma_{\min}$, the generator is a sum $\mathcal D=\sum_j\mathcal D_j$ of pieces, one for each term of the Hamiltonian,
\begin{equation}
  \mathcal D_j(h)=\int_{-\infty}^{\infty}\!d\tau\,W_\gamma(\tau)\,e^{iH(h)\tau}\,\partial_hH_j(h)\,e^{-iH(h)\tau}.
  \label{eq:transfer-generator}
\end{equation}
Each piece takes the derivative $\partial_hH_j$ of one Hamiltonian term, evolves it in time under the full Hamiltonian, and averages the result over evolution times $\tau$ with a filter function $W_\gamma$.
The filter is real and odd, decays rapidly for $|\tau|\gtrsim1/\gamma$, and is chosen so that its Fourier transform equals $i/\omega$ at every frequency at least as large as the gap,
\begin{equation}
  \widehat W_\gamma(\omega)=\int_{-\infty}^{\infty}\!d\tau\,W_\gamma(\tau)\,e^{i\omega\tau}=\frac{i}{\omega}
  \qquad(|\omega|\ge\gamma).
  \label{eq:transfer-filter}
\end{equation}
Between the ground state and an excited state, the time evolution in Eq.~\eqref{eq:transfer-generator} multiplies the matrix element by $e^{i\omega\tau}$ with $\omega=E_m-E_0\ge\gamma$, and the integral over $\tau$ then yields the factor of Eq.~\eqref{eq:transfer-filter},
\begin{equation}
  \langle m|i\mathcal D|\Omega\rangle=-\frac{\langle m|\partial_hH|\Omega\rangle}{E_m-E_0},
  \label{eq:transfer-generator-check}
\end{equation}
which is Eq.~\eqref{eq:transfer-response}.
The generator therefore encodes the full response of the ground state, but, unlike the sum over excited states, it is built from the local terms of the Hamiltonian.

\figQuasiLocal

The locality of the pieces follows from the Lieb--Robinson bound~\cite{LiebRobinson1972}.
In a lattice system with bounded local interactions, a local operator evolved in time spreads at most with a finite velocity $v$, the Lieb--Robinson velocity.
After a time $\tau$, its weight outside a distance $v|\tau|$ from its initial support is exponentially small, although not zero, which defines an effective light cone (Fig.~\ref{fig:qac-range}(a)).
Because the filter suppresses evolution times beyond about $1/\gamma$ (Fig.~\ref{fig:qac-range}(b)), each piece $\mathcal D_j$ only involves the evolution of $\partial_hH_j$ up to those times, and it is therefore concentrated within a distance of order $\xi\sim v/\gamma$ of its term.
To make this quantitative, let $A_j(\tau)=e^{iH\tau}(\partial_hH_j)e^{-iH\tau}$, and let $A_j^{(r)}(\tau)$ be the same evolution under the Hamiltonian restricted to a distance $r$ beyond the support of $H_j$. Define $\mathcal D_j^{(r)}$ to be the piece built from $A_j^{(r)}$ with the same filter.
Splitting the integral in Eq.~\eqref{eq:transfer-generator} at a cutoff time $T>0$ gives
\begin{equation}
\begin{aligned}
  \bigl\|\mathcal D_j-\mathcal D_j^{(r)}\bigr\|
  \le{}&\int_{|\tau|\le T}\!d\tau\,|W_\gamma(\tau)|\,\bigl\|A_j(\tau)-A_j^{(r)}(\tau)\bigr\|\\
  &+2\,\|\partial_hH_j\|\int_{|\tau|>T}\!d\tau\,|W_\gamma(\tau)| .
\end{aligned}
  \label{eq:transfer-tail}
\end{equation}
The first term compares the full and the truncated evolution at short times, and the Lieb--Robinson bound makes their difference exponentially small as long as the light cone has not reached the truncation radius, $T<r/v$.
The second term collects the long times, which the decay of the filter suppresses.
Choosing $T$ just below $r/v$ therefore makes $\mathcal D_j^{(r)}$ a good approximation of $\mathcal D_j$, with an error that decays rapidly with $r$, but does not vanish~\cite{Bachmann2012Automorphic}.

This has a direct consequence for transfer.
Two terms, possibly in patches of different size, whose neighborhoods of radius $r$ carry identical labeled Hamiltonians along the whole path have identical truncated pieces $\mathcal D^{(r)}$ (Fig.~\ref{fig:qac-range}(c)).
The quasi-adiabatic dressing near such terms is therefore the same up to these tails, whatever the size of the patch, which suggests that gates whose surroundings agree should receive similar angles.
This motivates the transfer rules below, but is susceptible to errors  when an optimizer does not select circuit coordinates that follow the local pieces.

\subsection{Transfer rules}
\label{app:transfer}\label{app:transfer-joint}

We consider two rules that build the angles of a target patch from the best-found angles of smaller trained patches at the same field.
Both give every dressing angle a class and assign the same angle to gates of the same class, but they define the classes differently and obtain the source angles differently.

\textit{Classes.}
The role $\tau(\mu)$ of a dressing angle $\mu$ records the prefix or suffix stage, the sweep, the slot of its pair within the face, and the rotation within the pair block, but not the position of the face.
Its window of radius $r\in\{0.5,1,1.5,2.5\}$ is the square of half-width $r$ lattice units centered on the gate, see Fig.~\ref{fig:transfer}(c).
A class combines the role with the labeled Hamiltonian terms in the window, where a star truncated by the boundary differs from a full one.

\textit{Rule 1 (environment matching).}
Rule 1 takes its source angles $\widehat\theta^{\mathrm{src}}$ from patches optimized independently, and it identifies two classes when they agree up to one of the eight symmetries of the square lattice.
This identification ignores that a lattice symmetry need not preserve the order of the sweeps.
Let $G_r(\mu)$ be the set of source angles in the class of a target angle $\mu$, with resultant
\begin{equation}
  m_\mu=\frac{1}{|G_r(\mu)|}\sum_{\mu'\in G_r(\mu)}e^{i\widehat\theta^{\mathrm{src}}_{\mu'}},
  \qquad
  \rho_\mu=|m_\mu| .
  \label{eq:transfer-resultant}
\end{equation}
The transferred angle is the circular mean
\begin{equation}
  \theta^{\mathrm{tr}}_\mu=\operatorname{Arg}m_\mu ,
  \label{eq:app-transfer}
\end{equation}
which respects the $2\pi$ periodicity of a rotation angle.
Angles $0.02$ and $2\pi-0.02$ have circular mean zero, whereas their arithmetic mean is $\pi$.
A resultant length $\rho_\mu$ near one indicates tightly clustered source angles.
A target angle without a match receives the circular mean over all source angles of the same role.
The bridge and anchor angles carry no spatial environment, so each forms a class of its own and receives the circular mean over all sources.
Nothing overrides the rule at $h=0$, where we know analytic formulas.

\textit{Coverage.}
A rule can only transfer classes that occur in the sources.
For Rule 1, the coverage
\begin{equation}
  \mathcal C_r=\frac{\#\{\mu\text{ in the target dressing}:G_r(\mu)\neq\varnothing\}}{N_{\mathrm p}^{\mathrm{tar}}-2}
  \label{eq:transfer-coverage}
\end{equation}
is the fraction of the $N_{\mathrm p}^{\mathrm{tar}}-2$ dressing angles of the target, which exclude the bridge and anchor angles, whose class occurs in at least one source.
The remaining fraction $1-\mathcal C_r$ falls back to a role mean.
Coverage depends on which neighborhoods the sources contain, and it decreases with the window radius, because a larger window distinguishes more environments.
Figure~\ref{fig:transfer-example} illustrates this.
A short strip covers every longer strip, since both contain only end faces and middle faces (panel a).
A strip offers nothing to a compact patch, whose faces have neighbors in two directions (panel b).
A face that lies in the interior of a small patch can still see the boundary within its window, unlike the same face in a larger patch (panel c).
For the $3\times3$ target, all smaller patches together cover every angle at $r=0.5$, thanks to the symmetry identification, but almost none at $r\ge1.5$.

Full coverage, however, does not make Rule 1 accurate (App.~\ref{app:transfer-performance}), so coverage is necessary but not sufficient.
In particular, Rule 1 averages angles from patches that were optimized separately.
Many different angle vectors prepare the same state, so each optimization settles on its own convention, and averaging across patches mixes conventions that do not fit together.
Rule 2 avoids this by fixing one convention for all patches at once.

\figTransferExample

\textit{Rule 2 (joint class table).}
Rule 2 builds a lookup table with one angle per class and trains this single table on several small patches together.
The classes are those of Fig.~\ref{fig:transfer}(c) at the smallest window, $r=0.5$, except that we keep track of orientation, that is, of the direction in which the pair block points and of which neighbors lie on which side.
We do not merge a class with its rotated or mirrored copy, because in a sequential circuit these are different gates.
We train the table by minimizing the average energy per spin of all source patches, starting from four random tables and keeping the best, since the zero-field angles form a stationary point from which the optimizer does not move.
Training on all sources together forces each entry to work in every place where its class appears, rather than in one patch only.
The trained table then fills the target directly, without any further optimization.

\subsection{Performance of the transfer rules}
\label{app:transfer-performance}

We test both rules on targets that the exact evaluator can still handle, so that every transferred state can be compared with the best state found by continuation.
The results follow the picture of App.~\ref{app:transfer}.
Rule 1 does not carry the angles over, whereas Rule 2 transfers them accurately to strips, but not yet to compact patches.

\figTransferResults

\textit{Rule 1.}
Figure~\ref{fig:transfer-results} shows Rule 1 on targets of 14 to 48 spins.
Before any update at the target, the transferred angles lie about $0.8$ per spin above the ground energy, for every target, nonzero field, and window radius.
Copying each angle from the gate at the same relative position in a source, which we call the position copy, behaves the same way.
On an energy scale of about one per spin, this is close to a random state and further from the ground state than the zero-field angles reused unchanged, which stay within about $0.1$ per spin up to $h=0.25$ at every size up to $N=80$ (App.~\ref{app:zf-angles}).
Refining the transferred vector at the target recovers most of this gap, but it ends at or above continuation there and is no cheaper from $N=26$ on.
The diagnostics below show that the obstacle is not missing coverage but the mismatch of conventions anticipated in App.~\ref{app:transfer}.

\figTransferDiagnostics

\textit{Why Rule 1 does not transfer.}
The step from Eq.~\eqref{eq:qa-flow} to circuit angles is not a theorem, and the diagnostics show where it breaks (Fig.~\ref{fig:transfer-diagnostics}(a)).
On the $3\times3$ target at $h=0.1$ to $0.3$ the fully covered rules leave $0.53$ to $0.71$ per spin and the almost uncovered ones, which fall back to role means, $0.29$ to $0.38$.
The resultant length explains why.
Among directly optimized angles of one class the median resultant length is $0.81$, a circular spread of $0.65$~rad, and for one class in ten it lies below $0.51$, a spread above $1.16$~rad, see Fig.~\ref{fig:transfer-diagnostics}(b,c).
A class mean therefore shifts every gate by several tenths of a radian, enough to change a state built from four sweeps of such gates substantially.
Tying all gates of a class to one shared angle leaves a reduced circuit of $122$ free angles instead of $642$, and training it directly at $N=24$ leaves $0.10$ per spin at $h=0.25$ and $0.13$ at $h=0.3$, against $10^{-4}$ per spin for the free optimization at the same size and budget.
The tested optimizations did not find an accurate vector within these symmetrized classes at the available budget, which does not exclude more accurate vectors or better optimizers in the same parametrization, and the oriented classes of Rule 2 do reach good accuracy.
The other expectations behind Rule 1 are not borne out either.
Complete coverage does not improve the transferred energy ($0.87$ against $0.78$ per spin at partial coverage), the transfer is least accurate in the polarized limit ($1.43$ per spin) rather than in the transition window ($0.72$), and the trained angles are not smooth within classes.

\figTransferBenchmark

\textit{Rule 2.}
Rule 2 needs more than coverage.
Every table entry that the target uses must have been trained on at least two gates in the sources.
An entry trained on a single gate simply inherits the convention that one optimization happened to pick, which is the problem of Rule 1 again.
An entry shared by several gates, in patches of different sizes, has to work for all of them, and this pushes it toward the local dressing suggested by Eq.~\eqref{eq:qa-flow}.
Like coverage, the condition can be checked by counting gates, before any training.
Strips pass it easily, because middle faces appear in every strip source, so strip sources of up to four plaquettes suffice for every strip of five or more.

On strips, Rule 2 transfers accurately (Fig.~\ref{fig:transfer-benchmark}).
On the $1\times5$ strip at $h=0.25$, with shorter strips as sources, the transferred angles lie $1.2\times10^{-3}$ per spin above the reference without any update, at least an order of magnitude closer than direct optimization at the target, and only continuation in $h$ does better, at $2.6\times10^{-4}$.
The transfer needs a single energy evaluation at the target, after an eight-minute training of the table that serves every longer strip, whereas continuation takes about three hours per field sweep.
At $N=38$, beyond our continuation runs, the table still leaves $1.3\times10^{-3}$ per spin, against $0.115$ for direct optimization.
On every tested target, the transfer is accurate exactly where the condition above holds (Fig.~\ref{fig:transfer-benchmark}(b)), which we use as an empirical criterion rather than a guarantee.

Compact patches do not yet meet the condition, because each of our sources, of at most 34 spins, contains their interior faces only once, so for them transfer remains to be demonstrated.
Adding the $2\times4$ patch as a source would meet it for the $3\times3$ target, although it covers only 89\% of its angles, and we have not tested it.
The compact patches of the device runs, from $4\times4$ on, would need a source with two interior faces, the $3\times4$ patch with 62 spins, which is beyond our exact evaluator.

\section{Knitted estimators and seam entropy}
\label{app:cutting}

This appendix gives the knitted estimators of the energy and the contact correlation, the bridge-angle dependence of the energy, and the seam entropy of the prepared states.
The reconstruction uses the local-operation protocol and estimator framework of Refs.~\cite{MitaraiFujii2021VirtualGate,CarreraVazquez2024RealTime}.

\subsection{Energy and contact correlation}
\label{app:cut-energy}

A setting $(i,p,g)$ runs branch $i$ of Fig.~\ref{fig:experiment}(c), with weight $w_i\in\{c^2,s^2,d,-d,d,-d\}$, on patch $p$ with readout basis $g\in\{X,Z\}$.
For a Hamiltonian Pauli term $P_\nu$ of group $g$, let $\bar y^{(i)}_{\nu,p}$ be the mean over shots of the product of its measured spin values on patch $p$, multiplied by the recorded sign if branch $i$ measures on patch $p$.
Empty products equal one, so a term on one patch picks up the mean sign of the branch on the other.
The energy estimate is
\begin{equation}
  \widehat E
  =
  \sum_{i=1}^{6}
  w_i
  \sum_\nu
  J_\nu\,
  \bar y^{(i)}_{\nu,A}\,
  \bar y^{(i)}_{\nu,B},
  \label{eq:cut-estimator}
\end{equation}
which is unbiased for the ideal circuit because the two patches are sampled independently and every branch state is a product across the seam.
Each of the 24 settings first receives 32 shots, and the rest of $M=24{,}576$ is allocated in proportion to $|w_i|$ times the estimated standard deviation of the setting.
Error bars come from bootstrap resamples of whole bit strings within each setting.

\textit{Contact correlation.}
We reconstruct $\langle Z_aZ_b\rangle$, $\langle Z_a\rangle$, and $\langle Z_b\rangle$ from the all-$Z$ branch records using Eq.~\eqref{eq:cut-estimator}, with the Hamiltonian replaced by the corresponding observable.
We then form $C_{ab}=\langle Z_aZ_b\rangle-\langle Z_a\rangle\langle Z_b\rangle$, rather than combining branchwise connected correlations.
For each bootstrap replicate, the same resampled records are used for all three expectations before forming $C_{ab}$, preserving their statistical dependence.
The three expectations are unbiased for the ideal circuit, but the product $\langle Z_a\rangle\langle Z_b\rangle$ makes the finite-shot estimate of $C_{ab}$ nonlinear, so we do not claim it is unbiased.

Applied to this model, the sampling bound of Eq.~\eqref{eq:prelim-shots} gives a size-independent guarantee.
For an estimator that samples one branch with probability $|w_i|/\Gamma$ and one readout group in proportion to its Hamiltonian weight, each observation of $E/N$ is bounded by $\Gamma[1+h+(2(1-h)+t)/N]$, which does not grow with $N$.
Explicitly, the total weight of the Hamiltonian is
\begin{equation}
  W=\sum_\nu|J_\nu|=(1-h)(N+2)+2hN+t ,
  \label{eq:cut-weight}
\end{equation}
each observation of $E/N$ lies in $[-A_N,A_N]$ with $A_N=\Gamma W/N$, and Eq.~\eqref{eq:prelim-shots} gives
\begin{equation}
  \Pr\bigl[|\overline e-E/N|\ge\varepsilon\bigr]
  \le
  2\exp\Bigl[-\frac{M\varepsilon^2}{2A_N^2}\Bigr],
  \label{eq:cut-bound}
\end{equation}
with $A_N\approx1.3$ at $h=0$, several times the measured error constants of Fig.~\ref{fig:tier2}(a).

\subsection{Bridge angle}
\label{app:cut-angle}
\figLandscape

This subsection shows that the same measurement outcomes can be used to reconstruct the energy at any bridge angle, and asks whether the reconstruction is precise enough to locate the best bridge angle and thus resolve what the bridge contributes as the patches grow.

In the uncut circuit, the bridge angle $\phi$ must be fixed before the bridge gate is applied.
After cutting, the six branch circuits contain only the patch angles and fixed local operations, and $\phi$ enters only through the weights $w(\phi)=(c^2,s^2,d,-d,d,-d)$ of Eq.~\eqref{eq:cut-estimator}.
Running each branch circuit on both patches gives its unweighted contribution $E^{(i)}=\sum_\nu J_\nu\,\bar y^{(i)}_{\nu,A}\bar y^{(i)}_{\nu,B}$, including any recorded measurement signs.
Reweighting these six numbers gives the energy at any bridge angle, for fixed patch angles.
Since $c^2=(1+\cos\phi)/2$, $s^2=(1-\cos\phi)/2$, and $d=\frac12\sin\phi$,
\begin{equation}
  E(\phi)=\sum_{i=1}^{6}w_i(\phi)\,E^{(i)}=e_0+e_1\cos\phi+e_2\sin\phi ,
  \label{eq:cut-landscape}
\end{equation}
with $e_0=\frac12(E^{(1)}+E^{(2)})$, $e_1=\frac12(E^{(1)}-E^{(2)})$, and $e_2=\frac12(E^{(3)}-E^{(4)}+E^{(5)}-E^{(6)})$.
For nonzero amplitude, $e_1^2+e_2^2>0$, its minimum lies at
\begin{equation}
  \phi^\star=\operatorname{atan2}(-e_2,-e_1),
  \qquad
  E(\phi^\star)=e_0-\sqrt{e_1^2+e_2^2}.
  \label{eq:cut-optimum}
\end{equation}
If $e_1=e_2=0$, every angle minimizes $E$, and from sampled records we write $\widehat e_1$, $\widehat e_2$, and $\widehat\phi^\star$ for the estimates.

Once records covering all six branches are available, thus determining $e_1$ and $e_2$, we can therefore derive the best bridge angle for the chosen patch angles, without new angle-dependent circuits, to a precision set by the number of shots.
We test this precision by sampling the ideal branch circuits with the shot budget of the device runs.
At $N=34$, the reconstructed curves (solid) follow the exact (dashed) $E(\phi)$ of the same patch angles at every field $h$, but their minima are not a reliable guide to the best angle (Fig.~\ref{fig:landscape}(a)).
Minimizing a noisy curve favors angles at which the noise lowers the estimate, so the reconstructed minima lie up to two standard errors, about $0.01$ per spin, below the exact energy at the selected angle.
That exact energy stays within $4\times10^{-3}$ per spin of the best value along $\phi$, so the apparent gain from re-optimizing the bridge is statistical rather than real.

As the patches grow, the precision of the reconstruction degrades even though the effect of the bridge does not.
The zero-field circuit makes this visible, because its landscape is known exactly at every size.
With no dressing and $\alpha=\pi/2$, App.~\ref{app:endpoints} gives
\begin{equation}
  E(\phi)=-N+2-D_t\cos(\phi-\phi_0),
  \qquad
  D_t=\sqrt{16+t^2},\quad \phi_0=\arctan(t/4),
  \label{eq:cut-landscape-zero-field}
\end{equation}
so the offset $e_0=-N+2$ grows with the size, whereas the amplitude $D_t$ and the optimum $\phi_0$ do not.
The bridge itself lowers the energy by only $D_t-4\approx0.031$ relative to $\phi=0$, a small fraction of the oscillation amplitude $D_t\approx4.03$.
After subtracting $e_0$, the exact curves of all sizes coincide and share the optimum $\phi_0$ (Fig.~\ref{fig:landscape}(b,c)).
The reconstructed curves in Fig.~\ref{fig:landscape}(b) scatter around this common curve, with bands that widen as $N$ grows.
Likewise, the estimated minimizers $\phi^\star$ of Eq.~\eqref{eq:cut-optimum} scatter around $\phi_0$ more strongly at larger sizes (Fig.~\ref{fig:landscape}(c)).

Scanning the bridge angle is therefore free, but it cannot improve the bridge, already at $N=34$ and increasingly at larger sizes, due to statistical fluctuations in the angle-dependent coefficients $\widehat e_1$ and $\widehat e_2$.
The bridge changes the energy by an amount that does not grow with the size, and which the extensive noise of the full energy hides.
This is why Sec.~\ref{subsec:app-knit} measures the contact correlation, which isolates the interface, rather than relying on the energy.

\subsection{Seam entropy}
\label{app:cut-seam}

For a single severed gate, the seam entropy follows exactly from the states of the two patches just before the bridge.

\begin{proposition}[Seam entropy of a single bridge]
\label{prop:single-bridge}
Let $|u_A\rangle|u_B\rangle$ be a product state, let
$z_a=\langle u_A|Z_a|u_A\rangle$ and
$z_b=\langle u_B|Z_b|u_B\rangle$, and let
\begin{equation}
  |\psi\rangle
  =
  (W_A\otimes W_B)
  R_{ZZ}^{ab}(\phi)
  |u_A\rangle|u_B\rangle
\end{equation}
for local unitaries $W_A$ and $W_B$. Then the at most two nonzero eigenvalues of
$\rho_A$ satisfy
\begin{equation}
  \lambda_++\lambda_-=1,
  \qquad
  \lambda_+\lambda_-
  =
  c^2s^2(1-z_a^2)(1-z_b^2),
  \label{eq:cut-schmidt}
\end{equation}
and
\begin{equation}
  \Gsep(\psi)
  =
  2\bigl(\Tr\sqrt{\rho_A}\bigr)^2-1
  =
  1+2|\sin\phi|\sqrt{(1-z_a^2)(1-z_b^2)}
  \le
  \Gamma(\phi).
  \label{eq:cut-state-floor}
\end{equation}
Equivalently, the seam entropy is $\Ehalf=2\log_2\Tr\sqrt{\rho_A}$.
\end{proposition}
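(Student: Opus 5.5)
Since $W_A\otimes W_B$ is a local unitary, it does not change the spectrum of $\rho_A$, so I will drop it and work with $|\psi'\rangle=R_{ZZ}^{ab}(\phi)|u_A\rangle|u_B\rangle$. Using $R_{ZZ}^{ab}(\phi)=c\,\openone-is\,Z_aZ_b$, this state is
\[
  |\psi'\rangle=c\,|u_A\rangle|u_B\rangle-is\,|v_A\rangle|v_B\rangle,
  \qquad |v_A\rangle=Z_a|u_A\rangle,\quad |v_B\rangle=Z_b|u_B\rangle .
\]
It is a sum of two product terms, so its Schmidt rank is at most two and $\rho_A$ has at most two nonzero eigenvalues. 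Normalization of $|\psi'\rangle$ (unitarity) gives $\lambda_++\lambda_-=1$. It then remains to compute $\lambda_+\lambda_-=\det\rho_A$ restricted to $\operatorname{span}\{|u_A\rangle,|v_A\rangle\}$.

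For the product, I will use the standard identity for a two-term state $|\psi\rangle=\sum_{i=1}^2|a_i\rangle|b_i\rangle$ with unnormalized vectors: $\lambda_+\lambda_-=\det(G_A)\det(G_B)$. Here $G_A$ and $G_B$ are the Gram matrices $(\langle a_i|a_j\rangle)$ and $(\langle b_i|b_j\rangle)$. This follows from $\rho_A=\sum_{ij}\langle b_j|b_i\rangle|a_i\rangle\langle a_j|$: its nonzero spectrum equals that of $G_A^{1/2}G_B^{T}G_A^{1/2}$, whose determinant is the product of the two determinants. With $|a_1\rangle=c|u_A\rangle$, $|a_2\rangle=-is|v_A\rangle$, $|b_1\rangle=|u_B\rangle$, $|b_2\rangle=|v_B\rangle$, the only inputs are the overlaps $\langle u_A|v_A\rangle=z_a$ and $\langle u_B|v_B\rangle=z_b$, both real because $Z_a$ is Hermitian. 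This gives
\[
  \det G_A=c^2s^2(1-z_a^2),\qquad \det G_B=1-z_b^2,
\]
which is Eq.~\eqref{eq:cut-schmidt}. The only step needing care is the phase $-is$: its modulus squared factors out of the determinant, so it drops out.

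For Eq.~\eqref{eq:cut-state-floor}, I apply Theorem~\ref{thm:exact-gsep}. It requires $(\sqrt{\lambda_+}+\sqrt{\lambda_-})^2=1+2\sqrt{\lambda_+\lambda_-}$, and with $2cs=\sin\phi$ this gives $\Gsep=1+2|\sin\phi|\sqrt{(1-z_a^2)(1-z_b^2)}$. Since $|z_a|,|z_b|\le1$, the square root is at most one, and comparison with Eq.~\eqref{eq:app-gamma} gives $\Gsep\le\Gamma(\phi)$. This bound is also consistent with Lemma~\ref{lem:hierarchy}. Finally, the expression $\Ehalf=2\log_2\Tr\sqrt{\rho_A}$ is just Eq.~\eqref{eq:seam-entropy}, and it matches Eq.~\eqref{eq:app-seam}.

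I expect no real obstacle in this argument. The only point to state carefully is the Gram-determinant identity, which could instead be checked by direct $2\times2$ computation in an orthonormal basis obtained by orthogonalizing $|v_A\rangle$ against $|u_A\rangle$.
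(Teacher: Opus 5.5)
Your proof is correct and follows the same route as the paper: you drop the local unitaries, obtain $\lambda_++\lambda_-=1$ from normalization and $\lambda_+\lambda_-$ as a determinant, and then apply the Vidal--Tarrach formula of Theorem~\ref{thm:exact-gsep}. Your Gram-determinant identity is the basis-free form of the paper's $|\det C|^2$ computation in the orthonormalized bases $\{|u_p\rangle,|u_p^\perp\rangle\}$, since $|\det C|^2=\det G_A\det G_B$, and it has the small advantage of covering the degenerate case $|z_p|=1$ directly instead of through the paper's continuity remark.
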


\begin{proof}
Local unitaries $W_A$ and $W_B$ do not change the spectrum of $\rho_A$, so
we set them to the identity. For each patch,
\begin{equation}
  Z_{b_p}|u_p\rangle
  =
  z_p|u_p\rangle
  +
  \sqrt{1-z_p^2}\,|u_p^\perp\rangle,
\end{equation}
where $|u_p^\perp\rangle$ is normalized and orthogonal to $|u_p\rangle$.
In the orthonormal product basis built from $|u_p\rangle$ and
$|u_p^\perp\rangle$, the coefficient matrix of $|\psi\rangle$ is
\begin{equation}
  C=
  \begin{pmatrix}
    c-isz_az_b & -isz_a\sqrt{1-z_b^2}\\
    -isz_b\sqrt{1-z_a^2}
    &
    -is\sqrt{(1-z_a^2)(1-z_b^2)}
  \end{pmatrix},
\end{equation}
with determinant
\begin{equation}
  \det C
  =
  -ics\sqrt{(1-z_a^2)(1-z_b^2)} .
\end{equation}
The nonzero eigenvalues of $\rho_A=CC^\dagger$ have sum
$\Tr CC^\dagger=1$ and product $|\det C|^2$, giving
Eq.~\eqref{eq:cut-schmidt}. Since
\begin{equation}
  \bigl(\sqrt{\lambda_+}+\sqrt{\lambda_-}\bigr)^2
  =
  1+2\sqrt{\lambda_+\lambda_-}
\end{equation}
and $2|cs|=|\sin\phi|$, Eq.~\eqref{eq:cut-state-floor} follows from the
pure-state formula
$\Gsep(\psi)=2(\Tr\sqrt{\rho_A})^2-1$~\cite{VidalTarrach1999}.
The result extends continuously when one of the two local spans becomes
one-dimensional.
\end{proof}

In the simulations, $z_a$ and $z_b$ are estimated by sampling prefix circuits, which run the encoder and the prefix sweeps of one patch and measure its bridge spin.
For the ground state, any state of Schmidt rank two across the seam satisfies
\begin{equation}
  |\langle\Omega|\psi\rangle|^2
  \le
  \lambda^{(0)}_1+\lambda^{(0)}_2 ,
  \label{eq:cut-rank-two-fidelity}
\end{equation}
where $\lambda^{(0)}_1\ge\lambda^{(0)}_2$ are the two largest Schmidt weights of the exact ground state $|\Omega\rangle$, because such a state occupies a two-dimensional subspace of patch $A$.
For $N\le24$, the weight beyond these two stays below $10^{-6}$ at every field, so a single bridge does not limit the fidelity.

\section{Gradient-variance diagnostic}
\label{app:variance}

The trainability guarantee for FLDCs~\cite{ZhangLiuZhang2024FLDC} assumes a circuit of local blocks with bounded support $\beta$, each spin acted on by a bounded number $\chi$ of blocks, and every block drawn independently from a local 2-design.
For a cost made of local terms, it then bounds the gradient variance from below independently of the system size.
However, the gradient variance of a single local observable still decays exponentially with its causal distance $\Delta k$ to the parameter, the number of blocks on the shortest chain that carries the effect of the parameter forward to the observable (Fig.~\ref{fig:variance-locality}(a)), but not with the total size.
Our circuit has bounded support, $\beta=2$, bounded local depth (Sec.~\ref{subsec:app-model}), and a cost made of local terms, but its pair blocks, built from fixed rotations with independent angles, are not drawn from a 2-design.
This appendix therefore checks the two consequences directly, that a single observable in our circuit follows the same trend, and additionally confirms that the total energy, the cost we actually train, does not decay with $\Delta k$ at all.

\textit{Setup.}
We use strips with $N=8$ to 38 and compact patches with $N=24$ to 48.
For each random draw of the angles, the exact evaluator of Sec.~\ref{subsec:app-sim} returns the full gradient in one reverse-mode pass.
The number of draws falls from 2{,}000 at $N\le20$ to 64 at $N=48$, because each draw requires one exact gradient, whose cost grows exponentially with the patch size.
The variance of each partial derivative is taken over the draws.
All runs use $h=0.25$ and $t=0.5$, and we consider the total energy $E$ rather than $E/N$, as in the main text, to align with Ref.~\cite{ZhangLiuZhang2024FLDC}.
We draw every angle independently and uniformly from $[0,2\pi)$, the natural random initialization of our circuit.
We additionally surround each probed $R_{ZZ}$ rotation with Haar-random two-qubit unitaries, to mimic the FLDC local 2-design assumption.
The variance then falls faster, from $7.6\times10^{-2}$ at $N=8$ to $1.2\times10^{-3}$ at $N=26$, indicating that the weak size dependence we report is therefore specific to the natural initialization of our circuit, not a property of random blocks in general.

\figVariance

\textit{A single observable.}
We first take the weight-two star $A_{\mathrm{far}}$ of patch $A$ farthest from the contact.
Its causal distance grows by three with every added plaquette, from 10 at $N=8$ to 25 at $N=38$.
At fixed $\Delta k$, its gradient variance does not depend on the length of the strip, once the strip is longer than one plaquette (Fig.~\ref{fig:variance-locality}(b)).
Along $\Delta k$, it falls by a factor of about $0.28$ per block at $N=32$ (Fig.~\ref{fig:variance-locality}(c)), which is the same behavior found for FLDCs in Ref.~\cite{ZhangLiuZhang2024FLDC}.

\textit{The total energy.}
The energy gradient is a sum over local terms, $\partial_\mu E=\sum_a\partial_\mu\langle A_a\rangle$, and every parameter has some term within eight blocks on strips and nine on compact patches.
Unlike that of the single star, its variance shows no decay with $\Delta k$ (Fig.~\ref{fig:variance-locality}(c)).
We attribute this to the nearby terms dominating the sum, so that the energy does not inherit the long-distance decay of a single far observable.

\tableVariancePosition 

\textit{Size dependence.}
The average energy-gradient variance decreases with size, but not exponentially, because the decrease comes from the open ends of the patches, which have limited impact at larger sizes.
We locate each gate by its distance, in plaquettes, from the closer end within a strip.
Table~\ref{tab:variance-position} lists the variance averaged over the gates at each such distance, with one column per strip length.
Along a row, the variance of a gate depends on its position and not on the length of the strip, assuming a long enough strip to contain that distance.
Down a column, it falls from about $1.8\times10^{-2}$ at the ends to about $2.7\times10^{-3}$ two and a half plaquettes inside.
A longer strip therefore only adds gates with this bulk value, and its average levels off instead of decaying exponentially.
Compact patches, however, are bounded on all sides.
Their variance also falls away from the boundary, but it has not reached a bulk value at the sizes we can simulate, so whether it levels off remains open (Fig.~\ref{fig:tier2}(b)).
For contrast, a control circuit that is identical except that its number of sweeps grows with the linear size of the patch has local depth 13 at $N=8$ and 82 at $N=32$.
On strips from $N=8$ to 32, a fit of $\ln\operatorname{Var}=c-rN$ gives $r=0.233\pm0.004$ per spin, so its variance falls by about a fifth with every added spin.

\section{Hardware circuits and settings}
\label{app:hardware}

This appendix gives the details of the device runs that are not part of the knitting protocol, namely the circuit sizes and trained angles, the compilation and error suppression, and the empirical correction calibrated at $h=0$ together with the depth dependence of the signal it measures.
 
\subsection{Circuit coordinates and trained angles}
\label{app:optimization}
 
The angles of a circuit are stored as one vector, ordered as the prefix sweeps of patches $A$ and $B$, the bridge angle $\phi$ and the anchor angle $\alpha$, and the suffix sweeps of patches $A$ and $B$.
With $s_{\mathrm{pre}}$ prefix and $s_{\mathrm{post}}$ suffix sweeps, its length is
\begin{equation}
  N_{\mathrm p}=40\,n_xn_y\,(s_{\mathrm{pre}}+s_{\mathrm{post}})+2 ,
  \label{eq:app-params}
\end{equation}
and we use $s_{\mathrm{pre}}=s_{\mathrm{post}}=2$.
Table~\ref{tab:app-grid} lists every patch geometry, with the two-qubit gate count and depth of the larger subcircuit, maximized over its branch circuits.

\tableGrid

The circuit structure is the same at every field, so we train by continuation in $h$.
Starting from the exact zero-field angles and from an optimized $h=1$ state, L-BFGS with exact gradients walks the field grid, starting each field from the previous one and keeping the lower energy.
Up to $N=34$, with three to five seeds, the best vector lies within $2\times10^{-3}$ per spin of the reference at every field.
At $N=48$, two-hour runs allow only about 110 gradient evaluations for 1{,}442 parameters and barely improve on the zero-field angles.
The first hardware series of Sec.~\ref{subsec:app-knit} uses the best trained vectors.
 
\subsection{Compilation and error suppression}
\label{app:device}
 
All device data are taken on \texttt{ibm\_phoenix}.
Every branch circuit is compiled with Qiskit~\cite{JavadiAbhari2024Qiskit} against the current backend target, using the SWAP-based heuristic search SABRE~\cite{li2019sabre} at optimization level~3.
The circuits of each size are transpiled once with all rotation angles left as parameters, so the angles enter only as virtual $Z$ rotations and the physical circuit is the same at every field.
Idle qubits are protected by dynamical decoupling with pairs of $X$ pulses of opposite phase~\cite{Viola1999DD}, and every two-qubit gate is Pauli twirled~\cite{Wallman2016RC}.
 
Every shot is post-selected with the inversion test of Ref.~\cite[App.~H]{DragoiEgger2026Transpilation}, implemented with \texttt{qiskit-addon-utils}~\cite{Rossmannek2024QiskitAddonUtils}.
After the final measurement, each checked qubit is flipped by an $X$ pulse and measured again, and a shot is kept only if the two outcomes disagree on every checked qubit, which removes shots with readout or relaxation errors at the end of the circuit.
The checked qubits are those of the subcircuit and the idle device qubits next to them.
Increased error rates due to larger sizes and more checks make the median kept fraction of shots fall from $0.83$ at $N=8$ to $0.04$ at $N=142$.
Fewer kept outcomes therefore enter each estimate at larger sizes, leading to increased statistical errors.
 
Each point of a sweep uses $M=24{,}576$ shots for the 24 branch settings and $2{,}048$ shots for one prefix circuit per patch.
The sixteen series shown, about $3\times10^6$ shots in total, used 317 seconds of charged processor time, about twice the execution time because of twirling and per-circuit overheads.

\subsection{Signal loss and its depth dependence}
\label{app:depth}

The dominant effect of the device noise is a damping of expectation values toward zero.
At $h=0$, every submitted angle vector prepares the exact zero-field state, whose energy $E_{\mathrm{exact}}(0)$ is known (App.~\ref{app:endpoints}), so the fraction of the signal that a series $s$ retains is
\begin{equation}
  \lambda_s=\frac{E_s(0)}{E_{\mathrm{exact}}(0)} .
\end{equation}
A series is characterized by one size, one setting of the error suppression, and one choice of angles.
Figure~\ref{fig:signal} shows the retained signal of every series.
It falls from $0.82$ at $N=8$ to $0.53$ at $N=34$ for the trained angles (blue), and from $0.44$ at $N=48$ to $0.039$ at $N=142$ for the zero-field angles (black), with the sharpest drop between the $3\times4$ and $4\times4$ patches.
Two additional series (purple) with transferred and refined angles at $N=48$ and 62 retain $0.38$ and $0.35$, like the zero-field circuits of the same sizes.

\figSignalFraction

The retained signal follows the routed two-qubit depth $D$ of the larger subcircuit (Table~\ref{tab:app-grid}) consistent with decoherence of idle qubits during the sequential sweeps.
A weighted fit of $-\ln\lambda_s$ through the origin, over the fifteen series with $\lambda_s>0.02$, gives
\begin{equation}
  \lambda_s\simeq e^{-cD_s},
  \qquad
  c=(0.61\pm0.03)\%\ \text{per two-qubit layer},
\end{equation}
so every layer of two-qubit gates removes about $0.6\%$ of the signal.

\figRescaling
\subsection{Correction calibrated at zero field}
\label{app:rescaling}

Since the $h=0$ circuit has a known ideal energy, it can serve as a calibration circuit for the other fields of the same series.
Dividing the energies of a series $s$ by its retained signal $\lambda_s$ measured at $h=0$ (App.~\ref{app:depth}) gives corrected energies $\widehat E_s(h)=E_s(h)/\lambda_s$, with the error of $\lambda_s$ propagated, where we omit series with $\lambda_s<0.005$.
This rescaling intuition rests on two facts.
The physical circuit is the same at every field (App.~\ref{app:device}), and twirling makes the gate noise a Pauli channel on average.
If that channel acted only after the last gate, each energy term would be damped by a factor independent of the angles, and the factor measured at $h=0$ would correct the whole series.
In practice, errors occur throughout the circuit and remain Pauli errors under propagation only in Clifford circuits, as at $h=0$.
The calibration also uses only star and plaquette terms, since the field terms vanish at $h=0$, whereas single-spin terms are usually damped less.
The rescaling is therefore a one-parameter approximation, not an error-mitigation guarantee.
It cannot remove preparation error, and it multiplies statistical errors by $1/\lambda_s$.

For the trained angles at $N=8$ to 34, the ideal energy is close to the ground energy, so the correction can be tested directly (Fig.~\ref{fig:rescaling}(a,b)).
It removes about 86\% of the raw gap on average and brings the energies within $0.13$ per spin of the reference for $0<h<1$.
At $h=1$, where single-spin terms dominate, it overcompensates by about $0.15$ per spin, below the reference.
For the zero-field angles, the corrected energies should reproduce Eq.~\eqref{eq:ep-zf} at every field (Fig.~\ref{fig:rescaling}(c)).
They do so within $2.3$ standard errors up to $N=120$, deviate by up to four at $N=142$.

\end{document}